\newtheorem{theorem}{Theorem}[section]
\newtheorem{lemma}[theorem]{Lemma}
\newtheorem{corollary}[theorem]{Corollary}
\newtheorem{claim}{Claim}
\newtheorem{proposition}[theorem]{Proposition}
\newtheorem{definition}[theorem]{Definition}
\newtheorem{observation}[theorem]{Observation}
\newcommand{\canonDRIP}{\ensuremath{\mathcal{D}_G}\xspace}
\newcommand{\arbDRIP}{D}
\newcommand{\canonHist}[2]{\mathcal{H}_{#1,\canonDRIP}[0 \ldots #2]}
\newcommand{\DHist}[2]{\mathcal{H}_{#1,\arbDRIP}[0 \ldots #2]}
\newcommand{\N}{\mathbb{N}}
\newcommand{\cH}{{\cal H}}
\newcommand{\PARTITION}{\texttt{Refine}\xspace}
\newcommand{\classifier}{\texttt{Classifier}\xspace}
\newcommand{\initgaug}{\texttt{Init-Aug}\xspace}
\newcommand{\partitioner}{\texttt{Partitioner}\xspace}
\newcommand{\CLASS}{\textrm{CLASS}\xspace}
\newcommand{\TAG}{\textrm{LBL}\xspace}
\begin{document}
	\def\thefootnote{\fnsymbol{footnote}}
	
	\title{{\bf Deterministic Leader Election\\ in Anonymous Radio Networks}}

	\author{
		Avery Miller\footnotemark[1]
		\and Andrzej Pelc\footnotemark[2]
		\and Ram Narayan Yadav\footnotemark[3]
	}
	
	\footnotetext[1]{Department of Computer Science, University of Manitoba, Winnipeg, Manitoba, R3T 2N2, Canada. {\tt avery.miller@umanitoba.ca}. Supported by NSERC Discovery Grant RGPIN--2017--05936.}
	\footnotetext[2]{D\'epartement d'informatique, Universit\'e du Qu\'ebec en Outaouais, Gatineau,
		Qu\'ebec J8X 3X7, Canada. {\tt pelc@uqo.ca}. Partially supported by NSERC Discovery Grant RGPIN--2018--03899
		and by the Research Chair in Distributed Computing at the
		Universit\'e du Qu\'ebec en Outaouais.}
	\footnotetext[3]{Department of Computer Science and Engineering, Institute of Infrastructure Technology Research and Management (IITRAM), Gujarat, India. {\tt narayanram.1988@gmail.com}}
	
	\maketitle
	
	\thispagestyle{empty}

\begin{abstract}
Leader election is a fundamental task in distributed computing. It is a symmetry breaking problem, calling for one node of the network to become the {\em leader}, and for all other nodes to become {\em non-leaders}. We consider leader election in anonymous radio networks modeled as simple undirected connected graphs. Nodes communicate in synchronous rounds. In each round, a node can either transmit a message to all its neighbours, or stay silent and listen. A node $v$ hears a message from a neighbour $w$ in a given round if $v$ listens in this round and if $w$ is its only neighbour transmitting in this round. If $v$ listens in a round in which more than one neighbour transmits then $v$ hears noise that is different from any message and different from silence.

We assume that nodes are identical (anonymous) and execute the same deterministic algorithm. Under this scenario, symmetry can be broken only in one way: by different wake-up times of the nodes. In which situations is it possible to break symmetry and elect a leader using time as symmetry breaker? In order to answer this question, we consider {\em configurations}. A configuration is the underlying graph with nodes tagged by non-negative integers with the following meaning. A node can either wake up spontaneously in the round shown on its tag, according to some global clock, or can be woken up hearing a message sent by one of its already awoken neighbours. The local clock of a node starts at its wakeup and nodes do not have access to the global clock determining their tags. A configuration is {\em feasible} if there exists a distributed algorithm that elects a leader for this configuration.

Our main result is a complete algorithmic characterization of feasible configurations. More precisely, we design a centralized decision algorithm, working in polynomial time, whose input is a configuration and which decides if the configuration is feasible. Using this algorithm we also provide a dedicated deterministic distributed leader election algorithm for each feasible configuration that elects a leader for this configuration in time $O(n^2\sigma)$, where $n$ is the number of nodes and $\sigma$ is the difference between the largest and smallest tag of the configuration. We then ask the question if there exists a universal  deterministic distributed algorithm electing a leader for all feasible configurations. The answer turns out to be no, and we show that such a universal algorithm cannot exist even for the class of 4-node feasible configurations. We also prove that a distributed version of our decision algorithm
cannot exist.

\vspace*{0.5cm}

\noindent
{\bf keywords:}  leader election, anonymous radio network, graph, algorithm

\end{abstract}

\pagebreak

\section{Introduction}

\subsection{The model and the problem}

Leader election is a fundamental distributed task involving symmetry breaking: initially all nodes of a network have the same
status {\em non-leader} and the goal is for all nodes but one to keep this status 
and for the remaining single node to get
the status {\em leader}.
The problem of leader election was first formulated in \cite{LL} in the study of local area token ring networks, where, at all times, exactly one node (the owner of a circulating token) is allowed to initiate
communication. When the token is accidentally lost, a leader must be elected as the initial owner of the token. 

 We consider the task of leader election in radio networks, modeled as simple undirected connected graphs. 
 A node can either wake up spontaneously, or can be woken up hearing a message sent by one of its already awaken neighbours.
Nodes communicate in synchronous rounds. In each round, a node can either transmit a message to all its neighbours, or stay silent and listen. At the receiving end, a node $v$ hears a message from a neighbour $w$ in a given round if $v$ listens in this round and if $w$ is its only neighbour that transmits in this round.
If more than one neighbour of a node $v$ transmits in a given round, we say that a {\em collision} occurs at $v$. We make the well-established and practically motivated assumption of the capability of {\em collision detection} (cf., e.g., \cite{KP,TM,Wil}): if a node $v$ listens and a collision occurs at $v$, then $v$ hears noise that is different from any message and also different from silence. Finally, a node that transmits in a given round does not hear anything.

We assume that nodes are anonymous (identical) and execute the same deterministic algorithm. 
Under this very weak scenario, symmetry can be broken only in one way: by different wake-up times of the nodes. Indeed, if all nodes wake up in the same round, in each subsequent round they will either all transmit or all listen, and no message will be ever heard.
In which situations is it possible to break symmetry and elect a leader using wake-up time as symmetry breaker? In order to answer this question, we consider {\em configurations}. A configuration is defined as the graph underlying the radio network with nodes tagged by non-negative integers indicating the round of spontaneous wakeup of the node, according to some global clock. Hence a node either wakes up spontaneously in the round indicated by its wakeup tag, or wakes up in an earlier round, if it hears a message in this round. 
The local clock of a node has value 0 in its wakeup round, and a node starts executing its algorithm in local round 1. Nodes do not have access to the global clock determining their tags. A configuration is {\em feasible} if there exists a deterministic distributed algorithm that elects a leader for this configuration. 

Our research is motivated by the following question: Which are the feasible configurations and does there exist a universal deterministic algorithm electing a leader in all of them?

Deterministic leader election in anonymous networks is a difficult task, even in the model of wired message passing networks because nodes do not have distinct labels permitting us to immediately break symmetry between them.
However, in wired networks, where distinct port numbers are available at each node, enabling it, for example, to learn its degree, and, on the other hand, messages are guaranteed to arrive at neighbours, regardless of time rounds in which nodes transmit,  leader election can be based on the topological structure of the network:  nodes  can relay their neighbourhoods of increasing radii, learning in this way asymmetries of the network topology, which can eventually serve leader election. In this case, differences of wake-up times do not have to be exploited, and symmetry breaking can be done exclusively on the basis of graph structure considerations. In contrast, in anonymous radio networks, differences of wake up times must be involved because otherwise, as mentioned above, no communication between nodes can be achieved. Hence, in a sense, the scenario of anonymous radio networks is the most adverse scenario for symmetry breaking, and our research can be seen as investigating if and when symmetry breaking is at all possible in this extreme case.

\subsection{Our results}

Our main result is a complete algorithmic characterization of feasible configurations. More precisely, we design a centralized decision algorithm whose input is a configuration and which decides if the configuration is feasible. Our algorithm works in time $O(n^3 \Delta)$, where $n$ is the number of nodes and $\Delta$ is the maximum degree. Using this algorithm we also provide a dedicated distributed leader election algorithm for each feasible configuration that elects a leader for this configuration in time $O(n^2\sigma)$, where $\sigma$ is the difference between the largest and smallest tag of the configuration. On the negative side we prove that time complexity $o(\sigma +n)$ cannot be achieved for  some configurations. We then ask the question if there exists a universal  deterministic distributed algorithm electing a leader for all feasible configurations. The answer turns out to be no, and we show that such a universal algorithm cannot exist even for the class of 4-node feasible configurations. We also prove that a distributed version of our decision algorithm
(i.e., a deterministic distributed algorithm which, when run on any feasible configuration, would make all nodes say ``yes'', and when run on any unfeasible configuration would make some node say ``no'') 
cannot exist. 


\subsection{Related work}

{\bf Leader election in labeled networks.}
Leader election is a classic topic in distributed computing, and has been
widely studied in the early history of this domain (cf. \cite{Ly}). 
The problem of leader election was first mentioned in \cite{LL}. 
Early papers on leader election focused on the scenario 
where all nodes have distinct labels. Initially, it was investigated for rings in the message passing model.
A synchronous algorithm based on label comparisons was given in \cite{HS}. It used 
$O(n \log n)$ messages.  In \cite{FL} it was proved that
this complexity cannot be improved for comparison-based algorithms. On the other hand, the authors showed
a leader election algorithm using only a linear number of messages but requiring very large running time.
An asynchronous algorithm using $O(n \log n)$ messages was given, e.g., in \cite{P}, and
the optimality of this message complexity was shown in \cite{B}. 
In \cite{CMRZ}, the authors investigated the time of leader election in point-to-point networks whose nodes have logarithmic labels, establishing optimal election time
under the assumption that messages are of constant size. 
Leader election was also investigated in the radio communication model,
both in the deterministic \cite{JKZ,KP} and in the randomized \cite{Wil} scenarios.
 In \cite{DB,IRSVWW}, the task of leader election was studied in the context of dynamic networks.

\noindent
{\bf Leader election in anonymous networks.}
Many authors \cite{An,AtSn,ASW,BSVCGS,BV,YK2,YK3} studied leader election
in anonymous networks. In particular, \cite{BSVCGS,YK3} characterized message-passing networks in which
leader election is feasible. In \cite{YK2}, the authors studied
the problem of leader election in general networks, under the assumption that node labels exist but are
not unique. They characterized networks in which leader election can be performed and gave an algorithm
which achieves election when it is feasible. 
In  \cite{DoPe,FKKLS},  the authors
studied message complexity of leader election in rings with possibly
nonunique labels. 
Memory needed for leader election in unlabeled networks was studied in \cite{FP}. 
In \cite{DP1}, the authors investigated the feasibility of leader election among anonymous agents that
navigate in a network in an asynchronous way. In \cite{GMP} leader election was studied in the context of the size of advice needed to accomplish it in a given time. Other computing tasks in anonymous networks were considered, e.g., in \cite{BSVCGS,BV,DP,YK3}.

\noindent
{\bf Leader election in radio networks.}
Algorithmic problems in radio networks modeled as graphs were studied for such tasks as broadcasting \cite{CGR,GPX}, gossiping \cite{CGR} and leader election
\cite{KP}. In some cases \cite{CGR,CzD}, the topology of the network was unknown, in others \cite{CGOR,EK3,GM,GPX}, nodes were assumed to have a labeled map of the network and could situate themselves in it.

Most of the results on leader election in the radio model concern {\em single-hop} 
networks of known size $n$.
Some of these results were originally obtained for other distributed problems but have corollaries for leader election.
For the time of deterministic leader election without collision detection, 
the complexity
$O(n\log n)$ follows from \cite{CMS}. 
A constructive upper bound $O(n\cdot polylog(n))$ follows from \cite{Ind}.
For the time of deterministic algorithms with collision detection, 
matching bounds are also known:
$\Omega(\log n)$ follows from~\cite{GW}, and $O(\log n)$ follows from \cite{Cap,Hay,TM}.
For the expected time of randomized algorithms without collision detection, 
the same matching bounds are known:
$\Omega(\log n)$ follows from \cite{KM} and $O(\log n)$ from~\cite{BGI}.
Finally, randomized leader election with collision detection can be done faster:
matching bounds $\Omega(\log\log n)$ (for fair protocols) and $O(\log\log n)$
on the expected time
were proved in~\cite{Wil}.

For leader election in arbitrary radio networks results are less complete. 
Deterministic algorithms without collision detection were proposed in \cite{CKP,CzD}: the algorithm from \cite{CKP} works in time $O(n\log ^{3/2}n \sqrt{\log\log n})$
and the algorithm from \cite{CzD} works in time $O(n \log n \log D \log \log D)$.
In \cite{KP} it was shown how to elect a leader in arbitrary radio networks in time $O(n)$, if collision detection is assumed.
In \cite {CzD2} the authors gave a randomized leader election algorithm, without collision detection, working in time $O(D \log n / \log D + polylog (n))$ with high probability.

In all the above papers, results concerning deterministic leader election in radio networks assumed that nodes have distinct labels.
To the best of our knowledge, 
no results are published for deterministic
leader election in anonymous radio networks.

\section{Terminology and Notation}
\subsection{Configuration}
	
	A \emph{configuration} is an undirected graph where each node $v$ is tagged with a non-negative integer $t_v$.  A configuration represents a radio network in which each node $v$ wakes up in a global round $r \leq t_v$ if $v$ receives a message in global round $r$ (called a forced wakeup), or in global round $t_v$ otherwise (called a spontaneous wakeup). For a configuration $G$, the number of nodes of $G$ is called the {\em size} of $G$ and is denoted by $n$, and the difference between the largest and smallest wakeup tag is called the {\em span} of $G$ and is denoted by $\sigma$. Since nodes do not have access to the global clock determining the wakeup tags, we can assume without loss of generality that the smallest wakeup tag is 0, and hence the span is equal to the largest wakeup tag.
	
	\subsection{Distributed Radio Interaction Protocol (DRIP)}
	We now formally define the notion of a distributed communication protocol being executed by each node in a configuration $G$.  In each round $i \geq 1$ on its local clock, each node $v$ decides whether it will listen, transmit a message, or terminate its execution. This decision in each round depends on all of the knowledge the node knows so far, which we now define formally.
	

For each $i \geq 0$, the \emph{history of node $v$ in round $i$}, denoted by $\cH_v[i]$, is defined to be: 
	\begin{itemize}
		\item $(\emptyset)$ if $v$ transmits in local round $i$, or, listens and receives no message in local round $i$, 
		\item $(M)$ if $v$ listens in local round $i$ and receives message $M$,  or, if $i=0$ and $v$ was woken up by message $M$,
		\item $(*)$ if $v$ listens in local round $i$ and a collision occurs at $v$.
	\end{itemize}
	This definition indicates that a node can distinguish if it wakes up spontaneously or by a message of a neighbour, and in the latter case it records the wakeup message in its history. 
	Consider an arbitrary function $D$ that takes as input a node's history vector up to some round, and outputs one of the following strings: $listen$, $transmit(M)$ for some string $M$, or $terminate$. A \emph{distributed radio interaction protocol} (DRIP) is defined as such a function $D$ in the following way: each node $v$, in each round $i \geq 1$ on its local clock, computes $D(\cH_v[0\ldots i-1])$ and performs the action described by the value of $D$. We require that each node eventually terminates permanently, i.e., for each node $v$, there exists an $i \geq 1$ such that $D(\cH_v[0\ldots i-1]) = terminate$, and $D(\cH_v[0\ldots i'-1]) = terminate$ for all $i' \geq i$. 	In the execution of any DRIP $D$, for any given $i \geq 0$, we denote the history of a node $v$ up to local round $i$ by $\DHist{v}{i}$.

	A \emph{patient DRIP} is a DRIP such that no node transmits in global rounds $0,\ldots,\sigma$. Since all wakeup tags are in this range, it follows that, when executing a patient DRIP, all nodes wake up spontaneously in the global round equal to their wakeup tag. Therefore, we have a reliable way of converting between local clock values and the global round number, as provided in the following result.
	
	\begin{proposition}\label{prop:localconversion}
		For any patient DRIP $D$, any two nodes $v,w$ executing $D$, and any $i \geq 0$, local round $i$ at node $v$ occurs in the same global round as local round $i - (t_w - t_v)$ at node $w$.
	\end{proposition}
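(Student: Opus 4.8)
The plan is to reduce the claim to simple bookkeeping of local versus global round numbers, once we establish that under a patient DRIP every node wakes up spontaneously exactly at its tag.

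First I would argue that, in the execution of a patient DRIP $D$ on a configuration $G$, no node ever experiences a forced wakeup. A forced wakeup of a node $v$ occurs in some global round $r$ only if $v$ hears a message in round $r$; this requires some neighbour of $v$ to transmit in round $r$, and it also requires $r \le t_v$. Since all wakeup tags lie in $\{0,\ldots,\sigma\}$, we have $r \le t_v \le \sigma$. But by definition of a patient DRIP, no node transmits in any global round of $\{0,\ldots,\sigma\}$, so no neighbour of $v$ transmits in round $r$, and hence $v$ hears no message in round $r$, contradicting the forced wakeup. Therefore every node $v$ wakes up spontaneously, which by definition of a configuration happens in global round $t_v$.

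Next, since the local clock of a node has value $0$ in its wakeup round and increases by exactly one in each subsequent global round, local round $i$ at node $v$ coincides with global round $t_v + i$ (for every $i \ge 0$ for which this local round is defined, i.e., before $v$ terminates); the same reasoning applied to $w$ shows that local round $j$ at $w$ coincides with global round $t_w + j$. Equating the two global rounds, $t_v + i = t_w + j$ forces $j = i - (t_w - t_v)$, which is exactly the statement. If $i - (t_w - t_v) < 0$, then global round $t_v+i$ precedes $w$'s wakeup round $t_w$, and the statement is read as holding whenever the indicated local round at $w$ is defined.

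The only delicate point is making the ``forced wakeups cannot occur'' step fully rigorous, since the set of rounds in which nodes transmit is a property of the execution, which itself depends on wakeup times, giving an apparent circularity. Here, however, the definition of a patient DRIP already \emph{postulates} as a hypothesis that, in the execution on $G$, no node transmits in rounds $0,\ldots,\sigma$; so this hypothesis can simply be fed into the wakeup analysis above, and no induction on rounds is needed. (If one wished instead to derive patience from a more primitive property of $D$, one would prove by induction on the global round $r \le \sigma$ that up to round $r$ every awoken node woke spontaneously and no node has transmitted, using the behaviour of $D$ on all-silent histories; but this is not required for the present proposition.)
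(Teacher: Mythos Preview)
Your proof is correct and follows essentially the same route as the paper: use patience to conclude spontaneous wakeup at the tag round, then translate local round $i$ at $v$ to global round $t_v+i$, equate with $t_w+j$, and solve for $j$. The paper's version is terser because the implication ``patient DRIP $\Rightarrow$ all wakeups are spontaneous'' is already recorded immediately after the definition of a patient DRIP, so it is simply invoked rather than re-derived; your expanded justification of that step and your remark on the apparent circularity are sound but not needed.
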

	\begin{proof}
		Let $r$ denote the global round number corresponding to local round $i$ at node $v$. Let $r^{(w)}$ denote the local round number at $w$ that corresponds to global round $r$. Since $D$ is a patient DRIP, both $v$ and $w$ wake up spontaneously in global rounds $t_v$ and $t_w$, respectively. It follows that $r = t_v+i$ and $r=t_w+r^{(w)}$. Setting $t_v+i = t_w+r^{(w)}$, and solving for $r^{(w)}$ gives the desired result.
	\end{proof}

		\subsection{Leader Election Algorithm}
	For any DRIP $D$ and any node $v$, let $done_{v,D}$ denote the first round $i$ for which $D(\cH_{v,D}[0\ldots i-1]) = terminate$. When it is clear from the context which DRIP is being executed, we will just write $done_v$. A \emph{decision function $f$ for a DRIP $D$} takes as input a node's history vector induced by the execution of $D$, i.e., $\cH_{v,D}[0\ldots done_{v,D}]$, and outputs a 0 or 1. A \emph{dedicated leader election algorithm for configuration $G$} is a DRIP $D$ along with a decision function $f$ for $D$ such that $f(\cH_{v,D}[0\ldots done_{v,D}]) = 1$ for exactly one node $v \in G$. A configuration $G$ is {\em feasible} if there exists a dedicated leader election algorithm for $G$.

\section{Efficient Classification of Feasible Configurations}


In this section, we set out to define a procedure that determines whether or not a given configuration $G$ is feasible. The challenge is to make such a procedure efficient, since trying every potential leader election algorithm in a brute-force manner is prohibitively expensive. We describe a centralized algorithm called \classifier that, when given as input a configuration $G$ with $n$ nodes, decides whether or not $G$ is feasible in time polynomial in $n$. Further, if $G$ is feasible, we can explicitly produce a distributed dedicated leader election algorithm for $G$ without any additional computation.
\subsection{Definition of \classifier}
The high-level idea behind \classifier is to carry out the following phase-based algorithm. At the start of each phase, the nodes of $G$ are partitioned into equivalence classes, where nodes in the same class have the same history so far. Phase $P_0$ at each node consists of one round: its spontaneous wakeup round. All nodes are placed in the same class at the end of phase $P_0$. For each phase $i \geq 1$, we denote by $numClasses_i$ the number of equivalence classes at the start of phase $i$. Phase $i$ consists of $numClasses_i$ ``transmission blocks", where each transmission block consists of $2\sigma+1$ rounds. The idea is that each class is assigned its own transmission block, and we assume that each node in each class $k$ transmits in its local round $\sigma+1$ of transmission block $k$ (recalling that, due to different wake-up times, local round $\sigma+1$ at different nodes can correspond to different global rounds). In particular, we determine the history of each node $v$ during phase $i$ by considering in which equivalence class each of its neighbours $w$ started the phase and the relative wake-up times of $v$ and $w$. Finally, once the history for phase $i$ has been computed for each node $v$ in $G$, the equivalence classes are refined by comparing these histories, and the algorithm proceeds to the next phase. This is repeated until either: there exists a class consisting of exactly one node $v$ (in which case, \classifier outputs ``Yes" since $v$ can be chosen as the leader), or, no such class exists, and there are two consecutive phases with no changes in the partition (in which case, \classifier outputs ``No" since no further changes will ever occur and there is no possible leader). While it seems that \classifier will only determine whether or not the specific algorithm described above can solve leader election in configuration $G$, we will later show that this is sufficient in order to determine the feasibility of $G$.

We now give a detailed description of \classifier. Each node $v$ in $G$ is augmented with the following variables:
\begin{itemize}
	\item A $v_{\CLASS}$ variable that keeps track of which equivalence class the node belongs to.
	\item A label $v_{\TAG}$ that will be used to determine if two nodes in the same equivalence class should stay in the same equivalence class after the current phase. Essentially, at the start of each phase, this label represents what node $v$ heard during the previous phase.
\end{itemize}
The configuration is also augmented with a variable $numClasses_G$ that keeps track of the number of different classes in the partition, as well as a list $reps$ of representative nodes, one for each equivalence class in the partition.
We denote by $G_{aug}$ the augmented version of configuration $G$. Recall that each node $v$ of a configuration $G$ is labeled with its wakeup tag $t_v$, so the wakeup tag values are also available in the augmented configuration $G_{aug}$. Finally, we fix an arbitrary ordering of the vertices of $G_{aug}$ so that loops of the form ``{\bf for all} $v \in G_{aug}$" always iterate through the nodes in the same order. Algorithm \ref{initgaug} gives the pseudocode describing how the augmented configuration $G_{aug}$ is initialized.

\begin{algorithm}[H]
	\small
	\caption{\initgaug, input is configuration $G$, where each node $v$ in $G$ is labeled with its wake-up tag $t_v$}
	\label{initgaug}
	\begin{algorithmic}[1]
		\State $G_{aug} \leftarrow G$
		\State $numClasses_G \leftarrow 1$
		\State $n \leftarrow$ number of nodes in $G$
		\State $reps[1 \ldots n] \leftarrow (null,\ldots,null)$
		\ForAll{$v \in G_{aug}$}
		\State $v_\CLASS \leftarrow 1$
		\State $v_\TAG \leftarrow null$
		\If {$reps[numClasses_G] = null$}
		\State $reps[numClasses_G] \leftarrow v$
		\EndIf
		\EndFor
		\State return $G_{aug}$
	\end{algorithmic}
\end{algorithm}


Next, we define a procedure $\PARTITION$ that will partition all of the nodes in $G_{aug}$ into equivalence classes by appropriately updating the $\CLASS$ variables to positive integer values. Two nodes are placed in the same class if and only if they were in the same class at the start of the procedure and their current labels are equal. The partitioning process considers one node at a time and compares its label and previous class value to the representative of each existing class. If no match is found, or no classes exist yet, node $v$ becomes the representative node of a new class. Algorithm \ref{partitionpseudo} gives the pseudocode for $\PARTITION$.

\begin{algorithm}[H]
	\small
	\caption{\PARTITION, input is the augmented configuration $G_{aug}$}
	\label{partitionpseudo}
	\begin{algorithmic}[1]
		\ForAll {$v \in G_{aug}$}\label{storeold}
		\State $oldClass[v] \leftarrow v_{\CLASS}$\Comment{remember each node's class before updating}
		\EndFor\label{storeoldend}
		\ForAll {$v \in G_{aug}$} \label{refineloop}
		\State $\textrm{assignedToClass} \leftarrow false$\label{startrefine}
		\For {$k = 1,\ldots,numClasses_G$}\Comment{compare $v$ to existing class reps}\label{checkloopstart}
		\If {$(oldClass[v] = oldClass[reps[k]])$ and $(v_\TAG = reps[k]_\TAG)$}\label{checkmatch}
		\State $v_\CLASS \leftarrow k$\label{assignoldclass}
		\State $\textrm{assignedToClass} \leftarrow true$
		\EndIf
		\EndFor\label{checkloopend}
		\If {$(\textrm{assignedToClass} = false)$}\Comment{create a new class with $v$ as its representative}
		\State $numClasses_G \leftarrow numClasses_G + 1$\label{incclasscount}
		\State $v_\CLASS \leftarrow numClasses_G$\label{assignnewclass}
		\State $reps[numClasses_G] \leftarrow v$
		\EndIf\label{endrefine}
		\EndFor\label{refineloopend}
	\end{algorithmic}
\end{algorithm}

Next, we define a procedure \partitioner that sets the values of the node labels according to the phase that is currently being simulated, and then calls \PARTITION to update the equivalence classes based on the new labels. At a high level, the idea is to record $v$'s history during the phase and succinctly store it in $v$'s label, so that the call to \PARTITION updates the equivalence classes based on node histories. More concretely, to set the label at an arbitrary node $v$, \partitioner first considers each neighbour $w$ of $v$, creates a tuple $(w_{\CLASS}, \sigma+1+t_w - t_v)$, and, if $v_{\CLASS} \neq w_{\CLASS}$ or $t_w \neq t_v$, adds the tuple to a list $N_v$. In fact, we will store each tuple $(a,b)$ as a triple $(a,b,c)$ in $N_v$, where $c=1$ if tuple $(a,b)$ is added exactly once, and $c=*$ otherwise. At a high level, for each neighbour $w$ of $v$, the value $w_{\CLASS}$ represents in which transmission block node $w$ transmits, and the value $\sigma+1+t_w-t_v$ represents $v$'s local round within the block that $w$ transmits. Taken together, the triples of $N_v$ record all of the non-silent rounds in $v$'s history for the current phase, and whether exactly one or more than one neighbour of $v$ transmitted in each such round. A tuple is excluded from $N_v$ when $w_{\CLASS} = v_{\CLASS}$ and $t_w = t_v$ since this represents the case where $v$ and $w$ will transmit at the same time in the current phase, so $v$ would not receive $w$'s transmission nor detect a collision. After completing the construction of $N_v$, \partitioner sets $v$'s label by concatenating together the triples contained in $N_v$. Further, the concatenated triples appear in $v$'s label in increasing order according to the fixed ordering $\prec_{hist}$ given in Definition \ref{ordering} below. This ordering ensures that, when \PARTITION updates the equivalence classes based on node labels, two nodes with equal histories are placed in the same class regardless of the order in which the tuples were added to $N_v$. Algorithm \ref{partitionerpseudo} gives the pseudocode for \partitioner.

\begin{definition}\label{ordering}
	Let $\prec_{hist}$ be the ordering on $\N \times \N \times \{1,*\}$ defined as follows: $(a,b,c) \prec (a',b',c')$ if $a < a'$, or $a=a'$ and $b < b'$, or $a=a'$ and $b=b'$ and $c=1$. 
\end{definition}

\begin{algorithm}[H]
	\small
	\caption{\partitioner, input is the augmented configuration $G_{aug}$}
	\label{partitionerpseudo}
	\begin{algorithmic}[1]
		\ForAll{$v \in G_{aug}$}\label{partvloop}
			\State $N_v \leftarrow$ empty list\label{partvfirst}
			\ForAll{$w$ adjacent to $v$ in $G_{aug}$}\label{neighbourLoop}
				\If{$(w_{\CLASS} \neq v_{\CLASS})$ or $(t_w \neq t_v)$}\label{validTriple}
					\State $newTuple \leftarrow true$
					\ForAll{$(a,b,c) \in N_v$}
						\If{$(a = w_{\CLASS})$ and $(b=\sigma + 1+t_w - t_v)$}\label{foundDuplicate}
							\State $newTuple \leftarrow false$
							\State replace $(a,b,c)$ with $(a,b,*)$ in $N_v$\label{replaceTriple}
						\EndIf
					\EndFor
					\If{$newTuple = true$}
						\State append $(w_{\CLASS}, \sigma + 1+t_w - t_v,1)$ to $N_v$\label{appendTriple}
					\EndIf
				\EndIf
			\EndFor
			\State Sort $N_v$ according to $\prec_{hist}$\label{sortingline}
			\State $v_{\TAG} \leftarrow null$
			\For{$x = 0,\ldots,|N_v|-1$}
				\State $v_{\TAG} \leftarrow v_{\TAG} \cdot N_v[x]$
			\EndFor\label{partvlast}
		\EndFor\label{endpartvloop}

		\State $\PARTITION(G_{aug})$\label{refineline}
		
	\end{algorithmic}
\end{algorithm}

 Finally, we describe the \classifier algorithm, which is designed to output ``Yes" if leader election can be solved on the input configuration $G$, and outputs ``No" otherwise. At a high level, the algorithm starts by initializing the augmented version of configuration $G$, and then executes \partitioner repeatedly. If the nodes are eventually partitioned such that there is an equivalence class containing exactly one node, then the algorithm outputs ``Yes". If no such equivalence class exists, and there is a call to \partitioner such that the partition is the same before and after the call, then the algorithm outputs ``No". Algorithm \ref{classifierpseudo} gives the pseudocode of \classifier.
 
 \begin{algorithm}[H]
 	\small
 	\caption{\classifier, input is configuration $G$, each node $v \in G$ is labeled with its global wake-up tag $t_v$}
 	\label{classifierpseudo}
 	\begin{algorithmic}[1]
 		\State $G_{aug} \leftarrow \initgaug(G)$\label{initline}
 		\For{$i \leftarrow 1,\ldots,\lceil n/2 \rceil$}\label{bigfor}
 		\State $oldClassCount \leftarrow numClasses_G$
 		\State $\partitioner(G_{aug})$
 		\If{$\exists m \in \{1,\ldots,numClasses_G\}$ such that exactly one node $v \in G_{aug}$ has $v_{\CLASS} = m$}\label{yescondition}
 		\State exit and output ``Yes"
 		\EndIf
 		\If{$(numClasses_G =  oldClassCount)$}\label{nocondition}
 		\State exit and output ``No"\label{outputno}
 		\EndIf
 		\EndFor
 		
 	\end{algorithmic}
 \end{algorithm}
 
To aid in the analysis of \classifier, we define the following notation. For any $i \in \{1,\ldots,\lceil n/2 \rceil\}$, we refer to the execution of  $\partitioner(G_{aug})$ in the $i^{th}$ iteration of the {\bf for} loop in the execution of \classifier as \emph{iteration $i$ of} \classifier. We refer to the execution of \initgaug as iteration 0 of \classifier. For any $i \in \{1,\ldots,\lceil n/2 \rceil\}$ and any node $v \in G_{aug}$, denote by $v_{\CLASS,i}$ and $v_{\TAG,i}$ the values of $v_{\CLASS}$ and $ v_{\TAG}$, respectively, at the end of iteration $i-1$ of \classifier. Similarly, denote by $numClasses_{G,i}$ and $reps_i$ the value of $numClasses_{G}$ and $reps$, respectively, at the end of iteration $i-1$ of \classifier.

\subsection{Time Complexity of \classifier}
The key to the analysis is to notice that each call to the procedure \PARTITION results in a refinement of the node partition. This is because, according to the condition on line \ref{checkmatch} of \PARTITION, two nodes are assigned to the same class $k$ only if they were in the same class (as the representative of class $k$) immediately before \PARTITION was called. It follows that the number of equivalence classes can only increase. Moreover, as each equivalence class contains at least one node, there cannot be more than $n$ equivalence classes.

\begin{observation}
	For any $v,w$ in $G$ and any $j \geq 1$, if $v_{\CLASS,j} \neq w_{\CLASS,j}$, then $v_{\CLASS,j'} \neq w_{\CLASS,j'}$ for all $j' > j$.
\end{observation}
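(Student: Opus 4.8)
The plan is to reduce the statement to its one-step form and then induct. It suffices to prove the following: for any nodes $v,w$ and any $j \geq 1$, if $v_{\CLASS,j} \neq w_{\CLASS,j}$ then $v_{\CLASS,j+1} \neq w_{\CLASS,j+1}$. Granting this, the claim follows by induction on $j'$: the base case $j' = j+1$ is the one-step statement, and the inductive step applies the one-step statement with $j$ replaced by $j'$ (legitimate since $j' > j \geq 1$).

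To prove the one-step statement, I would analyze iteration $j$ of \classifier, which is a single call to \partitioner that, after assigning all the \TAG-values, makes a single call to \PARTITION. First note that Algorithm \ref{partitionerpseudo} modifies only the lists $N_v$ and the labels $v_{\TAG}$, never any $u_{\CLASS}$; hence when \PARTITION executes its initialization loop (lines \ref{storeold}--\ref{storeoldend}), we have $oldClass[u] = u_{\CLASS,j}$ for every node $u$. So the one-step statement is equivalent to: after this execution of \PARTITION, any two nodes that end up with equal \CLASS-values had equal $oldClass$-values.

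The core of the argument is a loop invariant for the main loop of \PARTITION (line \ref{refineloop}): at every point, every already-processed node $u$ satisfies $oldClass[u] = oldClass[reps[u_{\CLASS}]]$, where $u_{\CLASS}$ denotes the current value. Here $reps[k]$ is a well-defined non-$null$ node for each $k \in \{1,\ldots,numClasses_G\}$ (it is set by \initgaug for $k=1$, and set immediately each time \PARTITION increments $numClasses_G$), and $oldClass$ is defined on all nodes, so the invariant makes sense. I would prove it by induction on the number of processed nodes: when processing $v$, if $v$ is assigned an existing class $k$ on line \ref{assignoldclass} then the guard on line \ref{checkmatch} gives $oldClass[v] = oldClass[reps[k]] = oldClass[reps[v_{\CLASS}]]$; if $v$ becomes the representative of a new class $k$ then $reps[k] = v$ and the invariant for $v$ is immediate; and previously-processed nodes keep their \CLASS-values and the relevant $reps$-entries unchanged during the processing of $v$, so their instances of the invariant persist. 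When the loop terminates, every node $u$ satisfies $oldClass[u] = oldClass[reps[u_{\CLASS}]]$ with $u_{\CLASS} = u_{\CLASS,j+1}$; hence if $v_{\CLASS,j+1} = w_{\CLASS,j+1} = k$ then $oldClass[v] = oldClass[reps[k]] = oldClass[w]$, i.e.\ $v_{\CLASS,j} = w_{\CLASS,j}$. The contrapositive is the one-step statement.

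I do not anticipate a real obstacle; the only subtlety is that $reps$ and $numClasses_G$ persist across calls to \PARTITION, so in iteration $j$ the node $reps[k]$ may have been assigned class $k$ in some earlier iteration. This is harmless because the argument never uses $reps[k]_{\CLASS}$ or any earlier history of $reps[k]$ --- only the quantity $oldClass[reps[k]]$, recomputed at the start of iteration $j$, together with the guard on line \ref{checkmatch}, both entirely local to iteration $j$.
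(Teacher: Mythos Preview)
Your argument is correct and follows the same idea the paper uses. In the paper this statement is an Observation with only a one-sentence justification preceding it (that the condition on line~\ref{checkmatch} of \PARTITION forces two nodes in the same new class to have shared a class with the representative, hence with each other, before the call); your loop-invariant formulation is a careful unpacking of exactly that sentence.
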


\begin{corollary}\label{numClassesNonDecreasing}
$1 \leq numClasses_{G,1} \leq \cdots \leq numClasses_{G,\lceil n/2 \rceil+1} \leq n$
\end{corollary}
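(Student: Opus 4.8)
The plan is to read off this chain of inequalities from the preceding Observation together with two elementary bookkeeping facts about \initgaug and \PARTITION, so the argument is essentially a matter of phrasing rather than new ideas. First I would dispose of the leftmost inequality: \initgaug (Algorithm \ref{initgaug}) sets $numClasses_G \leftarrow 1$ and assigns $v_{\CLASS} \leftarrow 1$ to every node, so $numClasses_{G,1}=1\geq 1$.

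Next, for the monotonicity $numClasses_{G,j}\leq numClasses_{G,j+1}$ with $j\in\{1,\ldots,\lceil n/2\rceil\}$, I would invoke the Observation in contrapositive form: if $v_{\CLASS,j+1}=w_{\CLASS,j+1}$ then $v_{\CLASS,j}=w_{\CLASS,j}$. In partition language this says that the partition of $V(G)$ induced by the $\CLASS$ values at iteration $j+1$ is a refinement of the one at iteration $j$, i.e.\ every iteration-$j$ class is a disjoint union of iteration-$(j+1)$ classes. To turn ``number of refined parts $\geq$ number of original parts'' into a statement about $numClasses_G$, I would note that neither \initgaug nor \PARTITION ever leaves a class empty — a new class index $k$ is introduced only with $reps[k]$ as a member — so the indices $1,\ldots,numClasses_G$ enumerate exactly the nonempty parts. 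Hence the count of classes is weakly increasing from iteration to iteration, and chaining over $j=1,\ldots,\lceil n/2\rceil$ yields the ascending chain. (One could alternatively cite the sentence preceding the Observation, which already asserts that each call to \PARTITION refines the partition.)

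Finally, the rightmost inequality is just a counting bound: at every iteration the classes form a partition of the $n$-element vertex set into nonempty parts, so there are at most $n$ of them, and in particular $numClasses_{G,\lceil n/2\rceil+1}\leq n$. The one point that needs a careful word — the closest thing to an obstacle here — is the indexing and early-exit convention: $numClasses_{G,i}$ is defined as the value at the end of iteration $i-1$, and \classifier may exit (outputting ``Yes'' or ``No'') before completing all $\lceil n/2\rceil$ iterations. I would address this by observing that after \classifier exits its variables retain their last computed values, so the displayed chain is well defined and its tail is trivially constant; alternatively, one may read the corollary as a statement about the hypothetical uninterrupted run of the {\bf for} loop. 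Either reading requires no additional argument.
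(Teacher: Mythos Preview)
Your proposal is correct and matches the paper's approach: the corollary is stated without proof, following directly from the preceding paragraph (refinement under \PARTITION, nonemptiness of classes, and the trivial upper bound $n$) and the Observation. Your treatment is simply a more explicit unpacking of that same reasoning, with the added care about the early-exit convention that the paper leaves implicit.
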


Using the above fact, we show that \classifier will eventually output `Yes' or `No' (and terminate) in one of its $\lceil n/2 \rceil$ iterations.

\begin{lemma}\label{ClassifierWillTerminate}
	There exists an iteration $i \in \{1,\ldots,\lceil n/2 \rceil\}$ of the {\bf for} loop in \classifier such that either the condition at line \ref{yescondition} or \ref{nocondition} evaluates to true, and then \classifier will terminate.
\end{lemma}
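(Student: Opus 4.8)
The plan is to track the quantity $numClasses_{G,i}$ (the number of equivalence classes at the start of iteration $i$) and argue that it is well-behaved. By Corollary \ref{numClassesNonDecreasing}, this sequence is non-decreasing and bounded above by $n$. First I would observe that if at the end of some iteration $i$ there is a class containing exactly one node, then the condition at line \ref{yescondition} fires and \classifier terminates with ``Yes'' in iteration $i$; so for the remainder of the argument we may assume that after every iteration, every equivalence class has size at least $2$. Under this assumption, the number of classes at any point is at most $\lfloor n/2 \rfloor$, since the classes partition the $n$ nodes into blocks of size $\geq 2$.

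Next I would argue by contradiction: suppose that for every $i \in \{1,\ldots,\lceil n/2 \rceil\}$, neither line \ref{yescondition} nor line \ref{nocondition} evaluates to true. Since line \ref{nocondition} never fires, we have $numClasses_{G} \neq oldClassCount$ in every iteration, i.e. the number of classes strictly increases in each of the $\lceil n/2 \rceil$ iterations. Combined with the fact (from the previous paragraph) that before iteration $1$ we have $numClasses_{G,1} = 1$ and at every stage $numClasses_{G,i} \leq \lfloor n/2 \rfloor$, a strict increase in each of $\lceil n/2 \rceil$ consecutive iterations forces $numClasses_{G}$ to reach a value of at least $1 + \lceil n/2 \rceil > \lfloor n/2 \rfloor$, which is the desired contradiction. (One should be slightly careful with the parity: starting from $1$ and increasing by at least $1$ per iteration, after $\lceil n/2 \rceil$ iterations the count is at least $1 + \lceil n/2 \rceil$, and since $1 + \lceil n/2 \rceil > \lfloor n/2 \rfloor$ for all $n \geq 1$, the bound $numClasses_{G,i} \leq \lfloor n/2\rfloor$ is violated.) Hence some iteration $i \leq \lceil n/2 \rceil$ must trigger line \ref{yescondition} or line \ref{nocondition}, at which point \classifier exits and outputs ``Yes'' or ``No''.

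I expect the main (minor) obstacle to be getting the counting and parity bookkeeping exactly right: one must use the ``Yes'' condition to justify the $\lfloor n/2 \rfloor$ bound on the number of classes, and then check that $\lceil n/2 \rceil$ iterations of strict increase genuinely overshoot this bound for all $n$. A secondary point to state carefully is that the two conditions are checked in the right order inside the loop body, so that termination via line \ref{yescondition} or line \ref{nocondition} does indeed halt \classifier within the claimed iteration; this follows directly from the pseudocode in Algorithm \ref{classifierpseudo}, where both checks occur inside iteration $i$ and each calls ``exit''. No deeper structural fact about the partition refinement is needed beyond monotonicity, which is already recorded in the Observation and Corollary.
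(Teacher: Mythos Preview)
Your proposal is correct and follows essentially the same approach as the paper: both argue that if line~\ref{nocondition} never fires then $numClasses_G$ strictly increases in every iteration, forcing it above $n/2$ after $\lceil n/2\rceil$ iterations, which guarantees a singleton class and hence line~\ref{yescondition}. The only cosmetic difference is that the paper phrases the pigeonhole step as ``average class size $<2$ implies a singleton exists'' whereas you phrase it contrapositively as ``no singleton implies at most $\lfloor n/2\rfloor$ classes''; these are equivalent, and your parity check $1+\lceil n/2\rceil>\lfloor n/2\rfloor$ is correct.
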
 
\begin{proof}
	Assume that the condition at line \ref{nocondition} of \classifier evaluates to false after all iterations $i \in \{1,\ldots,\lceil n/2 \rceil\}$ of \classifier. Corollary \ref{numClassesNonDecreasing} implies that $1 \leq numClasses_{G,1} < \cdots < numClasses_{G,\lceil n/2 \rceil+1}$. It follows that $numClasses_{G,\lceil n/2 \rceil+1} \geq \lceil n/2 \rceil+1$. In other words, at the end of iteration $\lceil n/2 \rceil$ of \classifier, the number of equivalence classes is strictly greater than $n/2$, which means that the average class size is strictly less than 2. It follows that there is at least one class with size exactly 1 immediately after $\partitioner(G_{aug})$ is executed for the $\lceil n/2 \rceil^{th}$ time, and the condition at line \ref{yescondition} evaluates to true.
\end{proof}

We showed above that there are at most $\lceil n/2 \rceil$ iterations of \classifier. Each iteration assigns a node label to each node, and then updates the node partition by check the equality of pairs of labels. We can show that each such iteration takes at most $O(n^2\Delta)$ steps, where $\Delta$ is the maximum degree of the nodes in $G$. This gives an overall running time of $O(n^3\Delta)$, which is $O(n^4)$ in the worst case.

\begin{lemma}\label{ClassifierComplexity}
	The worst-case time complexity of \classifier on an input configuration $G$ is $O(n^3\Delta)$, where $\Delta$ is the maximum degree of $G$.
\end{lemma}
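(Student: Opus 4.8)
The plan is to bound the cost of a single iteration of \classifier and multiply by the number of iterations, which by Lemma~\ref{ClassifierWillTerminate} is at most $\lceil n/2 \rceil$. Each iteration is one call to \partitioner, which in turn performs the main outer loop over all nodes $v$ (building $N_v$ and the label $v_{\TAG}$) followed by a single call to \PARTITION. So the key is to show: (i) each call to \partitioner, excluding the nested \PARTITION call, costs $O(n^2\Delta)$; (ii) each call to \PARTITION costs $O(n^2\Delta)$ as well; and then (iii) multiply to get $O(n^3\Delta)$.

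First I would analyze the label-construction part of \partitioner. For each node $v$, the algorithm iterates over its at most $\Delta$ neighbours $w$, and for each such $w$ it scans the current list $N_v$ to check for a duplicate tuple. Since $|N_v| \leq \deg(v) \leq \Delta$ at all times, each neighbour costs $O(\Delta)$ work (a constant-cost comparison of a pair $(w_{\CLASS}, \sigma+1+t_w-t_v)$ against each entry), giving $O(\Delta^2)$ per node and $O(n\Delta^2)$ over all nodes. Sorting $N_v$ under $\prec_{hist}$ costs $O(\Delta \log \Delta)$, which is dominated. Here I should be slightly careful about the cost model for the tuple entries themselves: the second coordinate $\sigma+1+t_w-t_v$ can be as large as $\sigma$, and the concatenated label $v_{\TAG}$ has length $O(\Delta)$ as a string of triples; so comparing two such labels for equality (which is what \PARTITION does) costs $O(\Delta)$ per comparison in the unit-cost model on integers, or $O(\Delta \log \sigma)$ if one insists on counting bits. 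I would adopt the convention (consistent with the $O(n^3\Delta)$ claim and with treating $\sigma$, which does not even appear in the bound, as a unit-cost quantity) that arithmetic on tags and comparisons of individual tuples cost $O(1)$, so that a label comparison costs $O(\Delta)$.

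Next I would bound \PARTITION. It first copies every node's class into \texttt{oldClass}, costing $O(n)$. Then for each of the $n$ nodes $v$ it runs the inner loop over $k = 1, \ldots, numClasses_G$, and for each $k$ it performs the test on line~\ref{checkmatch}: an equality check of \texttt{oldClass} values ($O(1)$) and an equality check of the labels $v_{\TAG}$ and $reps[k]_{\TAG}$. Since $numClasses_G \leq n$ (as noted before Corollary~\ref{numClassesNonDecreasing}, each class is nonempty) and each label comparison costs $O(\Delta)$, the inner loop costs $O(n\Delta)$, so \PARTITION costs $O(n^2\Delta)$ in total. Combining, one call to \partitioner costs $O(n\Delta^2 + n^2\Delta) = O(n^2\Delta)$ since $\Delta \leq n$. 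Multiplying by the $O(n)$ iterations from Lemma~\ref{ClassifierWillTerminate} (and adding the negligible $O(n)$ cost of \initgaug plus the per-iteration bookkeeping and the line~\ref{yescondition} scan, which is $O(n)$) yields the claimed $O(n^3\Delta)$ bound.

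The main obstacle I anticipate is not any single estimate but pinning down the cost model precisely enough that the bound is honest: the labels $v_{\TAG}$ are variable-length strings over an alphabet that includes integers up to roughly $\sigma$, and both building them (the duplicate-detection scan) and comparing them (in \PARTITION) need a clearly stated per-operation cost. I would handle this by stating up front that we measure the running time in the number of elementary operations on $O(\log n + \log \sigma)$-bit words, under which each tuple occupies $O(1)$ words, each label occupies $O(\Delta)$ words, a label comparison is $O(\Delta)$, and the arithmetic $\sigma+1+t_w-t_v$ is $O(1)$; everything else in the two lemmas above then goes through as routine arithmetic. A secondary point worth a sentence is that Lemma~\ref{ClassifierWillTerminate} only guarantees termination within $\lceil n/2 \rceil$ iterations, so the $\lceil n/2 \rceil = O(n)$ factor in the outer loop is exactly what we need and nothing is lost there.
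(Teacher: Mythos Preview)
Your proposal is correct and follows essentially the same approach as the paper: bound one call to \partitioner by $O(n\Delta^2)$ for label construction plus $O(n^2\Delta)$ for \PARTITION (using $numClasses_G \le n$ and $O(\Delta)$ per label comparison under a unit-cost word model), then multiply by the $\lceil n/2\rceil$ iterations from Lemma~\ref{ClassifierWillTerminate}. Your explicit statement of the cost model and your remark that $\Delta\le n$ absorbs the $O(n\Delta^2)$ term are exactly the assumptions the paper makes implicitly.
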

\begin{proof}
	We assume that any comparison of two $O(\log n)$-bit or $O(\log \sigma)$-bit words takes constant time. In particular, note that comparing two triples in $N_v$ in \partitioner takes constant time. Let $\Delta$ be the maximum degree of nodes in $G$. We determine an upper bound on the worst-case time complexity of \partitioner. 
	
	First, consider an arbitrary $v \in G_{aug}$ in the execution of the {\bf for} loop at lines \ref{partvloop}-\ref{endpartvloop}. We determine an upper bound on the number of triples in $N_v$. Each triple is added at line \ref{appendTriple} of \partitioner, and this happens at most once per iteration of the {\bf for} loop at line \ref{neighbourLoop}. As this loop iterates through all neighbours of $v$, it follows that $N_v$ contains at most $\Delta$ triples. The overall complexity of determining $N_v$ is $\Delta^2$: for each triple added to $N_v$, it was compared with all previously added triples, of which there are at most $\Delta$. The complexity of sorting $N_v$ at line \ref{sortingline} is $O(\Delta\log \Delta)$. Appending together the triples in $N_v$ to form $v_{\TAG}$ takes $O(\Delta)$ time. Thus, the worst-case complexity of lines \ref{partvfirst}-\ref{partvlast} is dominated by $O(\Delta^2)$. As this is repeated for each $v \in G_{aug}$, the worst-case time complexity of the {\bf for} loop at lines \ref{partvloop}-\ref{endpartvloop} is $O(n\Delta^2)$.
	
	Next, consider the execution of $\PARTITION(G_{aug})$ at line \ref{refineline} of \partitioner. At line \ref{checkmatch} in \PARTITION, comparing $v_{\TAG}$ with $reps[k]_{\TAG}$ takes $O(\Delta)$ time: each label consists of at most $\Delta$ triples, the labels can be scanned from left-to-right as the triples are in sorted order, and comparing two triples takes constant time. This is performed $numClasses_G \leq n$ times inside the {\bf for} loop at lines \ref{checkloopstart}-\ref{checkloopend} in \PARTITION. Thus, the complexity of assigning a class number to an arbitrary node $v$ (lines \ref{startrefine}-\ref{endrefine}) is dominated by $O(n\Delta)$. This is repeated $n$ times in the {\bf for} loop at line \ref{refineloop}. So the running time of lines \ref{refineloop}-\ref{refineloopend} is $O(n^2\Delta)$. This dominates the $O(n)$ steps needed for lines \ref{storeold}-\ref{storeoldend}. Thus, the number of steps taken by the execution of $\PARTITION(G_{aug})$ at line \ref{refineline} of \partitioner is bounded above by $O(n^2\Delta)$.
	
	From the above discussion, we see that an execution of \partitioner has worst-case time complexity $O(n^2\Delta)$. Finally, since by Lemma \ref{ClassifierWillTerminate}, \partitioner is executed at most $\lceil n/2 \rceil$ times by \classifier, we get that the overall worst-case time complexity of \classifier is $O(n^3\Delta)$.

\end{proof}


%
%

\subsection{Correctness of \classifier}

In this section, we show that \classifier correctly identifies whether or not a configuration $G$ is feasible. First, in Section \ref{canonical}, we give a distributed implementation (a DRIP) of the iterations of \classifier. We call this the \emph{canonical DRIP} for configuration $G$ since the outcome of its execution accurately predicts whether or not there exists \emph{any} DRIP that can be used to solve leader election in $G$. In Section \ref{canonicalProperties}, we prove some important properties about the canonical DRIP. We then proceed to prove the correctness of \classifier. In Section \ref{yesInstances}, we prove that if \classifier outputs ``Yes" on input $G$, then the canonical DRIP can be used to solve leader election in $G$, and thus $G$ is feasible. Conversely, in Section \ref{noInstances}, we prove that if $G$ is feasible, then the canonical DRIP can be used to solve leader election in $G$, and \classifier outputs ``Yes" on input $G$.

\subsubsection{The Canonical DRIP}\label{canonical}

In order to implement \classifier in a distributed way for a given configuration $G$, we would like each node to be able to independently determine in which equivalence class of the partition it belongs to at the start of each phase. However, as the nodes of $G$ are anonymous, we must install an identical algorithm at each node. So, in order to describe the canonical DRIP for configuration $G$, we first create a sequence of lists that is defined with respect to the execution of \classifier on $G$. The high-level idea is that, for each $j \geq 1$, the $j^{th}$ list consists of a list of histories of class representatives chosen by iteration $j-1$ of \classifier. The same sequence of lists is hard-coded into the algorithm installed at each node, and at the start of each phase $P_j$, each node $v$ will compare its history to the items in list $\mathcal{L}_j$ to determine which equivalence class it was assigned to by \classifier. It will use this class number to determine when it will transmit during phase $P_j$.

We now give a detailed construction of each list $\mathcal{L}_j$. If \classifier terminates after iteration $j-1$, then $\mathcal{L}_j$ consists of a single item: the string ``terminate". Otherwise, each item in list $\mathcal{L}_j$ corresponds to a single class representative, whose history is encoded as a tuple $(oldClass, label)$. For $j \geq 2$, the value of $oldClass$ represents the class number that the representative belonged to at the start of the phase $P_{j-1}$, and the value of $label$ corresponds to the history of the representative during phase $P_{j-1}$. For $j=1$, we pick a tuple that reflects the fact that, at initialization, all nodes are in the same class and have no history. More specifically:
\begin{itemize}
	\item $\mathcal{L}_1$ consists of one item: the tuple $(1,null)$.
	\item For each $j \geq 2$,
\begin{itemize}
	\item if $numClasses_{G,j} = numClasses_{G,j-1}$, or $\exists m \in \{1,\ldots,numClasses_{G,j}\}$ such that exactly one node $v \in G_{aug}$ has $v_{\CLASS,j} = m$, then the list $\mathcal{L}_j$ consists of one item, defined as $\mathcal{L}_j[1] = ``terminate"$.
	\item Otherwise, $\mathcal{L}_j$ is a list consisting of $numClasses_{G,j}$ tuples. For each $k \in \{1,\ldots,numClasses_{G,j}\}$, define the $k^{th}$ tuple in the list as $\mathcal{L}_j[k] = (reps_{j}[k]_{\CLASS,j-1},reps_j[k]_{\TAG,j})$. To clarify, we take the representative of class $k$ at the end of phase $P_{j-1}$, and write the number of the class it was in at the end of phase $P_{j-2}$ into the first entry of the tuple, and write the label it was assigned during phase $P_{j-1}$ into the second entry of the tuple.\\ Recall that the label $reps_j[k]_{\TAG,j}$ is a concatenated sequence of triples of the form $(a,b,c)$.

\end{itemize}
\end{itemize}

We now provide the specification of the canonical DRIP \canonDRIP. For a given configuration $G$, the canonical DRIP for $G$ is defined locally at each node $v$ as a sequence of phases that starts when node $v$ wakes up. Define $r_0$ to be local round 0, and let phase $P_0$ consist of local round 0. For each $j \geq 1$, the phase $P_j$ and local round $r_{j}$ in which phase $P_j$ ends are defined inductively, as follows. At each node $v$, phase $P_{j}$ starts in local round $r_{j-1}+1$. There are two possible cases:
\begin{itemize}
	\item If $\mathcal{L}_{j}[1] = ``terminate"$ then node $v$ terminates in local round $r_{j-1}+1$ and $r_{j}$ is defined to be $r_{j-1}+1$.
	\item Otherwise, phase $P_j$ consists of $numClasses_{G,j}$ consecutive transmission blocks, each block consisting of exactly $2\sigma+1$ rounds, followed by $\sigma$ rounds in which node $v$ listens. The local round $r_{j}$ in which phase $P_{j}$ ends is set to $r_{j-1} + numClasses_{G,j}\cdot (2\sigma+1) + \sigma$. Node $v$ can deduce the value of $numClasses_{G,j}$ by looking at the number of items in the list $\mathcal{L}_j$. 
	
	During phase $P_j$, node $v$ transmits during exactly one transmission block. It maintains a variable $tBlock$ to store the block number in which it should transmit, and this value gets re-calculated at the start of each phase. The initial value of $tBlock$ before the start of phase $P_1$ is 1. Its transmission in phase $P_j$ occurs in the $(\sigma+1)^{th}$ round of block $tBlock$, and the transmitted message is a `1'. Node $v$ listens in all other rounds of the phase. 
	
	To calculate the value of $tBlock$ at the start of a phase $P_j$, node $v$ finds the value of $k$ such that the $k^{th}$ entry in list $\mathcal{L}_j$ ``matches" $v$'s history during the rounds of phase $P_{j-1}$. In particular, for each $k \in \{1,\ldots,numClasses_{G,j}\}$, node $v$ performs the following comparison between $\mathcal{L}_j[k] = (oldClass_k,label_k)$ and its history during phase $P_{j-1}$: 
	
	\begin{enumerate}
		\item Check that $oldClass_k = tBlock$. Note that we are currently calculating the value of $tBlock$ of phase $P_j$, so this comparison is performed with the old value of $tBlock$.
	
		\item If $j \geq 2$, for each round $t$ in phase $P_{j-1}$ such that $t = r_{j-2}+(a-1)(2\sigma+1)+b$ for some $a \in \{1,\ldots,numClasses_{G,j-1}\}$ and $b \in \{1,\ldots,2\sigma+1\}$:
		
			\begin{itemize}
				\item If $\mathcal{H}_{v,\canonDRIP}[t] = (\textrm{`1'})$, check that there exists a triple $(a,b,1)$ in $label_k$.
				\item If $\mathcal{H}_{v,\canonDRIP}[t] = (*)$, check that there exists a triple $(a,b,*)$ in $label_k$.
				\item If $\mathcal{H}_{v,\canonDRIP}[t] = (\emptyset)$, check that there exists no triple $(a,b,1)$ or $(a,b,*)$ in $label_k$.
			\end{itemize}
\end{enumerate}
If any of the checks returns false, $v$ immediately aborts the comparison, increments $k$, and tries again. When all checks return true for some value of $k$, node $v$ sets $tBlock = k$.
\end{itemize} 

\subsubsection{Properties of \canonDRIP}\label{canonicalProperties}
In this section, we prove some properties about the canonical DRIP that will be important in the proof of correctness of \classifier. The first property is that \canonDRIP is a patient DRIP, which implies that all nodes wake up spontaneously in the global round equal to their wakeup tag.

\begin{lemma}\label{canonPatient}
	In the execution of \canonDRIP by all nodes in $G$, no node transmits in any of the global rounds $0,\ldots,\sigma$. Equivalently, each node in $G$ wakes up spontaneously in the execution of \canonDRIP.
\end{lemma}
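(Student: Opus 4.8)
The plan is to prove Lemma~\ref{canonPatient} by strong induction on the global round number, showing that no node transmits in any global round $r \in \{0,1,\ldots,\sigma\}$. The key structural fact to exploit is the design of \canonDRIP: a node transmits only inside a transmission block of some phase $P_j$ with $j \geq 1$, and specifically only in the $(\sigma+1)^{th}$ local round of its block $tBlock$. Since phase $P_j$ for $j \geq 1$ starts at local round $r_{j-1}+1 \geq 1$, and phase $P_1$ in particular starts at local round $1$, the earliest any node can transmit is in its local round $\sigma+1$ (the first possible transmission slot, occurring in block $1$ of phase $P_1$). So the first transmission of any node $v$ happens no earlier than its local round $\sigma+1$.

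First I would make precise the claim that the first transmission of any node $v$, in the execution of \canonDRIP, occurs no earlier than local round $\sigma+1$ of $v$. This follows directly by unwinding the definition of \canonDRIP: in phase $P_0$ (local round $0$) the node does nothing but record its wakeup; in any phase $P_j$ with $j\geq 1$ that does not immediately terminate, the node listens in every round except the single round that is the $(\sigma+1)^{th}$ round of its current block $tBlock$, and block $tBlock \geq 1$, so this transmission round is at least the $(\sigma+1)^{th}$ round of the phase; since phase $P_1$ begins at local round $1$, the transmission round is at local round $\geq \sigma+1$. (If $\mathcal{L}_1[1] = \text{``terminate''}$, the node never transmits at all, trivially satisfying the claim.)

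Next I would combine this with an inductive argument coupling local and global rounds. Suppose, for contradiction, that some node transmits in a global round $r \leq \sigma$, and let $r$ be the smallest such global round, with $v$ a node transmitting in round $r$. By minimality of $r$, no node transmits in global rounds $0,\ldots,r-1$; in particular no node was woken up by a message during those rounds, so every node $w$ that is awake by global round $r$ woke up spontaneously in global round $t_w \leq \sigma$. For $v$ in particular, its local round number in global round $r$ is $r - t_v \leq r \leq \sigma$. But by the claim of the previous paragraph, $v$'s first transmission is at local round $\geq \sigma+1$, so $v$ cannot transmit in local round $r - t_v \leq \sigma$ — a contradiction. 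Hence no node transmits in any of the global rounds $0,\ldots,\sigma$. The ``equivalently'' part then follows: a node has a forced wakeup only if it hears a message in some global round, which requires some neighbour to transmit in that round; since all wakeup tags lie in $\{0,\ldots,\sigma\}$ and no transmissions occur in those rounds, every node reaches global round $t_v$ still asleep and wakes up spontaneously then.

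The main obstacle, and the only subtle point, is the bookkeeping that ties ``local round $\sigma+1$ is the earliest transmission slot'' to the global timeline without prematurely invoking Proposition~\ref{prop:localconversion} (which itself assumes the DRIP is patient — the very thing being proved). The resolution is that we do not need the full conversion proposition: the minimality-of-$r$ argument gives us, for free, that every node awake by global round $r$ woke up spontaneously, which is exactly enough to equate $v$'s local round in global round $r$ with $r - t_v$. I would be careful to state the induction so that this circularity is avoided: the coupling $r = t_v + (\text{local round of }v)$ is derived locally from ``no transmissions before round $r$'' rather than from patience of the whole protocol.
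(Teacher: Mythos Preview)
Your proposal is correct and follows essentially the same approach as the paper: both arguments hinge on the fact that the earliest a node can transmit in \canonDRIP is its local round $\sigma+1$, and both use induction (yours phrased as a minimal-counterexample argument) on the global round to conclude that every node awake by round $r\leq\sigma$ woke up spontaneously, so its local clock is at most $\sigma$ and it cannot yet transmit. Your explicit remark about avoiding circularity with Proposition~\ref{prop:localconversion} is a nice clarification, but the underlying logic matches the paper's proof.
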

\begin{proof}
To prove the claim, we proceed by induction on the global round number $r$. In the base case, $r=0$, all nodes with wakeup tag greater than 0 are not awake, and those with wakeup tag equal to 0 start executing their local algorithm in global round 1. Thus, no node transmits in global round 0. As induction hypothesis, assume that no node transmits in global rounds $0,\ldots,k-1$ for some $k \in \{1,\ldots,\sigma\}$. We consider the behaviour of an arbitrary node $v$ in global round $k$. By the induction hypothesis, $v$ does not receive any messages in global rounds $0,\ldots,k-1$. We consider two cases:
\begin{itemize}
	\item if $v$ has wakeup tag greater than or equal to $k$, then the fact that $v$ does not receive any messages in global rounds $0,\ldots,k-1$ implies that $v$ is not awake before global round $k$. Therefore, $v$ does not start executing its local algorithm until global round $k+1$ or later, so $v$ does not transmit in global round $k$.
	\item if $v$ has wakeup tag less than $k$, then, from the description of \canonDRIP, the earliest local round in which $v$ might transmit is the $(\sigma+1)^{th}$ round of the first transmission block of phase $P_1$. In particular, it does not transmit before its local round $\sigma+1$. As its local clock value is always bounded above by the global clock value, and we assumed that $k \leq \sigma$, we conclude that $v$ does not transmit during global round $k$.
\end{itemize}
In all cases, $v$ does not transmit in round $k$, so it follows that no node transmits in global rounds $0,\ldots,k$, which completes the inductive step.
\end{proof}

An important feature of the canonical DRIP's design is that, in each phase $P_j$, every node transmits exactly once. This is because each node is placed in some equivalence class by \classifier, each equivalence class is assigned a transmission block within the phase, and each node in the equivalence class transmits in its local round $\sigma+1$ within that block. Due to offsets in wakeup times, local round $\sigma+1$ might correspond to a different local rounds at a different nodes, however the transmission block and phase number will be the same. As a result, in each phase, each node will have a chance to receive a transmission from each of its neighbours. Further, the round in which each transmission occurs relative to the start of a transmission block reveals the relative offset of the transmitting node and a neighbour. These observations are formalized in the following result.

\begin{lemma}\label{lem:transmitConversions}
	For any configuration $G$, consider the execution of the canonical DRIP for $G$, and consider any $j \geq 1$ such that $\mathcal{L}_j \neq ``terminate"$. For any $v$ in $G$, for any $h \in \{1,\ldots,2\sigma+1\}$ and for any $k \in \{1,\ldots,numClasses_{G,j}\}$, consider the $h$'th round of the $k$'th transmission block in phase $P_j$ of node $v$'s execution of the canonical DRIP, i.e., $v$'s local round $r_{j-1} + (k-1)(2\sigma+1)+h$. A neighbour $\hat{v}$ of $v$ transmits in this round if and only if 
	\begin{itemize}
		\item $h = \sigma+1+t_{\hat{v}} - t_v$, and,
		\item $\hat{v}$ transmits in transmission block $k$ of phase $P_j$ of its execution of \canonDRIP. 
	\end{itemize}
\end{lemma}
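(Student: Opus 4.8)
The plan is to prove both directions of the ``if and only if'' by a direct calculation using Proposition \ref{prop:localconversion}, after first establishing where exactly a node transmits within a phase. The key preliminary fact is that, since $\mathcal{L}_j \neq ``terminate"$, the canonical DRIP in phase $P_j$ consists of $numClasses_{G,j}$ transmission blocks of $2\sigma+1$ rounds each (followed by $\sigma$ listening rounds), and each node transmits exactly once in phase $P_j$: namely, the node whose computed value of $tBlock$ at the start of $P_j$ equals some $k'$ transmits in local round $r_{j-1} + (k'-1)(2\sigma+1) + (\sigma+1)$ and listens in every other round of the phase. I would isolate this as the starting point of the argument (it is immediate from the description of \canonDRIP).

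Next I would fix the node $v$, the offsets, and the target round. Let $\hat v$ be a neighbour of $v$ with $tBlock$-value $k'$ in phase $P_j$. By Lemma \ref{canonPatient}, \canonDRIP is a patient DRIP, so Proposition \ref{prop:localconversion} applies: local round $i$ at $v$ is the same global round as local round $i - (t_{\hat v} - t_v)$ at $\hat v$. The unique local round in which $\hat v$ transmits in phase $P_j$ is $r_{j-1} + (k'-1)(2\sigma+1) + (\sigma+1)$ on $\hat v$'s clock; translating to $v$'s clock via Proposition \ref{prop:localconversion} (both nodes share the same phase structure and same $r_{j-1}$, again because the DRIP is patient and because $\mathcal{L}_j$ — hence the block count of every earlier phase — is common to all nodes), this corresponds to $v$'s local round $r_{j-1} + (k'-1)(2\sigma+1) + (\sigma+1) + (t_{\hat v} - t_v)$. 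Setting this equal to the round named in the statement, $r_{j-1} + (k-1)(2\sigma+1) + h$, and cancelling $r_{j-1}$, gives $(k'-1)(2\sigma+1) + \sigma + 1 + (t_{\hat v}-t_v) = (k-1)(2\sigma+1) + h$. Since $|t_{\hat v} - t_v| \le \sigma$, the quantity $\sigma + 1 + (t_{\hat v} - t_v)$ lies in $\{1,\ldots,2\sigma+1\}$, and $h$ also lies in that range; a standard division-with-remainder / uniqueness argument then forces $k' = k$ and $h = \sigma + 1 + (t_{\hat v} - t_v)$, which is exactly the claimed pair of conditions. Conversely, if $k' = k$ and $h = \sigma+1+t_{\hat v}-t_v$, the same identity shows the target round coincides with $\hat v$'s unique transmission round, so $\hat v$ transmits then. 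This handles both directions simultaneously.

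The one genuine subtlety — and the step I would be most careful about — is justifying that $v$ and $\hat v$ agree on the phase boundaries $r_{0}, r_1, \ldots, r_{j-1}$ and on the block count $numClasses_{G,j}$ of each phase, so that ``transmission block $k$ of phase $P_j$'' refers to the same set of global rounds shifted only by the wakeup offset. This follows because the sequence of lists $\mathcal{L}_1, \mathcal{L}_2, \ldots$ is hard-coded identically into every node, and each $r_\ell$ is defined purely as a function of $\sigma$ and $numClasses_{G,1},\ldots,numClasses_{G,\ell}$ (read off from the lengths of $\mathcal{L}_1,\ldots,\mathcal{L}_\ell$), with no dependence on the node; combined with patience (Lemma \ref{canonPatient}), local round $r_{j-1}+1$ at $v$ and at $\hat v$ are the same global round shifted by $t_{\hat v}-t_v$, exactly as in Proposition \ref{prop:localconversion}. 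I would state this alignment explicitly before doing the arithmetic, since the whole lemma rests on it. The remaining content is the elementary bounded-range uniqueness computation sketched above.
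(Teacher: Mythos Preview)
Your argument is correct and follows essentially the same approach as the paper: both use Proposition~\ref{prop:localconversion} (enabled by Lemma~\ref{canonPatient}) to translate between the local clocks of $v$ and $\hat v$, and then exploit the bound $|t_{\hat v}-t_v|\le\sigma$ to pin down the block and the offset. The only presentational difference is the direction of the conversion and the final step: the paper converts $v$'s round into $\hat v$'s clock and does a three-range case split (before block $k$ / in block $k$ / after block $k$), whereas you convert $\hat v$'s unique phase-$P_j$ transmission round into $v$'s clock and finish with a division-with-remainder uniqueness argument. One small point to make explicit in your version: you should note that the converted $\hat v$-round cannot fall outside phase $P_j$'s transmission blocks at $\hat v$ (it could a priori land in the trailing $\sigma$ listening rounds of $P_{j-1}$ or of $P_j$), so no transmission from an adjacent phase can interfere; the paper's range analysis handles this boundary case directly, while your setup needs one sentence to rule it out.
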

\begin{proof}
	In the execution of the DRIP at node $v$, phase $P_j$ starts in local round $r_{j-1}+1$ and each transmission block consists of $2\sigma+1$ rounds. We conclude that the $h$'th round of the $k$'th transmission block in phase $P_j$ at node $v$ has local round number $r_{j-1} + (k-1)(2\sigma+1)+h$. By Proposition \ref{prop:localconversion}, the local round number at $\hat{v}$ corresponding to this round is $r_{j-1} + (k-1)(2\sigma+1) + h - (t_{\hat{v}} - t_v)$. Since $h \in \{1,\ldots,2\sigma+1\}$ and $|t_{\hat{v}} - t_v| \leq \sigma$ (by the definition of the span $\sigma$) we get that $r_{j-1}+(k-1)(2\sigma+1) + h - (t_{\hat{v}} - t_v) \in \{r_{j-1} + (k-1)(2\sigma+1)-\sigma+1,\ldots,r_{j-1}+(k-1)(2\sigma+1)+3\sigma+1\}$. It follows that the local round $r_{j-1}+(k-1)(2\sigma+1) + h - (t_{\hat{v}} - t_v)$ at $\hat{v}$ falls into one of the following three ranges:
	\begin{enumerate}
		\item$\{r_{j-1}+(k-1)(2\sigma+1)-\sigma+1,\ldots,r_{j-1}+(k-1)(2\sigma+1)\}$,
		\item $\{r_{j-1}+(k-1)(2\sigma+1)+1,\ldots,r_{j-1}+(k-1)(2\sigma+1)+2\sigma+1\}$, or 
		\item $\{r_{j-1}+(k-1)(2\sigma+1)+2\sigma+2,\ldots,r_{j-1}+(k-1)(2\sigma+1)+3\sigma+1\}$. 
	\end{enumerate}
	The local rounds in the first range are the $\sigma$ rounds preceding the start of the $k$'th transmission block, which means that they are either the last $\sigma$ rounds of the previous transmission block, or the last $\sigma$ rounds of the previous phase. In both cases, by the definition of the DRIP, $\hat{v}$ does not transmit in these rounds. The local rounds in the third range are the $\sigma$ rounds after the end of the $k$'th transmission block, which means that they are either the first $\sigma$ rounds of the next transmission block, or the $\sigma$ rounds at the end of phase $P_j$. In both cases, by the definition of the DRIP, $\hat{v}$ does not transmit in these rounds. The local rounds in the second range are the $2\sigma+1$ rounds of the $k$'th transmission block in phase $P_j$ in the execution by $\hat{v}$. 
	Moreover, node $\hat{v}$ may only transmit in the $(\sigma+1)$'th round of the transmission block (which corresponds to local round $r_{j-1}+(k-1)(2\sigma+1) + \sigma + 1$). Setting $r_{j-1} + (k-1)(2\sigma+1) + h - (t_{\hat{v}} - t_v) = r_{j-1}+(k-1)(2\sigma+1) + \sigma + 1$, we see that this is the case if and only if $h = \sigma + 1+t_{\hat{v}} - t_v$, as desired.
\end{proof}

An important relationship between \classifier and the canonical DRIP \canonDRIP is that all nodes that are placed in the same equivalence class by an iteration $j$ of \classifier have the same history up to the end of phase $P_j$ in the execution of \canonDRIP. This is because the label $v_{\TAG}$ assigned to each node $v$ in one iteration of \classifier accurately ``encodes" the history of $v$'s execution in the corresponding phase of the canonical DRIP, as we demonstrate in the next result. Further, we verify that a node $v$'s equivalence class number in \classifier is equal to the transmission block number in which it transmits in \canonDRIP.

\begin{lemma}\label{relateClassifierCanonical}
	For every $j \geq 1$ such that $\mathcal{L}_j[1] \neq ``terminate"$, the following two statements hold for each node $v$ in $G$:
	\begin{enumerate}[label=(\arabic*)]
		
		\item If $j \geq 2$, then for each local round $t$ in phase $P_{j-1}$ such that $t$ is the $b^{th}$ round within transmission block $a$ for some $a \in \{1,\ldots,numClasses_{G,j-1}\}$ and $b \in \{1,\ldots,2\sigma+1\}$: \label{histLabel}
		\begin{itemize}
			\item $\mathcal{H}_{v,\canonDRIP}[t] = (\textrm{`1'})$ if and only if there exists a triple $(a,b,1)$ in $v_{\TAG,j}$.
			\item $\mathcal{H}_{v,\canonDRIP}[t] = (*)$ if and only if there exists a triple $(a,b,*)$ in $v_{\TAG,j}$.
		\end{itemize}
		\item Node $v$ transmits in transmission block $k$ of phase $P_j$ in its execution of the canonical DRIP \canonDRIP if and only if $v_{\CLASS,j} = k$ in the execution of \classifier.\label{kClassBlocks}
	\end{enumerate}

\end{lemma}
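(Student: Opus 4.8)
The plan is to prove both statements simultaneously by induction on $j$, since statement \ref{kClassBlocks} for phase $P_j$ depends on statement \ref{histLabel} for phase $P_{j-1}$ (the history during $P_{j-1}$ is what determines which class \classifier assigns $v$ to, hence the transmission block in $P_j$), and statement \ref{histLabel} for phase $P_j$ depends on statement \ref{kClassBlocks} for phase $P_{j-1}$ (the histories recorded in $v_{\TAG,j}$ are generated by the transmissions in $P_{j-1}$, whose block numbers are governed by the class numbers $w_{\CLASS,j-1}$). So the two parts are intertwined and must be carried through the same induction.

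For the base case, consider the smallest relevant $j$. For $j=1$, statement \ref{histLabel} is vacuous (there is no phase $P_0$ with transmission blocks — $P_0$ is just the wakeup round), and statement \ref{kClassBlocks} asserts that every node transmits in block $1$ of $P_1$ iff $v_{\CLASS,1}=1$; but $\mathcal{L}_1 = \{(1,null)\}$, so $numClasses_{G,1}=1$, every node has $v_{\CLASS,1}=1$ by \initgaug, and the canonical DRIP sets the initial $tBlock=1$, so every node transmits in block $1$. This matches. For the inductive step, assume both statements hold for $j-1$ (with $j \geq 2$, and $\mathcal{L}_{j}[1] \neq ``terminate"$, which also forces $\mathcal{L}_{j-1}[1] \neq ``terminate"$ by the construction of the lists).

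For statement \ref{histLabel} at index $j$: fix a node $v$ and a local round $t$ that is the $b^{th}$ round of transmission block $a$ of phase $P_{j-1}$, i.e. $t = r_{j-2}+(a-1)(2\sigma+1)+b$. By definition $\mathcal{H}_{v,\canonDRIP}[t]$ is $(\textrm{`1'})$ exactly when precisely one neighbour transmits in round $t$, and $(*)$ exactly when at least two do. By Lemma \ref{lem:transmitConversions} applied to phase $P_{j-1}$, a neighbour $\hat v$ transmits in $v$'s local round $t$ iff $b = \sigma+1+t_{\hat v}-t_v$ and $\hat v$ transmits in block $a$ of $P_{j-1}$; by the inductive hypothesis \ref{kClassBlocks} for $j-1$, the latter is equivalent to $\hat v_{\CLASS,j-1} = a$. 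So the set of neighbours transmitting in round $t$ is exactly $\{\hat v \in N(v) : \hat v_{\CLASS,j-1} = a \text{ and } \sigma+1+t_{\hat v}-t_v = b\}$. Now I compare this with the construction of $v_{\TAG,j}$ in \partitioner (run during iteration $j-1$ of \classifier, acting on the state at the end of iteration $j-2$, i.e. on the $\CLASS$-values $\cdot_{\CLASS,j-1}$): \partitioner iterates over neighbours $w$ of $v$, and for each $w$ with $(w_{\CLASS} \neq v_{\CLASS})$ or $(t_w \neq t_v)$ it contributes the tuple $(w_{\CLASS}, \sigma+1+t_w-t_v)$, collapsing duplicates to a $*$-marked triple. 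So a triple with first two coordinates $(a,b)$ appears in $v_{\TAG,j}$ iff there is at least one such valid neighbour $w$ with $w_{\CLASS,j-1}=a$ and $\sigma+1+t_w-t_v = b$, and it is $*$-marked iff there are at least two. The one subtlety to check carefully — and this is where the argument needs care — is the exclusion clause: \partitioner drops a neighbour $w$ precisely when $w_{\CLASS,j-1}=v_{\CLASS,j-1}$ \emph{and} $t_w=t_v$; I must verify this exactly coincides with the case that $w$ transmits in the same round of the same block as $v$ itself (so that $v$, being a transmitter, hears neither a message nor a collision from $w$). Indeed, if $w_{\CLASS,j-1}=v_{\CLASS,j-1}=:k$ and $t_w=t_v$ then by IH both transmit in block $k$, and by Lemma \ref{lem:transmitConversions} (or Proposition \ref{prop:localconversion}) they transmit in the same global round, namely the $(\sigma+1)^{th}$ of block $k$; conversely if $w$ and $v$ transmit in the same global round, Lemma \ref{lem:transmitConversions} forces $t_w = t_v$ (from $h=\sigma+1$) and the same block number $k$, hence $w_{\CLASS,j-1}=k=v_{\CLASS,j-1}$ by IH. So the set of triples in $v_{\TAG,j}$ with coordinate pair $(a,b)$, counted-with-multiplicity-truncated-at-two, is in exact correspondence with the neighbours of $v$ (other than those transmitting simultaneously with $v$) that transmit in round $t$; reading off the three cases $(\textrm{`1'})$, $(*)$, $(\emptyset)$ against whether this count is $1$, $\geq 2$, or $0$ gives statement \ref{histLabel}.

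For statement \ref{kClassBlocks} at index $j$: a node $v$ transmits in block $tBlock$ of $P_j$, where $tBlock$ is computed at the start of $P_j$ by finding the unique $k$ such that $\mathcal{L}_j[k] = (oldClass_k, label_k)$ "matches" $v$'s history during $P_{j-1}$, meaning $oldClass_k$ equals $v$'s old $tBlock$ value (which, by IH \ref{kClassBlocks} for $j-1$, is $v_{\CLASS,j-1}$) and $label_k$ matches $v$'s phase-$P_{j-1}$ history via the round-by-round checks. By statement \ref{histLabel} just established, the latter matching condition is equivalent to $label_k = v_{\TAG,j}$ (as sorted concatenations of triples — I should note that both are sorted by $\prec_{hist}$, so equality of multisets of triples is equality of strings). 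Meanwhile $\mathcal{L}_j[k] = (reps_j[k]_{\CLASS,j-1}, reps_j[k]_{\TAG,j})$, so the matching condition becomes: $v_{\CLASS,j-1} = reps_j[k]_{\CLASS,j-1}$ and $v_{\TAG,j} = reps_j[k]_{\TAG,j}$. But that is exactly the condition on line \ref{checkmatch} of \PARTITION under which \classifier assigns $v$ to class $k$ (with the $\CLASS$ values there being the old ones $\cdot_{\CLASS,j-1}$ and the $\TAG$ values being the new ones $\cdot_{\TAG,j}$, which is how the iteration-$(j-1)$ run of \partitioner sets things). Since \PARTITION assigns each node to exactly one class, there is a unique such $k$, namely $k = v_{\CLASS,j}$, and $v$ sets $tBlock = v_{\CLASS,j}$. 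Hence $v$ transmits in block $v_{\CLASS,j}$, giving \ref{kClassBlocks}.

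The main obstacle I anticipate is the exclusion-clause bookkeeping described above: making airtight the claim that \partitioner drops exactly the neighbours that transmit simultaneously with $v$, and that the $*$/$1$ marking in \partitioner tracks exactly the message/collision distinction in $\mathcal{H}_{v,\canonDRIP}[t]$, including the boundary behaviour at the edges of transmission blocks (handled by Lemma \ref{lem:transmitConversions}, which guarantees no neighbour transmits in the $\sigma$ guard rounds on either side). A secondary point requiring care is keeping the indices straight: \partitioner invoked in \emph{iteration $j-1$} of \classifier reads the $\CLASS$-values $\cdot_{\CLASS,j-1}$ (state at end of iteration $j-2$) and produces $\cdot_{\TAG,j}$ and $\cdot_{\CLASS,j}$, and $\mathcal{L}_j$ is built from the class representatives chosen by that same iteration $j-1$ — so the "old class" stored in $\mathcal{L}_j$ is a $\cdot_{\CLASS,j-1}$ value while the "label" is a $\cdot_{\TAG,j}$ value. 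Provided these index conventions are tracked consistently, the rest is a direct unwinding of the definitions.
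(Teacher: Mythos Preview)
Your proposal is correct and follows essentially the same route as the paper: induction on $j$, with the base case $j=1$ verified directly (statement \ref{histLabel} vacuous, statement \ref{kClassBlocks} from $\initgaug$ and the initial $tBlock=1$), and the inductive step first establishing \ref{histLabel} via Lemma~\ref{lem:transmitConversions} plus the inductive instance of \ref{kClassBlocks}, then feeding that into the $tBlock$-matching rule to obtain \ref{kClassBlocks}. Your careful treatment of the exclusion clause (neighbours with the same class and same wakeup tag as $v$) is exactly the content of the paper's condition (iii), and your index bookkeeping for which iteration of \partitioner produces $v_{\TAG,j}$ from the values $\cdot_{\CLASS,j-1}$ matches the paper's conventions.
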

\begin{proof}
	The proof proceeds by induction on the value of $j$. 
	
	\underline{Base Case:} For the base case, consider $j=1$. 
	
	Statement \ref{histLabel} is vacuously true as $j < 2$.
	
	To prove statement \ref{kClassBlocks}, it is sufficient to prove that, for each $v$ in $G$, we have $v_{\CLASS,1}=1$ and $v$ transmits in transmission block 1 of phase $P_1$. In $\initgaug$, each node $v$ in $G$ is assigned to equivalence class 1, i.e., $v_{\CLASS,1} = 1$. Moreover, in $\initgaug$, the value of $numClasses$ is set to 1, i.e., $numClasses_{G,1} = 1$. Therefore, by the definition of \canonDRIP, phase $P_1$ has exactly one transmission block. We verify that all nodes transmit in this transmission block. For an arbitrary node $v$, its initial value for $tBlock$ before the first phase is 1. By the definition of \canonDRIP, when $j=1$, each node determines in which block it will transmit by comparing its initial value of $tBlock$ to the first entry of the tuple $\mathcal{L}_1[1] = (1,null)$. Thus, $v$ will set $tBlock=1$ and transmit during transmission block 1, as required.

	\underline{Induction Hypothesis:} Assume that statements \ref{histLabel} and \ref{kClassBlocks} hold for some $j \geq 1$ such that $\mathcal{L}_j[1] \neq ``terminate"$.

	\underline{Inductive Step:} Suppose that $\mathcal{L}_{j+1}[1] \neq ``terminate"$.

	To prove \ref{histLabel}, consider any local round $t$ in phase $P_{j}$ such that $t$ is the $b^{th}$ round in transmission block $a$ for some $a \in \{1,\ldots,numClasses_{G,j}\}$ and $b \in \{1,\ldots,2\sigma+1\}$.
	
	First, from the definition of \canonDRIP, observe that $v$ listens in its local round $t$, i.e., the $b^{th}$ round within transmission block $a$, if and only if $b \neq \sigma+1$ or $v$ doesn't transmit in transmission block $a$ of phase $P_{j}$ at all. By statement \ref{kClassBlocks} of the induction hypothesis, $v$ doesn't transmit in transmission block $a$ of phase $P_{j}$ if and only if $v_{\CLASS,j} \neq a$.
	
	Also, by Lemma \ref{lem:transmitConversions}, a node $w$ transmits in the $b^{th}$ round within transmission block $a$ of phase $P_{j}$ at node $v$ if and only if $b = \sigma+1+t_w-t_v$ and $w$ transmits in transmission block $a$ of phase $P_{j}$ of its execution of \canonDRIP. By statement \ref{kClassBlocks} of the induction hypothesis, $w$ transmits in transmission block $a$ of phase $P_{j}$ of its execution of \canonDRIP if and only if $w_{\CLASS,j} = a$.
	
	Combining the above, we get that $v$ listens in round $t$ and $w$ transmits in this round if and only if $b = \sigma+1+t_w-t_v$ and $w_{\CLASS,j} = a$ and at least one of $b \neq \sigma+1$ or $v_{\CLASS,j} \neq a$ is true. In particular, $v$ listens in round $t$ and $w$ transmits in this round if and only if all of the following hold:
	\begin{enumerate}[label=(\roman*)]
		\item $a = w_{\CLASS,j}$, \label{classA}
		\item $b = \sigma+1+t_w-t_v$, and, \label{roundB}
		\item $v_{\CLASS,j} \neq w_{\CLASS,j}$ or $t_w \neq t_v$.\label{vReceives}
	\end{enumerate}
	
	\paragraph{}
	We can now prove the two statements of \ref{histLabel}:
	\begin{itemize}
		\item From the definition of \partitioner, note that a triple $(a,b,1)$ appears in $v_{\TAG}$ if and only if both of the following hold:
		
		\begin{itemize}
			\item there is a neighbour $w$ of $v$ (i.e., an iteration of the {\bf for} loop at line \ref{neighbourLoop}) such that:
			\begin{itemize}
				\item the condition at line \ref{validTriple} is satisfied, i.e., if and only if $(w_{\CLASS} \neq v_{\CLASS})$ or $(t_w \neq t_v)$, and,
				\item the triple $(a,b,1)$ is added at line \ref{appendTriple}, i.e., $a = w_{\CLASS}$ and $b=\sigma+1+t_w-t_v$,
			\end{itemize}
			\item there is no other neighbour $\hat{w}$ of $v$ (i.e., no other iteration of the {\bf for} loop at line \ref{neighbourLoop}), such that line \ref{replaceTriple} is executed, i.e., no other neighbour $\hat{w}$ of $v$ such that the conditions on lines \ref{validTriple} and \ref{foundDuplicate} both hold. In other words, there is no node $\hat{w} \neq w$ adjacent to $v$ such that $(a = \hat{w}_{\CLASS})$ and $(b = \sigma + 1+t_{\hat{w}} - t_v)$ and at least one of $(w_{\CLASS} \neq v_{\CLASS})$ or $(t_w \neq t_v)$ holds.
		\end{itemize}
		
		In particular, a triple $(a,b,1)$ appears in $v_{\TAG}$ if and only if there is exactly one neighbour $w$ of $v$ such that conditions \ref{classA} - \ref{vReceives} all hold. We proved above that these conditions all hold for a node $w$ if and only $v$ listens in its local round $t$ and $w$ transmits during this round. Therefore, a triple $(a,b,1)$ appears in $v_{\TAG}$ if and only if $v$ listens in local round $t$ and has exactly one neighbour $w$ that transmits in this round. According to \canonDRIP, each transmission consists of the string `1', so we conclude that $(a,b,1)$ appears in $v_{\TAG}$ if and only if $\mathcal{H}_{v,\canonDRIP}[t] = (\textrm{`1'})$, as desired.

		\item From the definition of \partitioner, note that a triple $(a,b,*)$ appears in $v_{\TAG}$ if and only if both of the following hold:
		
		\begin{itemize}
			\item there is a neighbour $w$ of $v$ (i.e., an iteration of the {\bf for} loop at line \ref{neighbourLoop}) such that:
			\begin{itemize}
				\item the condition at line \ref{validTriple} is satisfied, i.e., if and only if $(w_{\CLASS} \neq v_{\CLASS})$ or $(t_w \neq t_v)$, and,
				\item the triple $(a,b,1)$ is added at line \ref{appendTriple}, i.e., $a = w_{\CLASS}$ and $b=\sigma+1+t_w-t_v$,
			\end{itemize}
			\item there is at least one other neighbour $\hat{w}$ of $v$ (i.e., at least one other iteration of the {\bf for} loop at line \ref{neighbourLoop}), such that line \ref{replaceTriple} is executed, i.e., at least one other neighbour $\hat{w}$ of $v$ such that the conditions on lines \ref{validTriple} and \ref{foundDuplicate} both hold. In other words, there is at least one node $\hat{w} \neq w$ adjacent to $v$ such that $(a = \hat{w}_{\CLASS})$ and $(b = \sigma + 1+t_{\hat{w}} - t_v)$ and at least one of $(w_{\CLASS} \neq v_{\CLASS})$ or $(t_w \neq t_v)$ holds.
		\end{itemize}
		
		In particular, a triple $(a,b,*)$ appears in $v_{\TAG}$ if and only if there are at least two neighbours of $v$ such that conditions \ref{classA} - \ref{vReceives} all hold. We proved above that these conditions all hold for a node $w$ if and only $v$ listens in its local round $t$ and $w$ transmits during this round. Therefore, a triple $(a,b,*)$ appears in $v_{\TAG}$ if and only if $v$ listens in local round $t$ and at least two neighbours transmit in this round, i.e., a collision occurs. We conclude that $(a,b,*)$ appears in $v_{\TAG}$ if and only if $\mathcal{H}_{v,\canonDRIP}[t] = (*)$, as desired.
	\end{itemize}

	To prove \ref{kClassBlocks}, consider any node $v$ in $G$. For any $k \in \{1,\ldots,numClasses_{G,j+1}\}$, denote the tuple stored in $\mathcal{L}_{j+1}[k]$ by $(oldClass_k,label_k)$. From the description of \canonDRIP, node $v$ transmits in transmission block $k \in \{1,\ldots,numClasses_{G,j+1}\}$ of phase $P_{j+1}$ in its execution of \canonDRIP if and only if  both of the following conditions are satisfied:
	\begin{enumerate}[label=(\alph*)]
		\item $oldClass_k$ is the block in which $v$ transmits in phase $P_j$.\label{oldClass}
		\item for each round $t$ in phase $P_{j}$ such that $t = r_{j-1}+(a-1)(2\sigma+1)+b$ for some $a \in \{1,\ldots,numClasses_{G,j}\}$ and $b \in \{1,\ldots,2\sigma+1\}$:\label{sameTag}
		
		\begin{itemize}
			\item If $\mathcal{H}_{v,\canonDRIP}[t] = (\textrm{`1'})$, then there exists a triple $(a,b,1)$ in $label_k$.
			\item If $\mathcal{H}_{v,\canonDRIP}[t] = (*)$, then there exists a triple $(a,b,*)$ in $label_k$.
			\item If $\mathcal{H}_{v,\canonDRIP}[t] = (\emptyset)$, then there exists no triple $(a,b,1)$ or $(a,b,*)$ in $label_k$.
		\end{itemize}
	\end{enumerate}
	By statement \ref{kClassBlocks} of the induction hypothesis, condition \ref{oldClass} is true if and only if $v_{\CLASS,j} = oldClass_k$. By statement \ref{histLabel} proven above, we have $\mathcal{H}_{v,\canonDRIP}[t] = (\textrm{`1'})$ if and only if there exists a triple $(a,b,1)$ in $v_{\TAG,j+1}$, and $\mathcal{H}_{v,\canonDRIP}[t] = (*)$ if and only if there exists a triple $(a,b,*)$ in $v_{\TAG,j+1}$. Thus, condition \ref{sameTag} is true if and only if $v_{\TAG,j+1} = label_k$. By definition, $(oldClass_k,label_k) = \mathcal{L}_{j+1}[k] = (reps_{j}[k]_{\CLASS,j},reps_{j+1}[k]_{\TAG},j+1)$, so we have shown that conditions \ref{oldClass} and \ref{sameTag} are true if and only if $v_{\CLASS,j} = reps_{j}[k]_{\CLASS,j}$ and $v_{\TAG,j+1} = reps_{j+1}[k]_{\TAG,j+1}$. In other words, the two conditions are true if and only if node $v$ is in the same equivalence class as node $reps_{j}[k]$ at the end of iteration $j-1$ of \classifier, and node $v$ is assigned the same label as $reps_{j}[k]$ during phase $P_j$, and this occurs if and only if the two conditions at line \ref{checkmatch} in the execution of \PARTITION are satisfied. The two conditions at line \ref{checkmatch} are satisfied if and only if line \ref{assignoldclass} is executed, i.e., $v_{\CLASS,j+1} = k$, as desired.
\end{proof}

The crucial relationship between \classifier and the canonical DRIP \canonDRIP is the following: two nodes are in different equivalence classes at the end of iteration $j$ of \classifier if and only if the two nodes have different histories after executing phase $P_j$ of the canonical DRIP. This relationship will form the basis of the proof of correctness of \classifier: if \classifier outputs ``Yes" for a configuration $G$, then \classifier creates at least one equivalence class that consists of exactly one node, and therefore this node has a unique history in the execution of \canonDRIP, and can be chosen as leader of $G$; conversely, if $G$ is feasible, we will be able to show that \canonDRIP elects some leader $v$ in $G$, which implies $v$ has a unique history in the execution of \canonDRIP, and therefore \classifier puts $v$ in its own equivalence class, so \classifier outputs ``Yes" for configuration $G$.

\begin{lemma}\label{lem:diffClassesHistory}
	For any two nodes $v,w$ in $G$ and any $j \geq 1$, we have $\canonHist{v}{r_{j-1}} = \canonHist{w}{r_{j-1}}$ if and only if $v_{\CLASS,j} = w_{\CLASS,j}$.
\end{lemma}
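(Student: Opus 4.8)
The plan is to prove the biconditional by induction on $j$, exploiting at each step the decomposition of $\canonHist{v}{r_j}$ as the prefix $\canonHist{v}{r_{j-1}}$ followed by the segment recording phase $P_j$ (local rounds $r_{j-1}+1,\dots,r_j$). For the base case $j=1$: since $\canonDRIP$ is patient (Lemma \ref{canonPatient}), every node wakes up spontaneously, so $\mathcal{H}_{v,\canonDRIP}[0]=(\emptyset)$ for all $v$ and hence $\canonHist{v}{r_0}=\canonHist{w}{r_0}$; on the other side, \initgaug assigns $v_{\CLASS,1}=1$ to every node, so $v_{\CLASS,1}=w_{\CLASS,1}$, and both sides hold unconditionally. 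For the inductive step (from $j$ to $j+1$), I would first dispose of the case $v_{\CLASS,j}\ne w_{\CLASS,j}$: here the induction hypothesis gives $\canonHist{v}{r_{j-1}}\ne\canonHist{w}{r_{j-1}}$, so the prefixes already differ and $\canonHist{v}{r_j}\ne\canonHist{w}{r_j}$; moreover, since the partition only refines between iterations (if $v_{\CLASS,j}\ne w_{\CLASS,j}$ then $v_{\CLASS,j'}\ne w_{\CLASS,j'}$ for all $j'>j$), we have $v_{\CLASS,j+1}\ne w_{\CLASS,j+1}$, so both sides are false. It then remains to treat the case $v_{\CLASS,j}=w_{\CLASS,j}$, where the induction hypothesis gives $\canonHist{v}{r_{j-1}}=\canonHist{w}{r_{j-1}}$, so $\canonHist{v}{r_j}=\canonHist{w}{r_j}$ if and only if the phase-$P_j$ segments of $v$ and $w$ coincide.

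The core of the argument is the claim that, for any node $u$, the phase-$P_j$ segment of $u$'s execution is exactly encoded by the label $u_{\TAG,j+1}$. By patience and Proposition \ref{prop:localconversion}, phase $P_j$ occupies the same local rounds $r_{j-1}+1,\dots,r_j$ at every node and has the globally-fixed length $numClasses_{G,j}\cdot(2\sigma+1)+\sigma$; its rounds split into the within-block rounds and the trailing $\sigma$ listening rounds. For a within-block round $t$ that is the $b$-th round of block $a$, statement \ref{histLabel} of Lemma \ref{relateClassifierCanonical} (applied with index $j+1$) says that $\mathcal{H}_{u,\canonDRIP}[t]$ equals $(\textrm{`1'})$, equals $(*)$, or equals $(\emptyset)$ exactly according to whether $u_{\TAG,j+1}$ contains the triple $(a,b,1)$, contains $(a,b,*)$, or contains no triple $(a,b,\cdot)$ at all; and for each trailing round a range computation identical to the one in the proof of Lemma \ref{lem:transmitConversions} shows that no neighbour of $u$ transmits, so the entry there is $(\emptyset)$. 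Conversely, every triple in $u_{\TAG,j+1}$ has the form $(w_{\CLASS},\,\sigma+1+t_w-t_u,\,\cdot)$ for a neighbour $w$ of $u$, hence its first coordinate lies in $\{1,\dots,numClasses_{G,j}\}$ and, since $|t_w-t_u|\le\sigma$, its second lies in $\{1,\dots,2\sigma+1\}$; so each triple corresponds to a within-block round, and $u_{\TAG,j+1}$ is precisely the list of these triples sorted by $\prec_{hist}$. Therefore the phase-$P_j$ segments of $v$ and $w$ coincide iff $v_{\TAG,j+1}=w_{\TAG,j+1}$. Combining this with the semantics of \PARTITION — two nodes are assigned the same class in iteration $j$ iff they satisfy the two equalities checked on line \ref{checkmatch}, equal old class and equal label (matching a common representative being equivalent to these equalities, as ``equal old class and equal label'' is itself an equivalence relation) — and using $v_{\CLASS,j}=w_{\CLASS,j}$, one obtains $v_{\CLASS,j+1}=w_{\CLASS,j+1}$ iff $v_{\TAG,j+1}=w_{\TAG,j+1}$ iff the phase-$P_j$ segments coincide iff $\canonHist{v}{r_j}=\canonHist{w}{r_j}$, closing the induction. (If $\mathcal{L}_j[1]=$``terminate'', phase $P_j$ is a single terminating round identical at every node, so the statement for $j+1$ is inherited directly from that for $j$.)

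The step I expect to be the main obstacle is the encoding claim: one must account for \emph{every} round of phase $P_j$, not merely the rounds in which a node hears something — the unique round in which $u$ itself transmits (history $(\emptyset)$, correctly excluded from $N_u$ by the guard on line \ref{validTriple}), the silent listening rounds inside blocks, and the $\sigma$ trailing rounds — and one must verify that this phase has the same length and occupies the same local rounds at $v$ and $w$, which is where patience of $\canonDRIP$ and Proposition \ref{prop:localconversion} are essential, together with the offset bound $|t_w-t_u|\le\sigma$ that keeps each triple's second coordinate in range. A minor wrinkle is that statement \ref{histLabel} of Lemma \ref{relateClassifierCanonical} carries the hypothesis $\mathcal{L}_{j+1}[1]\ne$``terminate''; but the encoding concerns only phase $P_j$, whose structure is independent of phase $P_{j+1}$, so it still follows verbatim from the definitions of \partitioner and $\canonDRIP$ when $\mathcal{L}_{j+1}[1]=$``terminate'' (the label $u_{\TAG,j+1}$ being well defined as soon as iteration $j$ runs, which it does because $\mathcal{L}_j[1]\ne$``terminate'').
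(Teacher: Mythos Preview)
Your proof is correct and follows essentially the same approach as the paper: induction on $j$, with the base case handled via patience of $\canonDRIP$ and \initgaug, and the inductive step reducing $v_{\CLASS,j+1}=w_{\CLASS,j+1}$ to the conjunction of $v_{\CLASS,j}=w_{\CLASS,j}$ and $v_{\TAG,j+1}=w_{\TAG,j+1}$ via the semantics of \PARTITION, then invoking statement~\ref{histLabel} of Lemma~\ref{relateClassifierCanonical} to equate label equality with equality of the phase-$P_j$ history segments. Your treatment is in fact more careful than the paper's in explicitly handling the trailing $\sigma$ listening rounds, the terminate case, and the hypothesis $\mathcal{L}_{j+1}[1]\neq\textrm{``terminate''}$ on Lemma~\ref{relateClassifierCanonical}, all of which the paper's proof silently elides.
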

\begin{proof}
	The proof proceeds by induction on $j$. For the base case, consider $j=1$. The statement is vacuously true in both directions. First, for all $v,w$ in $G$, note that \initgaug assigns all nodes to equivalence class 1, i.e., $v_{\CLASS,1} = w_{\CLASS,1} = 1$. Conversely, by Lemma \ref{canonPatient}, all nodes wake up spontaneously when executing \canonDRIP, so $\canonHist{v}{r_{0}} = \mathcal{H}_{v,\canonDRIP}[0] = (\emptyset) = \mathcal{H}_{w,\canonDRIP}[0] = \canonHist{w}{r_{0}}$.
	
	As induction hypothesis, assume that for any two nodes $v,w$ in $G$ and some $j \geq 1$, we have $\canonHist{v}{r_{j-1}} = \canonHist{w}{r_{j-1}}$ if and only if $v_{\CLASS,j} = w_{\CLASS,j}$.
	
	For the induction step, consider any two nodes $v,w$ in $G$. 
	
	Nodes $v$ and $w$ are placed in the same equivalence class $k$ by \PARTITION if and only if the two conditions at line \ref{checkmatch} in the execution of \PARTITION are satisfied for $v$ and $w$ for the same value of $k$. In other words, $v_{\CLASS,j+1} = w_{\CLASS,j+1}$ if and only if $v_{\CLASS,j} = w_{\CLASS,j}$ and $v_{\TAG,j+1} = w_{\TAG,j+1}$. By the induction hypothesis, we know that $v_{\CLASS,j} = w_{\CLASS,j}$ if and only if $\canonHist{v}{r_{j-1}} = \canonHist{w}{r_{j-1}}$. 
	
	Next, by statement \ref{histLabel} of Lemma \ref{relateClassifierCanonical}, we know that for each local round value $t$ such that $t$ is the $b^{th}$ round within transmission block $a$ of phase $P_j$, a triple $(a,b,1)$ contained in a node's label corresponds to a $(\textrm{`1'})$ in entry $t$ of the node's history, and a triple $(a,b,*)$ contained in a node's label corresponds to a $(*)$ in entry $t$ of the node's history. Thus, $v_{\TAG,j+1} = w_{\TAG,j+1}$ if and only if $\mathcal{H}_{v,\canonDRIP}[t] = \mathcal{H}_{w,\canonDRIP}[t]$ for all rounds in phase $P_j$, i.e., $\mathcal{H}_{v,\canonDRIP}[r_{j-1}+1\ldots r_j] = \mathcal{H}_{w,\canonDRIP}[r_{j-1}+1\ldots r_j]$. 
	
	Thus, we have shown that $\canonHist{v}{r_{j}} = \canonHist{w}{r_{j}}$, as required.
\end{proof}

Finally, we give a bound on the time complexity of the canonical DRIP.

\begin{lemma}\label{CanonicalComplexity}
	In the execution of the canonical DRIP \canonDRIP, each node terminates within $O(n^2\sigma)$ rounds.
\end{lemma}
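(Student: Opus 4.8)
The plan is to bound the total number of rounds in the execution of \canonDRIP by summing the lengths of all phases $P_0, P_1, \ldots$ up to termination. From the definition of \canonDRIP, phase $P_0$ consists of a single local round, and for each $j \geq 1$ with $\mathcal{L}_j[1] \neq ``terminate"$, phase $P_j$ consists of $numClasses_{G,j}$ transmission blocks of $2\sigma+1$ rounds each, followed by $\sigma$ listening rounds, for a total of $numClasses_{G,j} \cdot (2\sigma+1) + \sigma$ rounds; the final phase in which $\mathcal{L}_j[1] = ``terminate"$ contributes just one more round. Since every node of $G$ executes the same sequence of phases $P_1, P_2, \ldots$ (the phase structure is hard-coded and does not depend on the node), it suffices to bound the number of phases and the length of each.

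First I would bound the number of phases. By Corollary \ref{numClassesNonDecreasing} and the argument in Lemma \ref{ClassifierWillTerminate}, \classifier terminates within $\lceil n/2 \rceil$ iterations; by construction, the list $\mathcal{L}_j$ is set to $(``terminate")$ precisely when \classifier would have stopped at iteration $j-1$, so there are at most $O(n)$ phases with $\mathcal{L}_j[1] \neq ``terminate"$ before the terminating phase. Next, for each such phase $P_j$, the number of transmission blocks is $numClasses_{G,j} \leq n$ by Corollary \ref{numClassesNonDecreasing}, so the length of $P_j$ is at most $n(2\sigma+1) + \sigma = O(n\sigma)$. Multiplying the $O(n)$ bound on the number of phases by the $O(n\sigma)$ bound on the length of each phase, and adding the $O(1)$ contribution of phase $P_0$ and of the terminating phase, gives a total of $O(n^2\sigma)$ local rounds at every node.

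The only mild subtlety — and the step I would be most careful about — is the relationship between ``local rounds counted at a node'' and the claimed termination time. Since \canonDRIP is patient (Lemma \ref{canonPatient}), every node wakes up spontaneously at its global wakeup tag, and by Proposition \ref{prop:localconversion} all nodes are synchronized up to a fixed offset of at most $\sigma$; hence if every node terminates within $O(n^2\sigma)$ of its \emph{own} local round $0$, then every node has terminated by global round $\sigma + O(n^2\sigma) = O(n^2\sigma)$ as well. Either reading of ``within $O(n^2\sigma)$ rounds'' is therefore justified by the same computation. I expect no real obstacle here: the lemma is essentially a bookkeeping consequence of the explicit phase lengths in the definition of \canonDRIP together with the already-established bound of $\lceil n/2 \rceil$ on the number of \classifier iterations and the bound $numClasses_{G,j} \leq n$.
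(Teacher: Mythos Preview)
Your proposal is correct and follows essentially the same approach as the paper: bound the number of phases by $\lceil n/2 \rceil$ via Lemma~\ref{ClassifierWillTerminate}, bound each phase by at most $n$ transmission blocks of $2\sigma+1$ rounds plus $\sigma$ extra rounds, and multiply. The paper's proof is just a terser three-sentence version of exactly this computation; your additional remarks about the local/global round distinction are not in the paper but are harmless elaboration.
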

\begin{proof}
	By Lemma \ref{ClassifierWillTerminate} and the definition of \canonDRIP, the number of phases will be at most $\lceil n/2 \rceil$. Each phase consists of at most $n$ transmission blocks (one per equivalence class) and exactly $\sigma$ additional rounds. Each block consists of $2\sigma+1$ rounds.
\end{proof}


\subsubsection{Correctness of \classifier: ``Yes" Instances}\label{yesInstances}
In this section, we prove the correctness of \classifier in cases where it outputs ``Yes" for input configuration $G$. To do so, we construct a dedicated leader election algorithm for configuration $G$ using the canonical DRIP along with an appropriate decision function. 

	\begin{lemma}\label{YesCorrectness}
		If \classifier outputs ``Yes" for input $G$, then $G$ is a feasible configuration.
	\end{lemma}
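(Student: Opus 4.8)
The plan is to produce an explicit dedicated leader election algorithm for $G$ of the form $(\canonDRIP, f)$, where $f$ is a decision function tailored to $G$. Since \classifier outputs ``Yes'', it exits at line \ref{yescondition} during some iteration $i \in \{1,\ldots,\lceil n/2\rceil\}$, and (as it did not exit earlier) $i$ is the first iteration in which either stopping condition holds. Thus, after $\partitioner(G_{aug})$ runs in iteration $i$, there is an index $m$ such that exactly one node $v \in G_{aug}$ has $v_{\CLASS} = m$; in the subscript notation of the excerpt this reads $v_{\CLASS,i+1} = m$, while $w_{\CLASS,i+1} \neq m$ for every other node $w$ of $G$.

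First I would translate this into a statement about \canonDRIP via Lemma \ref{lem:diffClassesHistory} applied with $j = i+1$: for all $w$, $\canonHist{w}{r_i} = \canonHist{v}{r_i}$ if and only if $w_{\CLASS,i+1} = v_{\CLASS,i+1}$. Consequently the partial history $H^* := \canonHist{v}{r_i}$ is realized by $v$ and by no other node of $G$. I also need to confirm that \canonDRIP halts and to identify the round in which it does: since $i$ is the first iteration hitting a stopping condition, none of $\mathcal{L}_1,\ldots,\mathcal{L}_i$ is a ``terminate'' list, so phases $P_0,\ldots,P_i$ are executed in full and $r_i$ is a genuine local round (the last, listening, round of phase $P_i$), which makes $\canonHist{w}{r_i}$ well defined for every $w$. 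On the other hand, the ``Yes'' test at line \ref{yescondition} in iteration $i$ is precisely the second disjunct in the defining condition of $\mathcal{L}_{i+1}$, so $\mathcal{L}_{i+1}[1] = ``terminate''$; by the description of \canonDRIP, every node then terminates in phase $P_{i+1}$, i.e. $done_{w,\canonDRIP} = r_i + 1$ for every node $w$ of $G$.

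Finally I would define $f$. Because the algorithm is \emph{dedicated} to $G$, the decision function may depend on $G$, so I hard-code the string $H^*$. Given a node $w$ with input history $\canonHist{w}{done_{w,\canonDRIP}} = \canonHist{w}{r_i+1}$, let $f$ output $1$ if the length-$(r_i+1)$ prefix of that history equals $H^*$, and $0$ otherwise. By the previous paragraph this prefix equals $H^*$ exactly when $w = v$, so $f$ outputs $1$ for exactly one node; hence $(\canonDRIP, f)$ is a dedicated leader election algorithm for $G$, and $G$ is feasible.

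The hard part is only careful bookkeeping, not a genuine obstacle: one must keep the index shift between ``end of iteration $i$'' and the subscript $i+1$ straight, and verify that $\mathcal{L}_{i+1}[1] = ``terminate''$ so that \canonDRIP stops at a uniform, well-defined local round $r_i+1$ — this is what lets $f$ safely read a fully-defined prefix of each node's history. All the real content is already supplied by Lemma \ref{lem:diffClassesHistory} and by the construction of the lists $\mathcal{L}_j$.
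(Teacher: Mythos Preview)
Your argument is correct and follows essentially the same route as the paper's own proof: identify the iteration $i$ at which \classifier outputs ``Yes'', invoke Lemma~\ref{lem:diffClassesHistory} with $j=i+1$ to see that the singleton-class node $v$ has a unique history $\canonHist{v}{r_i}$, observe that $\mathcal{L}_{i+1}[1]=``terminate''$ so $done_{w,\canonDRIP}=r_i+1$ for every $w$, and then hard-code this unique history into the decision function. Your extra care in verifying that $\mathcal{L}_1,\ldots,\mathcal{L}_i$ are all non-terminate (so that $r_i$ and the histories up to it are genuinely defined) is a point the paper leaves implicit.
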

\begin{proof}
	From the specification in Algorithm \ref{classifierpseudo}, \classifier outputs ``Yes" only if, for exactly one $j \in 1,\ldots,\lceil n/2 \rceil$, there exists an $m \in \{1,\ldots,numClasses_{G,j+1}\}$ such that exactly one node $v \in G_{aug}$ has $v_{\CLASS,j+1} = m$. Let $\hat{m}$ be the smallest such $m$ and let $\hat{v}$ be the node in class $\hat{m}$. By Lemma \ref{lem:diffClassesHistory}, for all $w \neq \hat{v}$, we have $\canonHist{\hat{v}}{r_j} \neq \canonHist{w}{r_j}$. As \classifier terminates after iteration $j$, it follows that $\mathcal{L}_{j+1}$ consists of the string ``terminate", by definition. So, by the definition of \canonDRIP, all nodes terminate in their local round $r_j+1$. Thus, $done_v = r_j+1$ for all $v$ in $G$. Define a decision function $f$ as follows: set $f(\canonHist{\hat{v}}{done_{\hat{v}}}) = 1$ and set $f(\canonHist{w}{done_w}) = 0$ for all $w \neq \hat{v}$. It follows that $(\canonDRIP,f)$ solves leader election with $\hat{v}$ as the unique leader.
	 
\end{proof}

\subsubsection{Correctness of \classifier: ``No" Instances}\label{noInstances}
In this section, we prove the correctness of \classifier in cases where it outputs ``No" for input configuration $G$. It is sufficient to show that, if $G$ is feasible, then \classifier outputs ``Yes" on configuration $G$. At a high level, we prove that if a configuration $G$ is feasible, then at the start of some phase in the execution of the canonical DRIP by the nodes of $G$, there is a node with a unique history (which is sufficient to solve leader election). This fact is not immediately obvious: the fact that $G$ is feasible means that there is some dedicated leader election algorithm $\mathcal{A}$ for $G$, and the DRIP used in algorithm $\mathcal{A}$ might behave very differently than the canonical DRIP. Our proof has two main steps. First, we prove that if a configuration $G$ is feasible, then there exists a dedicated leader election algorithm for configuration $G$ whose DRIP is patient. This will allow us to restrict attention to patient DRIPs, which are much easier to reason about. Then, we prove that if there exists a dedicated leader election algorithm for configuration $G$ whose DRIP is patient, then there exists a dedicated leader election algorithm for configuration $G$ whose DRIP is the canonical DRIP.

We proceed by considering an arbitrary leader election algorithm for configuration $G$, and constructing a patient DRIP that can be used to solve leader election in configuration $G$. The idea is to construct a DRIP where each node starts its execution with a listening period of length $\sigma$ (which ensures that all nodes wake up spontaneously) and then have each node simulate the original algorithm. This simulation will begin in local round $\sigma+1$ if the node doesn't receive any messages in rounds $0,\ldots,\sigma$, or in an earlier round if it receives a message in one of the rounds $0,\ldots,\sigma$ (as it must simulate a forced wakeup).

\begin{lemma}\label{lem:existsPatient}
	For any configuration $G$, if $G$ is feasible then there exists a patient DRIP $D_{pat}$ and a decision function $f_{pat}$ such that $(D_{pat},f_{pat})$ is a dedicated leader election algorithm for $G$.
\end{lemma}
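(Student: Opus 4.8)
The plan is to take an arbitrary dedicated leader election algorithm $(\arbDRIP, f)$ for $G$ and wrap the DRIP $\arbDRIP$ inside a "preamble" of $\sigma$ listening rounds, producing a new DRIP $D_{pat}$ that is patient by construction. Concretely, $D_{pat}$ instructs each node $v$ to listen in local rounds $1, \ldots, \sigma$, and then, starting in local round $\sigma+1$, to simulate the action $\arbDRIP$ would have prescribed. The subtlety is that in the original execution a node could have been woken up by a message, so its history under $\arbDRIP$ begins with $(M)$ in local round $0$; in $D_{pat}$, since everyone now wakes up spontaneously, a node $v$ that in the original execution would have experienced a forced wakeup by message $M$ in some round must instead detect this during its preamble (it hears $M$ at some local round $\le \sigma$) and then begin simulating $\arbDRIP$ from the point where $\arbDRIP$'s history is $(M)$ at local round $0$. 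So the bookkeeping is: $D_{pat}$ maintains (as a function of the history vector) a "simulated local round" counter and a "simulated history" vector for $\arbDRIP$; during the preamble it watches for the first non-silent event and uses it to initialize the simulated history, and thereafter it advances the simulation one step per real round.

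The key steps, in order, are: (1) verify $D_{pat}$ is a well-defined DRIP — i.e., it is a function of the real history vector only (the simulated-round counter and simulated history are themselves computable from the real history), and it eventually terminates (because $\arbDRIP$ does); (2) verify $D_{pat}$ is patient — no node transmits in global rounds $0, \ldots, \sigma$, since every node's first $\sigma$ local rounds are listening rounds and every node wakes up in global round $\le \sigma$, hence is still in its preamble through global round $\sigma$; in fact one should argue by induction on the global round, exactly as in Lemma~\ref{canonPatient}, that since no node transmits early, no node is ever force-woken, so all wakeups are spontaneous in round $t_v$; (3) establish the simulation invariant: the global round in which node $v$ performs its $k$-th simulated step of $\arbDRIP$ under $D_{pat}$, and the messages it sends/hears/collides on, coincide with what happens in the original execution of $\arbDRIP$ on $G$ after a uniform time shift — more precisely, the simulated history vector of $v$ under $D_{pat}$ at the end equals $\cH_{v,\arbDRIP}[0 \ldots done_{v,\arbDRIP}]$; (4) define $f_{pat}$ to apply $f$ to the extracted simulated history, so that exactly one node outputs $1$.

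The main obstacle is step (3): arguing that the preamble does not perturb the simulated execution. One must show that the real execution of $D_{pat}$ reproduces the original execution of $\arbDRIP$ faithfully, including the timing of forced wakeups. The delicate point is that in the original execution the set of force-woken nodes and the rounds at which they wake is determined by the dynamics of $\arbDRIP$ itself, whereas in $D_{pat}$ all wakeups are spontaneous and the "forced wakeup" is only simulated. The cleanest way to handle this is to prove by induction on global rounds a statement of the form: for every global round $r$, and every node $v$, the action taken by $v$ under $D_{pat}$ in global round $r$ is the listening action if $r \le \sigma' \,$ for an appropriate per-node threshold, and otherwise equals the action $\arbDRIP$ prescribes at the corresponding step; simultaneously, the "wakeup event" $v$ detects during its preamble in $D_{pat}$ is exactly the message (or noise, but here a single transmitter so a message) that woke $v$ in the $\arbDRIP$-execution, occurring at the global-round-consistent time. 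Once this invariant is in place, correctness of $(D_{pat}, f_{pat})$ follows immediately from correctness of $(\arbDRIP, f)$. I would also remark that Proposition~\ref{prop:localconversion} is available to translate between local and global clocks once patience is established, which streamlines the indexing in steps (2)–(3).
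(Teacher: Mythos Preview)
Your overall strategy matches the paper's: prepend a listening preamble to an arbitrary leader-election DRIP $D$, argue patience by induction on global rounds, then prove a round-by-round simulation invariant showing that the $D_{pat}$-execution is a uniform $\sigma$-shift of the $D$-execution, and pull the decision function back through the shift. That is exactly the skeleton of the paper's proof, including the three-part induction you sketch in step~(3).

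There is, however, a real gap in your definition of $D_{pat}$. You first say each node listens in local rounds $1,\ldots,\sigma$ and only begins simulating $D$ in local round $\sigma+1$; later you say the node watches for the first non-silent event during the preamble ``and thereafter advances the simulation one step per real round,'' and still later you speak of a ``per-node threshold $\sigma'$.'' These descriptions are not consistent, and the first one is wrong. If the preamble is always exactly $\sigma$ local rounds, then a node $v$ that in the original execution was force-woken at some global round $r'<t_v$ will, under your $D_{pat}$, begin its simulated round~1 at global round $t_v+\sigma+1$ rather than $r'+\sigma+1$. The uniform $\sigma$-shift you need in step~(3) then fails: $v$'s simulated transmissions arrive late relative to its neighbours', and the simulated histories diverge from the originals. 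A two-node path with $t_a=0$, $t_b=2$, and $D$ transmitting in local round~1 already exhibits this.

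The fix, which the paper makes explicit, is that the preamble must end at the first received message if that happens before local round $\sigma$: set $s_w=\min\{\sigma,\mathit{rcv}_w\}$ and begin executing $D$ in local round $s_w+1$, feeding $D$ the history suffix starting at index $s_w$. With this variable-length preamble the global $\sigma$-shift holds for every node (spontaneously woken or not), and your inductive invariant in step~(3) goes through. Your phrase ``per-node threshold'' suggests you sensed this, but the definition as written needs to be amended before the proof can be completed.
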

\begin{proof}
	Consider an arbitrary configuration $G$ and suppose that there exists a DRIP $D$ and a decision function $f$ such that $(D,f)$ is a dedicated leader election algorithm for configuration $G$. We define a new DRIP $D_{pat}$ executed locally at each node $w$: listen for each of the first $s_w = \min\{\sigma,rcv_w\}$ rounds after wakeup, where $rcv_w$ is the first local round in which a message is received, and then execute $D$ starting in round $s_w+1$. More formally, for each round $i > s_w$, the action performed by $w$ in its local round $i$ is given by $D_{pat}(\cH_{w,D_{pat}}[0 \ldots i-1]) = D(\cH_{w,D_{pat}}[s_w \ldots i-1])$.
	
	We also define a new decision function $f_{pat}$ that, when given the entire history of a node $w$ in the execution of $D_{pat}$, evaluates $f$ with the suffix of $w$'s history starting from round $s_w$. Formally, $f_{pat}(\cH_{w,D_{pat}}[0\ldots done_{w,D_{pat}}]) = f(\cH_{w,D_{pat}}[s_w \ldots done_{w,D_{pat}}])$. 
	
	First, we prove that $D_{pat}$ is a patient DRIP, i.e., that no node transmits in global rounds $0,\ldots,\sigma$.
	
	\begin{claim}\label{DpatPatient}
		In the execution of $D_{pat}$ by all nodes in $G$, no node transmits in any of the global rounds $0,\ldots,\sigma$. Equivalently, each node in $G$ wakes up spontaneously in the execution of $D_{pat}$.
	\end{claim}
		To prove the claim, we proceed by induction on the global round number $r$. In the base case, $r=0$, all nodes with wakeup tag greater than 0 are not awake, and those with wakeup tag equal to 0 start executing their local algorithm in global round 1. Thus, no node transmits in global round 0. As induction hypothesis, assume that no node transmits in global rounds $0,\ldots,k-1$ for some $k \in \{1,\ldots,\sigma\}$. We consider the behaviour of an arbitrary node $v$ in global round $k$. By the induction hypothesis, $v$ does not receive any messages in global rounds $0,\ldots,k-1$. We consider two cases:
	\begin{itemize}
		\item if $v$ has wakeup tag greater than or equal to $k$, then the fact that $v$ does not receive any messages in global rounds $0,\ldots,k-1$ implies that $v$ is not awake before global round $k$. Therefore, $v$ does not start executing its local algorithm until global round $k+1$ or later, so $v$ does not transmit in global round $k$.
		\item if $v$ has wakeup tag less than $k$, then, from the description of $D_{pat}$, node $v$ listens in global round $k$ as it has not received a message before global round $k$ and its local clock is less than $\sigma+1$. To justify this last fact, note that a local clock value is always bounded above by the global clock value, which in this case is $k \leq \sigma$.
	\end{itemize}
	In all cases, $v$ does not transmit in round $k$, so it follows that no node transmits in global rounds $0,\ldots,k$, which completes the inductive step. This completes the proof of Claim \ref{DpatPatient}.
	
	Next, we set out to prove that $(D_{pat},f_{pat})$ solves leader election when executed by configuration $G$. 
	We first prove a claim that essentially shows that the local experience of each node $w$ is the same when executing $D$ and $D_{pat}$, as long as we ignore the first $s_w$ rounds of $w$'s execution of $D_{pat}$. In other words, we show that, for each node $w$, the history of $w$'s execution of $D_{pat}$ starting at local round $s_w$ is identical to the history of $w$'s execution of $D$ starting at local round 0. This fact will later be used to show that, at each node, $f_{pat}$ outputs the same value as $f$, which we assumed evaluates to 1 for exactly one node in $G$. 
	
	\begin{claim}\label{DpatShift}
			Consider an arbitrary configuration $G$. Let $\psi$ be the execution of DRIP $D$ by the nodes of $G$, and let $\psi_{pat}$ be the execution of DRIP $D_{pat}$ by the nodes of $G$. For any $r \geq 0$:
			\begin{enumerate}[label=(\arabic*)]
				\item for all nodes $x$ in $G$, node $x$ transmits in global round $r$ of $\psi$ if and only if $x$ transmits in global round $r+\sigma$ of $\psi_{pat}$, and,\label{bulletShiftSigma}
				\item for all nodes $x$ in $G$, node $x$ wakes up in global round $r$ of $\psi$ if and only if, in $\psi_{pat}$, the value of $s_x$ is $r - t_x + \sigma$, and,\label{bulletWakeup}
				\item for all nodes $x$ in $G$, if $x$ is awake in global round $r$ of $\psi$, then $\cH_{x,D_{pat}}[s_x\ldots s_x+r^{(x)}] = \cH_{x,D}[0\ldots r^{(x)}]$ where $r^{(x)}$ is the local round at $x$ corresponding to global round $r$. \label{bulletHistory}
			\end{enumerate}
	\end{claim}
	We proceed by induction on the global round number $r$. For the base case, consider $r=0$.
	
	To prove \ref{bulletShiftSigma}, note that no node transmits in global round 0 of $\psi$. Further, by Claim \ref{DpatPatient}, no node transmits in global round $\sigma$ in $\psi_{pat}$. Therefore, both directions of the biconditional statement are vacuously true.
	
	To prove \ref{bulletWakeup}, consider an arbitrary node $w$ in $G$. First, suppose that $w$ wakes up in round 0 of $\psi$. It follows that $w$'s wakeup tag $t_w$ is 0: node $w$'s wakeup is not forced since no node transmits in global round 0 (as no node wakes up before round 0, and, by the definition of the model, no node transmits in its wakeup round). Since $t_w$ is 0, node $w$ wakes up spontaneously in global round 0 of $\psi_{pat}$ as well. By Claim \ref{DpatPatient}, no node transmits in global rounds $0,\ldots,\sigma$, so it follows that, in $\psi_{pat}$, the value of $s_w$ is $\min\{\sigma,rcv_w\} = \sigma$. Thus, $s_w = r - t_w + \sigma$, as required. Next, for the converse direction, suppose that $s_w = r - t_w + \sigma$ in $\psi_{pat}$. As $r=0$ and $D_{pat}$ is a patient DRIP, it follows that the global round corresponding to $w$'s local round $s_w$ in $\psi_{pat}$ is $s_w + t_w = \sigma$. By Claim \ref{DpatPatient}, no node transmits in global round $\sigma$ of $\psi_{pat}$, which implies that $s_w \neq rcv_w$, i.e., $s_w = \sigma$. Setting $\sigma = r - t_w + \sigma$, it follows that $t_w = 0$, which implies that $w$ wakes up in global round 0 of $\psi$, as required.

	To prove \ref{bulletHistory}, consider an arbitrary node $w$ in $G$. Suppose that node $w$ is awake in global round 0 of $\psi$. As no node wakes up before global round 0, it follows that $w$ wakes up in this round. Further, as no node transmits in the round it wakes up, node $w$ does not receive a message in global round 0 (i.e., its local round 0), so it follows that $\cH_{w,D}[0] = \emptyset$ and $t_w = 0$. As $t_w = 0$, node $w$ wakes up spontaneously in global round 0 of $\psi_{pat}$. By \ref{bulletWakeup}, in $\psi_{pat}$, the value of $s_w$ is $r-t_w+\sigma = \sigma$. But, as $w$ wakes up in global round 0 in $\psi_{pat}$, it follows that local round $\sigma$ at node $w$ is global round $\sigma$ in $\psi_{pat}$, and, by Claim \ref{DpatPatient}, no node transmits in global round $\sigma$ of $\psi_{pat}$. Therefore, $w$ does not receive a message in local round $\sigma$ in $\psi_{pat}$, and it follows that $\cH_{w,D_{pat}}[s_w] = \cH_{w,D_{pat}}[\sigma] = \emptyset = \cH_{w,D}[0]$, as required.
	
	As induction hypothesis, assume that the three statements of the claim hold for some $r \geq 0$. We proceed to prove the induction step for the three statements of the claim.
	
	To prove \ref{bulletShiftSigma}, consider an arbitrary node $w$ in $G$. We consider two cases:
    \begin{itemize}
    	\item Suppose that $w$ does not wake up in any of the global rounds $0,\ldots,r$ of $\psi$. Note that this means that $w$ does not transmit in global round $r+1$ of $\psi$ (as $r+1$ is its earliest possible wakeup round and no node transmits in its wakeup round) so one direction of the biconditional statement holds vacuously. For the other direction, note that if $w$ does not wake up in any of the global rounds $0,\ldots,r$, then by statement \ref{bulletWakeup} of the induction hypothesis, it follows that $s_{w} > r - t_{w} + \sigma$, i.e., $s_{w} \geq (r+1) - t_{w} + \sigma$. By the definition of $D_{pat}$, node $w$ listens in the $s_{w}$ rounds after wakeup. In particular, this means that $w$ does not transmit in its local round $(r+1)-t_{w} + \sigma$. But, by Claim \ref{DpatPatient}, node $w$ wakes up spontaneously, which occurs in global round $t_{w}$ by the definition of the model, so its local round $(r+1)-t_{w} + \sigma$ corresponds to global round $r+1+\sigma$. Thus, $w$ does not transmit in global round $r+1+\sigma$ of $\psi_{pat}$, as required.
      	\item Suppose that $w$ wakes up in some global round $r' \in \{0,\ldots,r\}$ of $\psi$. By statement \ref{bulletWakeup} of the induction hypothesis, it follows that $s_{w} = r' - t_{w} + \sigma$. Also, by statement \ref{bulletHistory} of the induction hypothesis, it follows that $\cH_{w,D_{pat}}[s_{w}\ldots s_{w}+r^{(w)}] = \cH_{w,D}[0\ldots r^{(w)}]$, where $r^{(w)}$ is the local round at $w$ corresponding to global round $r$ of $\psi$. By the definition of $D_{pat}$ and the two preceding facts, we get that
      	\begin{align*}
      	D_{pat}(\cH_{w,D_{pat}}[0\ldots r-t_{w}+\sigma]) & = D(\cH_{w,D_{pat}}[s_{w} \ldots r-t_{w}+\sigma])\\
      	& = D(\cH_{w,D_{pat}}[s_{w}\ldots r'-t_{w}+\sigma + (r-r')])\\
      	& = D(\cH_{w,D_{pat}}[s_{w}\ldots s_{w} + (r-r')])\\
      	& = D(\cH_{w,D}[0\ldots (r-r')])
      	\end{align*}
      	In other words, the above equality means that the action performed by $w$ in local round $r-t_{w} + \sigma + 1$ in its execution of DRIP $D_{pat}$ is equal to the action performed by $w$ in local round $r-r'+1$ in its execution of DRIP $D$. As $D_{pat}$ is a patient DRIP, we know that $w$ wakes up spontaneously, i.e., in global round $t_{w}$ of $\psi_{pat}$, and, by the definition of $r'$, we know that $w$ wakes up in global round $r'$ of $\psi$. So, translating from local rounds to global rounds, the above statement is equivalent to the following: the action performed by $w$ in global round $r + \sigma + 1$ in its execution of DRIP $D_{pat}$ is equal to the action performed by $w$ in global round $r+1$ in its execution of DRIP $D$. This implies the desired result.
    \end{itemize} 
	As the statement holds in both cases, this concludes the proof of statement \ref{bulletShiftSigma}.
	
	To prove statement \ref{bulletWakeup}, consider an arbitrary node $w$ in $G$. There are several cases to consider:
	\begin{itemize}
		\item Suppose that node $w$ does not wake up in global round $r+1$ of $\psi$. There are two possibilities:
		\begin{itemize}
			\item {\bf $w$ wakes up in some global round $r' < r+1$ of $\psi$.} By statement \ref{bulletWakeup} of the induction hypothesis, the value of $s_w$ is $r' - t_w + \sigma$, which does not equal $(r+1) - t_w + \sigma$, as desired.
			\item {\bf $w$ wakes up after global round $r+1$ of $\psi$.} It follows that $r+1 < t_w$ and that node $w$ does not receive a message in global round $r+1$ of $\psi$. To obtain a contradiction, assume that $s_w = (r+1)-t_w+\sigma$ in $\psi_{pat}$. Recall that $s_w$ is defined as $\min\{\sigma,rcv_w\}$, where $rcv_w$ is the first round in which $w$ receives a message. The fact that $r+1 < t_w$ rules out the possibility that $s_w = \sigma$. The remaining possibility is that $s_w = rcv_w$, i.e., $w$ receives a message in local round $s_w = (r+1)-t_w+\sigma$ of its execution of $D_{pat}$. We obtain a contradiction by showing that this implies that $w$ receives a message in global round $r+1$ of $\psi$. Indeed, as $D_{pat}$ is a patient DRIP, $w$ wakes up spontaneously in round $t_w$, so $w$'s local round $s_w$ corresponds to global round $s_w + t_w = (r+1) + \sigma$ in $\psi_{pat}$. As $w$ receives a message in local round $s_w$, it follows that $w$ listens and has exactly one neighbour in $G$ that transmits in global round $(r+1)+\sigma$ of $\psi_{pat}$. By statement \ref{bulletShiftSigma} applied to $w$ and all of its neighbours, it follows that node $w$ listens and has exactly one neighbour in $G$ that transmits in global round $r+1$ of $\psi$, so $w$ receives a message in global round $r+1$ of $\psi$. As this contradiction was reached under the assumption that $s_w = (r+1)-t_w+\sigma$ in $\psi_{pat}$, it follows that $s_w \neq (r+1)-t_w+\sigma$, as desired.
		\end{itemize}
		\item Suppose that node $w$ wakes up in global round $r+1$ of $\psi$. There are two possibilities:
		\begin{itemize}
			\item {\bf $w$ wakes up spontaneously in global round $r+1$ of $\psi$.} Then, by definition, $t_w = r+1$, and $w$ does not receive a message in any of the global rounds $0,\ldots,r+1$ of $\psi$. By the induction hypothesis and the proof above, statement \ref{bulletShiftSigma} holds for $w$ and all neighbours of $w$ for each of the global rounds $0,\ldots,r+1$ of $\psi$, so it follows that $w$ does not receive a message in any of the global rounds $\sigma,\ldots,r+1+\sigma$ of $\psi_{pat}$. By Claim \ref{DpatPatient}, no nodes transmit in global rounds $0,\ldots,\sigma$ of $\psi_{pat}$, so, along with the preceding fact, we have that $w$ does not receive a message in any of the global rounds up to and including $r+1+\sigma$. As $D_{pat}$ is a patient DRIP, $w$ wakes up in global round $t_w$ of $\psi_{pat}$, so global round $r+1+\sigma$ corresponds to $w$'s local round $r+1+\sigma-t_w = \sigma$. Thus, $w$ does not receive a message in any round up to and including its local round $\sigma$, which implies that $s_w = \min\{\sigma,rcv_w\}=\sigma = (r+1)-t_w+\sigma$, as desired.
			\item {\bf $w$ wakes up due to receiving a message from a neighbour in global round $r+1$ of $\psi$.} 
			As $w$ did not spontaneously wake up before round $r+1$, it follows that $t_w \geq r+1$. Also, as $w$ wakes up in global round $r+1$, we know that $w$ does not receive a message in any of the global rounds $0,\ldots,r$ of $\psi$, and receives a message in global round $r+1$ of $\psi$. By the induction hypothesis and the proof above, statement \ref{bulletShiftSigma} holds for $w$ and all neighbours of $w$ for each of the global rounds $0,\ldots,r+1$ of $\psi$, so it follows that $w$ does not receive a message in any of the global rounds $\sigma,\ldots,r+\sigma$ of $\psi_{pat}$, and receives a message in global round $r+1+\sigma$ of $\psi_{pat}$. By Claim \ref{DpatPatient}, no nodes transmit in global rounds $0,\ldots,\sigma$ of $\psi_{pat}$, so, along with the preceding fact, we have that $w$ does not receive a message in any of the global rounds up to and including $r+\sigma$, and receives a message in global round $r+1+\sigma$ of $\psi_{pat}$.
		 As $D_{pat}$ is a patient DRIP, $w$ wakes up spontaneously in global round $t_w$ of $\psi_{pat}$, so global round $r+1+\sigma$ corresponds to $w$'s local round $r+1+\sigma-t_w$. In particular, we have shown that $w$'s local round $r+1+\sigma-t_w$ is the first round in which $w$ receives a message in $\psi_{pat}$, i.e., $rcv_w = r+1+\sigma-t_w$. As $t_w \geq r+1$, it follows that $rcv_w = r+1+\sigma-t_w \leq \sigma$, so $\min\{\sigma,rcv_w\} = rcv_w$. Thus, $s_w = rcv_w = (r+1)-t_w+\sigma$, as desired.
		\end{itemize}
	\end{itemize}

This concludes the proof of statement \ref{bulletWakeup}. 

To prove statement \ref{bulletHistory}, consider an arbitrary node $w$ in $G$. Suppose that $w$ is awake in global round $r+1$ of $\psi$. Let $r' \leq r+1$ be the global round in $\psi$ in which $w$ wakes up. It follows that global round $r+1$ corresponds to local round $r+1-r'$ at node $w$. So, we must prove that $\cH_{w,D_{pat}}[s_w\ldots s_w+(r+1-r')] = \cH_{w,D}[0\ldots r+1-r']$. In fact, it suffices to show that $\cH_{w,D_{pat}}[s_w+(r+1-r')] = \cH_{w,D}[r+1-r']$, as we already know that the remainder of the entries are equal, either due to statement \ref{bulletHistory} of the induction hypothesis, or in the case that $r'=r+1$. In other words, we must show that the history of $w$ in its local round $r+1-r'$ of $\psi$ is the same as the history of $w$ in its local round $s_w+(r+1-r')$ of $\psi_{pat}$. To do so, we first state this equivalently with respect to global rounds. As node $w$ wakes up in round $r'$ of $\psi$, it follows that local round $r+1-r'$ at node $w$ corresponds to global round $r+1$ of $\psi$. As $D_{pat}$ is a patient DRIP, node $w$ wakes up spontaneously in round $t_w$ of $\psi_{pat}$, so local round $s_w+(r+1-r')$ at node $w$ corresponds to global round $s_w+(r+1-r')+t_w$ of $\psi_{pat}$. However, by statement \ref{bulletWakeup}, the value of $s_w$ is $r' - t_w + \sigma$, so $s_w+(r+1-r')+t_w = r+1+\sigma$. So, equivalently, we must prove that the history of $w$ in global round $r+1$ of $\psi$ is the same as the history of $w$ in global round $r+1+\sigma$ of $\psi_{pat}$. However, this statement is seen to be true by applying statement \ref{bulletShiftSigma} to $w$ and all of $w$'s neighbours. This concludes the proof of statement \ref{bulletHistory}.



	As the three statements of the Claim have been proven, the induction step is complete, which concludes the proof of Claim \ref{DpatShift}.

	Next, from statement \ref{bulletHistory} of Claim \ref{DpatShift} with $x = w$ and $r^{(w)} = done_{w,D}$, we get that
\begin{equation}\label{sameHistD}
\cH_{w,D}[0\ldots done_{w,D}] = \cH_{w,D_{pat}}[s_w\ldots s_w + done_{w,D}]
\end{equation}
 for any node $w$ in $G$. Note that $s_w + done_{w,D} = done_{w,D_{pat}}$: in the execution of $D$ by all nodes, $done_{w,D}$ is the first round in which the DRIP $D$ outputs $terminate$, and since $D_{pat}$ is executing $D$ starting in round $s_w$ and the history is the same up to round $s_w + done_{w,D}$, the DRIP $D_{pat}$ will first output $terminate$ in round $s_w + done_{w,D}$. Therefore, it follows from equation (\ref{sameHistD}) that
 \begin{equation}\label{sameHistDpat}
\cH_{w,D}[0\ldots done_{w,D}] = \cH_{w,D_{pat}}[s_w\ldots done_{w,D_{pat}}]
\end{equation} 
for any node $w$ in $G$.
	
	  
	  Using equation (\ref{sameHistDpat}) along with the definition of $f_{pat}$, it follows that 
	  \begin{align*}
	  f_{pat}(\cH_{w,D_{pat}}[0\ldots done_{w,D_{pat}}]) 
	  & = f(\cH_{w,D_{pat}}[s_w\ldots done_{w,D_{pat}}])\\
	  & = f(\cH_{w,D}[0\ldots done_{w,D}])
	  \end{align*}
	  for an arbitrary node $w$. As $f(\cH_{w,D}[0\ldots done_{w,D}]) = 1$ for exactly one node $w$ in $G$, it follows that $f_{pat}(\cH_{w,D_{pat}}[0\ldots done_{w,D_{pat}}]) = 1$ for exactly one node $w$ in $G$, which completes the proof that $(D_{pat},f_{pat})$ is a leader election algorithm for configuration $G$.
	\end{proof}


Lemma \ref{lem:existsPatient} shows that if a configuration $G$ is feasible, then there exists a patient DRIP that can be used to solve leader election in $G$. The main goal of the remainder of the proof is to consider any patient DRIP used in a leader election algorithm for a configuration $G$, and show that the canonical DRIP can do no worse at breaking symmetry between nodes in $G$. In particular, if two nodes have different histories at some time during the execution of an arbitrary patient DRIP, then they will have different histories at the start of some phase during the execution of the canonical DRIP.

The following result will be helpful in proving that two nodes $v$ and $w$ have different histories in the execution of the canonical DRIP. In particular, if a neighbour $\hat{v}$ of $v$ transmits in $v$'s local round $r$ and a neighbour $\hat{w}$ of $w$ transmits in $w$'s local round $r' \neq r$, then this will continue to happen in all future phases. Essentially, once a neighbour of $v$ and a neighbour of $w$ have been ``separated" with respect to when they transmit, then they ``stay separated". 

\begin{lemma}\label{lem:staySeparate}
	Consider any nodes $v,w$ in $G$, let $\hat{v}$ be any neighbour of $v$, and let $\hat{w}$ be any neighbour of $w$. Suppose that there exists a $j' \geq 1$ such that, in the execution of the canonical DRIP, the local round in phase $P_{j'}$ at node $w$ in which $\hat{w}$ transmits is different than the local round in phase $P_{j'}$ at node $v$ in which $\hat{v}$ transmits. Then, for all $j \geq j'$, in the execution of the canonical DRIP, the local round in phase $P_{j}$ at node $w$ in which $\hat{w}$ transmits is different than the local round in phase $P_{j}$ at node $v$ in which $\hat{v}$ transmits.
\end{lemma}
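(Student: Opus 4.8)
The plan is to convert the statement ``$\hat{v}$ and $\hat{w}$ transmit in different local rounds of phase $P_j$'' into a purely arithmetic condition on the class numbers $\hat{v}_{\CLASS,j}$ and $\hat{w}_{\CLASS,j}$, and then apply the monotonicity of the class partition. First I would pin down the exact local round in which each neighbour transmits. By Lemma~\ref{lem:transmitConversions} together with statement~\ref{kClassBlocks} of Lemma~\ref{relateClassifierCanonical}, in any phase $P_j$ with $\mathcal{L}_j[1] \neq ``terminate"$ the neighbour $\hat{v}$ of $v$ transmits in exactly one round of $v$'s execution of phase $P_j$, namely local round $r_{j-1} + (\hat{v}_{\CLASS,j}-1)(2\sigma+1) + \sigma + 1 + t_{\hat{v}} - t_v$; symmetrically, $\hat{w}$ transmits in $w$'s local round $r_{j-1} + (\hat{w}_{\CLASS,j}-1)(2\sigma+1) + \sigma + 1 + t_{\hat{w}} - t_w$. (If $\mathcal{L}_j[1] = ``terminate"$ there is no such round and nothing to prove; note that the hypothesis of the lemma forces $\mathcal{L}_{j'}[1] \neq ``terminate"$, and I would restrict attention to phases that actually contain transmission blocks.)

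Subtracting the two expressions, the common summand $r_{j-1}+\sigma+1$ cancels, so the two transmission rounds are distinct in phase $P_j$ if and only if
\[
\Delta_j \;:=\; (2\sigma+1)\bigl(\hat{w}_{\CLASS,j} - \hat{v}_{\CLASS,j}\bigr) + c \;\neq\; 0,
\qquad c := (t_{\hat{w}}-t_w) - (t_{\hat{v}}-t_v).
\]
Here $c$ is a fixed combination of wakeup tags and hence is \emph{independent of $j$}, and by the definition of the span $\sigma$ we have $|c| \leq |t_{\hat{w}}-t_w| + |t_{\hat{v}}-t_v| \leq 2\sigma < 2\sigma+1$. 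Consequently, if $\hat{v}_{\CLASS,j} \neq \hat{w}_{\CLASS,j}$ then the first summand of $\Delta_j$ has absolute value at least $2\sigma+1$, which dominates $|c|$, so $\Delta_j \neq 0$; and if $\hat{v}_{\CLASS,j} = \hat{w}_{\CLASS,j}$ then $\Delta_j = c$. In short, $\Delta_j = 0$ if and only if $\hat{v}_{\CLASS,j} = \hat{w}_{\CLASS,j}$ \emph{and} $c = 0$.

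It then remains to propagate the hypothesis $\Delta_{j'} \neq 0$ forward in $j$. If $c \neq 0$, the characterization above immediately gives $\Delta_j \neq 0$ for \emph{every} $j$, in particular for every $j \geq j'$. If $c = 0$, then $\Delta_{j'} \neq 0$ forces $\hat{v}_{\CLASS,j'} \neq \hat{w}_{\CLASS,j'}$, and by the Observation stated just before Corollary~\ref{numClassesNonDecreasing} (once two nodes are separated into different classes they stay separated) we get $\hat{v}_{\CLASS,j} \neq \hat{w}_{\CLASS,j}$ for all $j \geq j'$, hence $\Delta_j \neq 0$ for all $j \geq j'$. In either case the two transmission rounds remain distinct in every phase $P_j$ with $j \geq j'$, which is the claim.

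The only step requiring genuine care is the first one: reading the explicit transmission round correctly off Lemmas~\ref{lem:transmitConversions} and~\ref{relateClassifierCanonical}, and handling the boundary case in which the canonical DRIP has already terminated. Once that bookkeeping is set up, the rest is just the estimate $|c| \leq 2\sigma < 2\sigma+1$ combined with the fact that the partition only ever refines.
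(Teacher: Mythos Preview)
Your proof is correct and takes essentially the same approach as the paper: both arguments identify the transmission round via Lemma~\ref{lem:transmitConversions} and statement~\ref{kClassBlocks} of Lemma~\ref{relateClassifierCanonical}, then split into the case where the within-block offsets differ (your $c\neq 0$, the paper's $h_v\neq h_w$) versus the case where the transmission blocks differ (your $c=0$, the paper's $k_v\neq k_w$), and in the latter case invoke the refinement property of the class partition. Your arithmetic packaging via $\Delta_j$ and the bound $|c|<2\sigma+1$ is slightly more compact, and your direct appeal to the Observation preceding Corollary~\ref{numClassesNonDecreasing} is a shorter route than the paper's detour through Lemma~\ref{lem:diffClassesHistory}, but the underlying ideas are identical.
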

\begin{proof}
	Consider any $j' \geq 1$ such that, in the execution of $\canonDRIP$, the local round in phase $P_{j'}$ at node $w$ in which $\hat{w}$ transmits is different than the local round in phase $P_{j'}$ at node $v$ in which $\hat{v}$ transmits. Define $h_v,k_v$ such that $\hat{v}$ transmits in the $h_v$'th round of the $k_v$'th transmission block in phase $P_{j'}$ of node $v$'s execution of the canonical DRIP, and define $h_w,k_w$ such that $\hat{w}$ transmits in the $h_w$'th round of the $k_w$'th transmission block in phase $P_{j'}$ of node $w$'s execution of the canonical DRIP. By the choice of $j'$, it must be the case that $h_v \neq h_w$ or $k_v \neq k_w$. 
	
	First, suppose that $k_v \neq k_w$. Therefore, there are two different equivalence classes $k_v,k_w$ with representatives $reps_{j'}[k_v]$ and $reps_{j'}[k_w]$, respectively. By Lemma \ref{lem:diffClassesHistory}, it follows that $\canonHist{reps_{j'}[k_v]}{r_{j'-1}} \neq \canonHist{reps_{j'}[k_w]}{r_{j'-1}}$. By statement \ref{kClassBlocks} of Lemma \ref{relateClassifierCanonical} and the definitions of $k_v$ and $k_w$, we know that $\canonHist{\hat{v}}{r_{j'-1}} = \canonHist{reps_{j'}[k_v]}{r_{j'-1}}$ and $\canonHist{\hat{w}}{r_{j'-1}} = \canonHist{reps_{j'}[k_w]}{r_{j'-1}}$, so it follows that $\canonHist{\hat{v}}{r_{j'-1}} \neq \canonHist{\hat{w}}{r_{j'-1}}$. For any $j \geq j'$, the fact that $\canonHist{\hat{v}}{r_{j'-1}} \neq \canonHist{\hat{w}}{r_{j'-1}}$ implies that $\canonHist{\hat{v}}{r_{j-1}} \neq \canonHist{\hat{w}}{r_{j-1}}$, since the prefixes of length $r_{j'-1}$ are different. So there cannot be a single value of $k$ such that $\canonHist{\hat{v}}{r_{j-1}} = \canonHist{reps_{j}[k]}{r_{j-1}}$ and $\canonHist{\hat{w}}{r_{j-1}} = \canonHist{reps_{j}[k]}{r_{j-1}}$, which means that $\hat{v}$ and $\hat{w}$ transmit in different transmission blocks of phase $P_j$ of the executions of nodes $v$ and $w$, respectively, and thus their transmissions cannot occur in the same local rounds at $v$ and $w$, respectively.
	
	Next, suppose that $h_v \neq h_w$. By Lemma \ref{lem:transmitConversions}, $h_v = t_{\hat{v}} - t_v + \sigma+1$ and $h_w = t_{\hat{w}} - t_w + \sigma+1$. In particular, note that the values of $h_v$ and $h_w$  depend only on the relative wakeup times of $v,\hat{v}$ and $w,\hat{w}$. In other words, every time $\hat{v}$ transmits, it will happen in local round $h_v$ at $v$ within some transmission block of some phase, and every time $\hat{w}$ transmits, it will happen in local round $h_w$ at $w$ within some transmission block of some phase. So, even if the transmissions by $\hat{v}$ and $\hat{w}$ in phase $P_{j}$ occur in the same transmission block locally at nodes $v$ and $w$, respectively, the fact that $h_v \neq h_w$ means that they do not have the same offset from the start of the transmission block, so they do not occur in the same local rounds at $v$ and $w$, respectively.
\end{proof}

We are now ready to prove the central fact that will be used to show that the canonical DRIP for a feasible configuration $G$ can be used to solve leader election in $G$: if two nodes $v$ and $w$ have different histories up to some round in the execution of some patient DRIP, then $v$ and $w$ will have different histories at the start of some phase $P_j$ during the execution of the canonical DRIP. 

\begin{lemma}\label{lem:alsoDistinguish}
	For any configuration $G$, any patient DRIP $D$, any $i \geq 0$, and any two nodes $v,w$ in $G$, if $\DHist{v}{i} \neq \DHist{w}{i}$, then in the execution of the canonical DRIP $\canonDRIP$, there exists a $j \geq 1$ such that $\canonHist{v}{r_{j-1}} \neq \canonHist{w}{r_{j-1}}$.
\end{lemma}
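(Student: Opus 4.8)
The plan is to prove the contrapositive: if in the execution of $\canonDRIP$ we have $\canonHist{v}{r_{j-1}} = \canonHist{w}{r_{j-1}}$ for \emph{every} $j \ge 1$ --- equivalently, by Lemma~\ref{lem:diffClassesHistory}, $v_{\CLASS,j} = w_{\CLASS,j}$ for every $j$, so that $v$ and $w$ are never separated by \classifier --- then $\DHist{v}{i} = \DHist{w}{i}$ for every $i \ge 0$. Write $v \approx w$ for this relation; it is an equivalence relation whose classes are the blocks of the stable (fully refined) partition, and by Lemma~\ref{lem:diffClassesHistory} two nodes are $\approx$-equivalent exactly when they have identical histories throughout the execution of $\canonDRIP$. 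The engine of the argument is a \emph{neighbourhood-profile} claim read off from the structure of $\canonDRIP$: if $v \approx w$ then for every $\approx$-class $C$ and every offset $d$ with $(C,d) \neq (C_v,0)$ (where $C_v$ is the common $\approx$-class of $v$ and $w$), the number, truncated at $2$, of neighbours $\hat v$ of $v$ with $\hat v$ in class $C$ and $t_{\hat v} - t_v = d$ equals the corresponding count for $w$. This is proved by contradiction: if the counts differed for some such $(C,d)$, then --- running \classifier far enough that its partition has stabilized and appealing to Lemma~\ref{lem:transmitConversions} --- in that phase $P_j$ node $v$ would hear, in local round $\sigma+1+d$ of the transmission block of class $C$, one of ``silence'', ``the message $1$'' or ``collision'' that differs from what $w$ hears in the same local round of the same block, contradicting $\canonHist{v}{r_{j-1}} = \canonHist{w}{r_{j-1}}$. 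The excluded bucket $(C_v,0)$ is exactly the set of \emph{blind-spot} neighbours $\hat v$ (those with $\hat v \approx v$ and $t_{\hat v}=t_v$): by Lemma~\ref{lem:transmitConversions} these transmit in $\canonDRIP$ in the same local rounds as $v$, so $v$ can never hear them; Lemma~\ref{lem:staySeparate} is precisely what guarantees that a neighbour's transmission offset within a block never changes across phases, so that ``stable class together with offset'' is the right invariant to compare.

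For the induction, suppose toward a contradiction that a counterexample $(v,w,i)$ exists, and choose one minimizing $\phi(v,w,i) := \min(t_v,t_w) + i$. Minimality forces $i \geq 1$ and $\DHist{v}{i-1} = \DHist{w}{i-1}$, so $v$ and $w$ perform the same action in local round $i$; since both transmitting or both terminating would make $\cH_{v,D}[i] = \cH_{w,D}[i]$, both must \emph{listen}, and it remains to compare the outcomes of these listens, at global rounds $t_v+i$ and $t_w+i$. These outcomes are determined by the sets $\Phi_v,\Phi_w$ of neighbours transmitting in those rounds. A blind-spot neighbour $\hat v$ of $v$ has $t_{\hat v}=t_v$ and $\hat v \approx v \approx w$; applying the claim to the triple $(\hat v,z,i-1)$, where $z \in \{v,w\}$ has the smaller tag --- note $\phi(\hat v,z,i-1) = \min(t_v,t_w)+i-1 < \phi(v,w,i)$ --- gives $\DHist{\hat v}{i-1}=\DHist{v}{i-1}$, so $\hat v$ also listens in local round $i$ and $\hat v \notin \Phi_v$; symmetrically for $w$. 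Every remaining member of $\Phi_v$ lies in some bucket $(C,d)\neq(C_v,0)$, and by the profile claim one matches such neighbours of $v$ with neighbours of $w$ bucket-by-bucket, a matched pair $\hat v,\hat w$ satisfying $\hat v \approx \hat w$ and $t_{\hat v}-t_v = t_{\hat w}-t_w = d$. Whether $\hat v$ transmits in global round $t_v+i$ is whether it transmits in its local round $i-d$, governed by $\DHist{\hat v}{i-d-1}$; the claim for $(\hat v,\hat w,i-d-1)$, whose $\phi$-value is again $\min(t_v,t_w)+i-1 < \phi(v,w,i)$, yields $\DHist{\hat v}{i-d-1}=\DHist{\hat w}{i-d-1}$, so $\hat v$ transmits in global round $t_v+i$ iff $\hat w$ transmits in global round $t_w+i$, and if both do they send the same message. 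Combined with the truncated-count matching (and the fact that all neighbours inside one bucket behave identically under $D$, being $\approx$-equivalent with equal tags), this forces $\Phi_v$ and $\Phi_w$ to induce the same reception --- silence, a common message, or a collision --- so $\cH_{v,D}[i]=\cH_{w,D}[i]$, contradicting the choice of $(v,w,i)$.

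The step I expect to be the main obstacle is exactly the one $\phi$ is built to defeat: a neighbour $\hat v$ of $v$ with $t_{\hat v} < t_v$ is, at the global round under scrutiny, \emph{ahead of $v$ on its own local clock}, so a naive induction on the local-round index $i$ would have to consult that neighbour at a local round $\geq i$, which the induction hypothesis does not cover. The resolution is that the genuine recursion runs on the global round (what happens in global round $R$ depends only on what happens in global rounds $< R$), and every dependency invoked above lowers $\min(t_v,t_w)+i$ by exactly one. A secondary nuisance is the blind-spot bucket, about which the profile claim says nothing; it is handled, as above, by re-pairing a blind-spot neighbour against whichever of $v,w$ carries the smaller tag, using transitivity of $\approx$, together with the observation that a blind-spot neighbour of $v$ never transmits while $v$ listens --- which is why the canonical DRIP's inability to ``see'' such neighbours causes no loss.
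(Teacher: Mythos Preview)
Your argument is correct and takes a genuinely different route from the paper's. The paper proves the direct implication by induction on the \emph{global} round $r$, with $i = r - \max\{t_v,t_w\}$, and the core engine is the forward Claim~\ref{diffBehaviour}: if a neighbour $\hat v$ of $v$ and a neighbour $\hat w$ of $w$ behave differently in round $i$ of $D$, then there is a phase $P_{j'}$ of $\canonDRIP$ in which $\hat v$ and $\hat w$ transmit in different local rounds (at $v$ and $w$ respectively). The paper then aggregates over all such neighbour pairs via a $j_{\max}$ and reads off a round of $P_{j_{\max}}$ where $v$ and $w$ hear different things. Your proof instead runs the contrapositive: you extract once and for all a structural \emph{profile invariant} of the stable partition (the truncated neighbour counts indexed by $(\text{class},\text{offset})$), and then do a single clean induction on $\phi(v,w,i)=\min\{t_v,t_w\}+i$ showing that $\approx$-equivalent nodes have identical $D$-histories. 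What your decomposition buys is a cleaner separation of concerns --- the ``what $\canonDRIP$ can see'' part is isolated in the profile claim, and the inductive step becomes a straightforward bucket-by-bucket matching --- whereas the paper interleaves these two aspects in every step of its induction. Conversely, the paper's direct approach avoids having to articulate the stable partition and its profile invariant explicitly; it works entirely with ``some phase where they differ'' rather than ``the final partition''. Your treatment of the blind-spot bucket (re-pairing a blind-spot neighbour with whichever of $v,w$ has the smaller tag) is a nice device with no direct analogue in the paper, which instead handles the same phenomenon through the case split $t_{\hat v}-t_v \neq t_{\hat w}-t_w$ versus $t_{\hat v}-t_v = t_{\hat w}-t_w$ inside Claim~\ref{diffBehaviour}.

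One small point to watch: you identify the relation $\approx$ (``equal canonical histories at every $r_{j-1}$'') with the \emph{stable} partition obtained by iterating \PARTITION\ to a fixed point, and your proof of the profile claim appeals to a phase ``far enough that the partition has stabilized''. Strictly speaking, $\canonDRIP$ as defined terminates as soon as \classifier\ does, which in a ``Yes'' instance may be one iteration \emph{before} full stabilization; so the phase you invoke need not literally be a phase of $\canonDRIP$. This is a harmless technicality --- the same issue is present in the paper's use of Lemma~\ref{relateClassifierCanonical}\ref{kClassBlocks} at $j'=j_{\max}$, and in either proof it is resolved by reading the argument as being about the natural extension of $\canonDRIP$ that continues until the partition stabilizes, which suffices for all downstream uses of the lemma.
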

\begin{proof}
	Consider an arbitrary configuration $G$ and any patient DRIP $D$. 
	The proof is by induction on the global round number $r$. In particular, for each value $r \geq 0$, we prove that for any two nodes $v,w$ in $G$ that are both awake in global round $r$, if $i = r - \max\{t_v,t_w\} \geq 0$, then $\DHist{v}{i} \neq \DHist{w}{i}$ implies that there exists $j \geq 1$ such that $\canonHist{v}{r_{j-1}} \neq \canonHist{w}{r_{j-1}}$ in the execution of the canonical DRIP $\canonDRIP$. Doing so is sufficient to prove the result since for an arbitrary pair of nodes $v,w$ in $G$ and any $i \geq 0$, there exists a value for $r$ such that $v$ and $w$ are both awake in global round $r$ and such that $i = r-\max\{t_v,t_w\}$. 
	
%

	For the base case of the induction argument, consider $r=0$ and any $v,w$ in $G$ that are both awake in global round $r$. Since no node wakes up before global round 0, it follows that both $v$ and $w$ wake up in global round $r$, so $i = r - \max\{t_v,t_w\} = 0$. Since $D$ is a patient DRIP, both $v$ and $w$ wake up spontaneously, so $\mathcal{H}_{v,D}[0] = (\emptyset) = \mathcal{H}_{w,D}[0]$, and the desired statement is vacuously true.
	
	As induction hypothesis, assume that for all $k \in \{0,\ldots,r\}$, for any $v,w$ in $G$ that are both awake in global round $k$, if $i = k - \max\{t_v,t_w\} \geq 0$, then $\DHist{v}{i} \neq \DHist{w}{i}$ implies that there exists $j \geq 1$ such that $\canonHist{v}{r_{j-1}} \neq \canonHist{w}{r_{j-1}}$ in the execution of the canonical DRIP $\canonDRIP$.
	
	For the induction step, consider any $v,w$ in $G$ that are both awake in global round $r+1$. Suppose that $i = r + 1 - \max\{t_v,t_w\} \geq 0$, and suppose that $\DHist{v}{i} \neq \DHist{w}{i}$. 
	
	First, consider the case where the histories of $v$ and $w$ differ before round $i$ in the execution of $D$, i.e., assume there exists $i' \in \{0,\ldots,i-1\}$ such that $\DHist{v}{i'} \neq \DHist{w}{i'}$. Then, the fact that $i' < i$ means that $i' = k - \max\{t_v,t_w\}$ for some $k \leq r$, so the induction hypothesis implies that there exists $j' \geq 1$ such that $\canonHist{v}{r_{j'-1}} \neq \canonHist{w}{r_{j'-1}}$ in the execution of the canonical DRIP $\canonDRIP$, as desired. 
	
	So the remainder of the proof assumes that $i$ is the first local round where the histories of $v$ and $w$ differ in the execution of $D$, i.e., $\DHist{v}{i-1} = \DHist{w}{i-1}$ and $\mathcal{H}_{v,D}[i] \neq \mathcal{H}_{w,D}[i]$. Since $\DHist{v}{i-1} = \DHist{w}{i-1}$, we know that $v$ and $w$ perform the same action in local round $i$ of the DRIP $D$. Further, we know that $v$ and $w$ must both listen in their local round $i$ in the execution of $D$, since otherwise we would have $\mathcal{H}_{v,D}[i] = (\emptyset) = \mathcal{H}_{w,D}[i]$, which contradicts our assumption that $\mathcal{H}_{v,D}[i] \neq \mathcal{H}_{w,D}[i]$. 
	
	Next, we prove a useful claim that, at a high level, shows that if a neighbour $\hat{w}$ of $w$ behaves differently than a neighbour $\hat{v}$ of $v$ in the same local round $i$ in the execution of $D$, then there is a phase in the canonical DRIP where $\hat{w}$ transmits in a different local round than $\hat{v}$ does. This will help us conclude that if $v$ and $w$ have different histories in local round $i$ of $D$, then this difference will be noticed in some phase of the canonical DRIP as well.
	
	\begin{claim}\label{diffBehaviour}
		Let $\hat{v}$ be an arbitrary neighbour of $v$ in $G$ and let $\hat{w}$ be an arbitrary neighbour of $w$ in $G$. Suppose that $\hat{v}$ transmits a message $M$ in $v$'s local round $i$ in the execution of $D$. Suppose that, in $w$'s local round $i$ in the execution of $D$, node $\hat{w}$ does not transmit, or transmits a message $M' \neq M$. Then there exists a $j' \geq 1$ such that, in the execution of $\canonDRIP$, the local round in phase $P_{j'}$ at node $w$ in which $\hat{w}$ transmits is different than the local round in phase $P_{j'}$ at node $v$ in which $\hat{v}$ transmits.
	\end{claim}
	\noindent Proof of the Claim: 
	We proceed in cases depending on the relationship between $t_{\hat{v}}-t_v$ and $t_{\hat{w}}-t_w$.
	\begin{itemize}
		\item Suppose that $t_{\hat{v}}-t_v \neq t_{\hat{w}}-t_w$. Consider phase $P_1$. By Lemma \ref{lem:transmitConversions}, node $\hat{v}$'s transmission occurs in $v$'s local round $r_{0} + (k-1)(2\sigma+1) + h = r_{0} + (k-1)(2\sigma+1) + (t_{\hat{v}} - t_v + \sigma + 1)$. To obtain a contradiction, assume that $\hat{w}$ transmits in a round corresponding to the same local round at $w$, i.e., the $h$'th round of the $k$'th transmission block of phase $P_{0}$ in $w$'s execution of the canonical DRIP. By Lemma \ref{lem:transmitConversions}, the transmission by $\hat{w}$ occurs in $w$'s local round $r_{0} + (k-1)(2\sigma+1) + (t_{\hat{w}} - t_w + \sigma + 1)$. But if the local round at $v$ and the local round at $w$ are equal, then $r_{0} + (k-1)(2\sigma+1) + (t_{\hat{v}} - t_v + \sigma + 1) = r_{0} + (k-1)(2\sigma+1) + (t_{\hat{w}} - t_w + \sigma + 1)$, which would imply that $t_{\hat{v}}-t_v = t_{\hat{w}}-t_w$, a contradiction. So our assumption was incorrect, i.e., for $j' = 1$, it must be the case that $w$'s local round in phase $P_{j'}$ in which $\hat{w}$ transmits is different than $v$'s local round in phase $P_{j'}$ in which $\hat{v}$ transmits.
		
		\item Suppose that $t_{\hat{v}}-t_v = t_{\hat{w}}-t_w$. At a high level, we proceed by applying the induction hypothesis to the transmissions by $\hat{v}$ and $\hat{w}$, which will imply that they are placed in different equivalence classes in some phase of the canonical DRIP, and thus transmit during different transmission blocks, which will correspond to different local rounds at $v$ and $w$.
		
		We show that the conditions of the induction hypothesis hold for nodes $\hat{v}$ and $\hat{w}$ in their local rounds $i-(t_{\hat{v}}-t_{v})-1$ in the execution of $D$.
		
		First, we show that both $\hat{v}$ and $\hat{w}$ are awake in global round $r$ in the execution of $D$. First, suppose that $t_{v} \geq t_{w}$. Since $t_{\hat{v}} - t_v = t_{\hat{w}} - t_w$, it follows that $t_{\hat{v}} \geq t_{\hat{w}}$. Since $i = r+1-\max\{t_v,t_w\} = r+1-t_v$, we know that $v$'s local round $i$ corresponds to global round $r+1$. In particular, this means that $\hat{v}$ is awake and transmits in global round $r+1$ in the execution of $D$, and thus is awake in round $r$ (since, in our model, no node transmits in the same round as it wakes up). Since $t_{\hat{v}} \geq t_{\hat{w}}$, it follows that $\hat{w}$ is also awake in global round $r$ in the execution of $D$.
		Next, suppose that $t_{w} > t_{v}$. Since $t_{\hat{v}} - t_v = t_{\hat{w}} - t_w$, it follows that $t_{\hat{w}} > t_{\hat{v}}$. Since $i = r+1-\max\{t_v,t_w\} = r+1-t_w$, we know that $w$'s local round $i$ corresponds to global round $r+1$. As node $v$ spontaneously wakes up in global round $t_v$, it follows that $v$'s local round $i$ occurs in global round $t_v + i = t_v + (r+1-t_w) < r+1$, where the last inequality is due to the fact that $t_w > t_v$. In particular, as $\hat{v}$ transmits during $v$'s local round $i$, this means that node $\hat{v}$ transmits in or before global round $r$ in the execution of $D$. As $D$ is a patient DRIP, no node transmits in global rounds $0,\ldots,\sigma$, which implies that $r > \sigma$. Moreover, every node wakes up spontaneously in the round equal to their wakeup tag, so $t_{\hat{w}} \leq \sigma < r$, which implies that $\hat{w}$ is awake in global round $r$.  Since $t_{\hat{w}} > t_{\hat{v}}$, it follows that $\hat{v}$ is also awake in global round $r$. This concludes the proof that both $\hat{v}$ and $\hat{w}$ are awake in global round $r$ in the execution of $D$.
		
		Next, we show that $i-(t_v-t_{\hat{v}})-1 = r - \max\{t_{\hat{v}},t_{\hat{w}}\}$. If $t_{v} \geq t_{w}$, then since $i = r+1-\max\{t_{v},t_{w}\}$, we get that $i-(t_{\hat{v}}-t_v)-1 = (r+1-\max\{t_{v},t_{w}\}) -(t_{\hat{v}}-t_v)-1 = r - t_{\hat{v}}$. Since $t_{\hat{v}} - t_v = t_{\hat{w}} - t_w$, it follows that $t_{\hat{v}} \geq t_{\hat{w}}$, so $r-t_{\hat{v}}$ is equal to $r - \max\{t_{\hat{v}},t_{\hat{w}}\}$.  If $t_{w} \geq t_{v}$, then note that $i - (t_{\hat{v}}-t_{v}) -1 = i - (t_{\hat{w}}-t_{w})-1$ from the assumption that $t_{\hat{v}} - t_v = t_{\hat{w}} - t_w$. Then, since $i = r+1-\max\{t_{v},t_{w}\}$, we get that $i - (t_{\hat{w}}-t_{w})-1 = (r+1-\max\{t_{v},t_{w}\}) - (t_{\hat{w}}-t_{w})-1 = r - t_{\hat{w}}$. Since $t_{\hat{v}} - t_v = t_{\hat{w}} - t_w$, it follows that $t_{\hat{w}} \geq t_{\hat{v}}$, so $r-t_{\hat{w}}$ is equal to $r - \max\{t_{\hat{v}},t_{\hat{w}}\}$. This concludes the proof that $i-(t_v-t_{\hat{v}})-1 = r - \max\{t_{\hat{v}},t_{\hat{w}}\}$.
		
		Finally, we prove that $\DHist{\hat{v}}{i-(t_{\hat{v}}-t_v)-1} \neq \DHist{\hat{w}}{i-(t_{\hat{w}}-t_w)-1}$. To obtain a contradiction, assume otherwise. Then nodes $\hat{v}$ and $\hat{w}$ would perform the same action in their local rounds $i-(t_{\hat{v}}-t_v)$ and $i-(t_{\hat{w}}-t_w)$, respectively, in their execution of $D$. By Proposition \ref{prop:localconversion}, these rounds correspond to local round $i$ at $v$ and local round $i$ at $w$, respectively, in their execution of $D$. By assumption, $\hat{v}$ sends message $M$ in local round $i$ at $v$, so $\hat{w}$ would also send message $M$ in local round $i$ at $w$. This contradicts the assumption that either $\hat{w}$ does not transmit, or transmits a message $M' \neq M$. This concludes the proof that $\DHist{\hat{v}}{i-(t_{\hat{v}}-t_v)-1} \neq \DHist{\hat{w}}{i-(t_{\hat{w}}-t_w)-1}$.
		
		Altogether, we have shown that $\hat{v}$ and $\hat{w}$ are awake in global round $k=r$, and for $i-(t_{\hat{v}}-t_v)-1 = i-(t_{\hat{w}}-t_w)-1 = k - \max\{t_{\hat{v}},t_{\hat{w}}\}$, we have $\DHist{\hat{v}}{i-(t_{\hat{v}}-t_v)-1} \neq \DHist{\hat{w}}{i-(t_{\hat{w}}-t_w)-1}$. So, by the induction hypothesis, there exists $j' \geq 1$ such that $\canonHist{\hat{v}}{r_{j'-1}} \neq \canonHist{\hat{w}}{r_{j'-1}}$ in the execution of the canonical DRIP $\canonDRIP$. So, by Lemma \ref{lem:diffClassesHistory}, it follows that $\hat{v}_{\CLASS,j'} \neq \hat{w}_{\CLASS,j'}$. Then, by statement \ref{kClassBlocks} of Lemma \ref{relateClassifierCanonical}, nodes $\hat{v}$ and $\hat{w}$ transmit in different transmission blocks of phase $P_{j'}$, and it follows that the local round at $v$ in which node $\hat{v}$ transmits is different than the local round at $w$ in which $\hat{w}$ transmits.
	\end{itemize}
	In both cases above, we proved that $w$'s local round in phase $P_{j'}$ in which $\hat{w}$ transmits is different than $v$'s local round in phase $P_{j'}$ in which $\hat{v}$ transmits. This concludes the proof of Claim \ref{diffBehaviour}.

	Finally, to complete the induction step, we consider two cases that cover all possible scenarios in which $\mathcal{H}_{v,D}[i] \neq \mathcal{H}_{w,D}[i]$.
	
	
	\begin{itemize}
		\item {\bf Suppose that, for some $z \geq 1$, node $v$ has neighbours $v_1,\ldots,v_z$ that transmit $M_1,\ldots,M_z$, respectively, during $v$'s local round $i$ in the execution of $D$. Suppose that, in the execution of $D$, node $w$ has no neighbour that transmits during $w$'s local round $i$, or, has exactly one neighbour that transmits during $w$'s local round $i$, and transmits a message $M'$ such that $M' \neq M_x$ for some $x \in \{1,\ldots,z\}$.}

		
		Let $\hat{v}$ be a neighbour of $v$ that transmits a message during $v$'s local round $i$ in the execution of $D$ such that the transmitted message $M$ is not equal to a message transmitted by any neighbour of $w$. Notice that Claim \ref{diffBehaviour} applies to each neighbour $\hat{w}$ of $w$ in $G$. In particular, this means that, for each neighbour $\hat{w}$ of $w$, there exists a $j' \geq 1$ such that, in the execution of $\canonDRIP$, the local round in phase $P_{j'}$ at node $w$ in which $\hat{w}$ transmits is different than the local round in phase $P_{j'}$ at node $v$ in which $\hat{v}$ transmits. We denote by $j_{max}$ the maximum such $j'$ taken over all neighbours $\hat{w}$ of $w$. 
		
		 Let $h \in \{1,\ldots,2\sigma+1\}$ and $k \in \{1,\ldots,numClasses_{G,j_{max}}\}$ such that $\hat{v}$ transmits in a round corresponding to the $h$'th round of the $k$'th transmission block of phase $P_{j_{max}}$ in $v$'s execution of the canonical DRIP. By the choice of $j_{max}$ and Lemma \ref{lem:staySeparate}, we conclude that no neighbour of $w$ transmits in the $h$'th round of the $k$'th transmission block of phase $P_{j_{max}}$ at $w$. In particular, this means $\mathcal{H}_{v,\canonDRIP}[r_{j_{max}-1}+(k-1)(2\sigma+1)+h] \neq \mathcal{H}_{w,\canonDRIP}[r_{j_{max}-1}+(k-1)(2\sigma+1)+h]$, which implies that $\canonHist{v}{r_{j_{max}}} \neq \canonHist{w}{r_{j_{max}}}$, as desired. Setting $j=j_{max}+1$ gives the desired result.

		\item {\bf Suppose that, for some $z \geq 2$, node $v$ has neighbours $v_1,\ldots,v_z$ that all transmit the same message $M$ during $v$'s local round $i$ in the execution of $D$. Suppose that, in the execution of $D$, $w$ has exactly one neighbour that transmits during $w$'s local round $i$, and transmits the message $M$ during this round.}
		
		Consider neighbours $v_1$ and $v_2$ of $v$ that transmit the message $M$ during $v$'s local round $i$ in the execution of $D$. 
		
		First, for each of $v_1$ and $v_2$, notice that Claim \ref{diffBehaviour} applies to each neighbour $\hat{w}$ of $w$ that does not transmit during $w$'s local round $i$. In particular, for each $\alpha \in \{1,2\}$, for each neighbour $\hat{w}$ of $w$ that does not transmit during $w$'s local round $i$, there exists a $j_\alpha' \geq 1$ such that, in the execution of $\canonDRIP$, the local round in phase $P_{j_\alpha'}$ at node $w$ in which $\hat{w}$ transmits is different than the local round in phase $P_{j_\alpha'}$ at node $v$ in which $v_\alpha$ transmits. For each neighbour $\hat{w}$ of $w$ that does not transmit during $w$'s local round $i$, we take the maximum of $j_1'$ and $j_2'$, and then denote by $j_{max}$ the maximum taken over all such $\hat{w}$. By Lemma \ref{lem:staySeparate}, we have shown that, for each neighbour $\hat{w}$ of $w$ that does not transmit during $w$'s local round $i$ in the execution of $D$, the local round in phase $P_{j_{max}}$ at node $w$ in which $\hat{w}$ transmits is different than the local round in phase $P_{j_{max}}$ at node $v$ in which $v_1$ or $v_2$ transmit.
		
		Next, let $\breve{w}$ be the neighbour of $w$ that transmits during $w$'s local round $i$ in the execution of $D$. Observe that, since we assume that $v_1$, $v_2$, and $\breve{w}$ transmit $M$ during this round, we cannot apply Claim \ref{diffBehaviour}. Instead, we consider the possible cases for when $v_1$, $v_2$, and $\breve{w}$ transmit during phase $P_{j_{max}}$ in the execution of the canonical DRIP. Define the following:
		\begin{itemize}
			\item Let $h_1 \in \{1,\ldots,2\sigma+1\}$ and $k_1 \in \{1,\ldots,numClasses_{G,j_{max}}\}$ such that $v_1$ transmits in a round corresponding to the $h_1$'th round of the $k_1$'th transmission block of phase $P_{j_{max}}$ in $v$'s execution of the canonical DRIP.
			\item Let $h_2 \in \{1,\ldots,2\sigma+1\}$ and $k_2 \in \{1,\ldots,numClasses_{G,j_{max}}\}$ such that $v_2$ transmits in a round corresponding to the $h_2$'th round of the $k_2$'th transmission block of phase $P_{j_{max}}$ in $v$'s execution of the canonical DRIP.
			\item Let $h_{\breve{w}} \in \{1,\ldots,2\sigma+1\}$ and $k_{\breve{w}} \in \{1,\ldots,numClasses_{G,j_{max}}\}$ such that $\breve{w}$ transmits in a round corresponding to the $h_{\breve{w}} $'th round of the $k_{\breve{w}} $'th transmission block of phase $P_{j_{max}}$ in $w$'s execution of the canonical DRIP.
		\end{itemize}
		We consider two cases:
		\begin{itemize}
			\item $h_1 = h_2$ and $k_1 = k_2$
			
			It follows that a collision occurs in the $h_1$'th round of the $k_1$'th transmission block of phase $P_{j_{max}}$ in $v$'s execution of the canonical DRIP. But we already showed that, for all neighbours $\hat{w} \neq \breve{w}$ of $w$, the local round in phase $P_{j_{max}}$ at node $w$ in which $\hat{w}$ transmits is different than the local round in phase $P_{j_{max}}$ at node $v$ in which $v_1$ transmits. In particular, the node $\breve{w}$ is the only possible neighbour of $w$ that might transmit in the $h_1$'th round of the $k_1$'th transmission block of phase $P_{j_{max}}$ in $w$'s execution of the canonical DRIP. This means that a collision will not happen in this round of $w$'s execution of the canonical DRIP, so $\mathcal{H}_{v,\canonDRIP}[r_{j_{max}-1}+(k_1-1)(2\sigma+1)+h_1] \neq \mathcal{H}_{w,\canonDRIP}[r_{j_{max}-1}+(k_1-1)(2\sigma+1)+h_1]$, which implies that $\canonHist{v}{r_{j_{max}}} \neq \canonHist{w}{r_{j_{max}}}$. Setting $j=j_{max}+1$ gives the desired result.
			\item $h_1 \neq h_2$ or $k_1 \neq k_2$
			
			In this case, we cannot have $h_{\breve{w}} = h_1 = h_2$ and $k_{\breve{w}} = k_1 = k_2$. Without loss of generality, assume that $h_{\breve{w}} \neq h_1$ or $k_{\breve{w}} \neq k_1$. In particular, this means that the node $\breve{w}$ does not transmit in the $h_1$'th round of the $k_1$'th transmission block of phase $P_{j_{max}}$ in $w$'s execution of the canonical DRIP. Moreover, we already showed that, for all other neighbours $\hat{w} \neq \breve{w}$ of $w$, the local round in phase $P_{j_{max}}$ at node $w$ in which $\hat{w}$ transmits is different than the local round in phase $P_{j_{max}}$ at node $v$ in which $v_1$ transmits. This means that  $\mathcal{H}_{w,\canonDRIP}[r_{j_{max}-1}+(k_1-1)(2\sigma+1)+h_1] = (\emptyset)$, and we know that $\mathcal{H}_{v,\canonDRIP}[r_{j_{max}-1}+(k_1-1)(2\sigma+1)+h_1] \neq (\emptyset)$ due to $v_1$'s transmission. It follows that $\canonHist{v}{r_{j_{max}}} \neq \canonHist{w}{r_{j_{max}}}$. Setting $j=j_{max}+1$ gives the desired result.
		\end{itemize}

%
%

	\end{itemize}
\end{proof}

Lemma \ref{lem:alsoDistinguish} shows that the canonical DRIP breaks symmetry among nodes at least as well as any other DRIP. Using this fact, we show that the canonical DRIP for any feasible configuration $G$ can be used to solve leader election in $G$. 

\begin{theorem}\label{DedicatedLE}
	For any configuration $G$, if $G$ is feasible then there is a $O(n^2\sigma)$-round dedicated distributed leader election algorithm $(\canonDRIP,f_G)$ for $G$.
\end{theorem}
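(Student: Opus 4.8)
The plan is to assemble the pieces already in place: Lemma~\ref{lem:existsPatient} lets us replace an arbitrary dedicated leader election algorithm for $G$ by one whose DRIP is \emph{patient}; Lemma~\ref{lem:alsoDistinguish} says the canonical DRIP separates nodes at least as well as any patient DRIP; together these will force \classifier to output ``Yes'', and then the construction in the proof of Lemma~\ref{YesCorrectness} produces the algorithm $(\canonDRIP,f_G)$, with the round bound supplied by Lemma~\ref{CanonicalComplexity}. This also completes the ``No''-instances correctness of \classifier, since it shows that a feasible $G$ yields a ``Yes'' output.

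First I would extract a patient witness. Since $G$ is feasible, Lemma~\ref{lem:existsPatient} gives a patient DRIP $D_{pat}$ and a decision function $f_{pat}$ with $(D_{pat},f_{pat})$ a dedicated leader election algorithm for $G$; let $\ell$ be the unique node with $f_{pat}(\mathcal{H}_{\ell,D_{pat}}[0\ldots done_{\ell,D_{pat}}])=1$. I would then argue that for every $w\neq\ell$ there is a local round $i_w$ at which the histories of $\ell$ and $w$ in the execution of $D_{pat}$ differ: $f_{pat}$ is a function of the termination history, so $\mathcal{H}_{\ell,D_{pat}}[0\ldots done_{\ell,D_{pat}}]\neq\mathcal{H}_{w,D_{pat}}[0\ldots done_{w,D_{pat}}]$, and these two sequences cannot agree on all common indices, since a common prefix reaching the smaller of the two $done$ values would force those values (and hence the full sequences) to coincide, because the first round in which $D_{pat}$ outputs $terminate$ is determined by the history prefix. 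Thus some $i_w\le\min\{done_{\ell,D_{pat}},done_{w,D_{pat}}\}$ witnesses the difference, and Lemma~\ref{lem:alsoDistinguish} applied with $D_{pat}$ yields $j_w\ge 1$ with $\canonHist{\ell}{r_{j_w-1}}\neq\canonHist{w}{r_{j_w-1}}$; equivalently, by Lemma~\ref{lem:diffClassesHistory}, $\ell_{\CLASS,j_w}\neq w_{\CLASS,j_w}$.

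Next I would show \classifier outputs ``Yes''. By Lemma~\ref{ClassifierWillTerminate} it outputs ``Yes'' or ``No''; suppose for contradiction it outputs ``No'' at iteration $i^*$. Then $v_{\CLASS,i^*+1}$ has no singleton class and $numClasses_{G,i^*+1}=numClasses_{G,i^*}$, so by the list construction $\mathcal{L}_{i^*+1}=``terminate"$, and hence, by the specification of \canonDRIP, every node terminates in its local round $r_{i^*}+1$; in particular each node's whole history in \canonDRIP is $\canonHist{v}{r_{i^*}}$ followed by one uniform terminate entry, so it is determined by $v_{\CLASS,i^*+1}$ via Lemma~\ref{lem:diffClassesHistory}. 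Since $\ell$'s class is not a singleton, pick $w\neq\ell$ in the same class; then $\canonHist{\ell}{r_{i^*}}=\canonHist{w}{r_{i^*}}$, so $\ell$ and $w$ have identical histories throughout the (now finite) execution of \canonDRIP. As the execution ends at round $r_{i^*}+1$, the round $r_{j_w-1}$ occurring in Step~1 satisfies $r_{j_w-1}\le r_{i^*}+1$, and since histories are append-only, $\canonHist{\ell}{r_{j_w-1}}$ and $\canonHist{w}{r_{j_w-1}}$ are equal prefixes of that common terminal history, contradicting $\canonHist{\ell}{r_{j_w-1}}\neq\canonHist{w}{r_{j_w-1}}$. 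Hence \classifier outputs ``Yes''.

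Finally I would build the algorithm exactly as in the proof of Lemma~\ref{YesCorrectness}: let $j$ be the iteration at which the condition on line~\ref{yescondition} fires, let $\hat m$ be the smallest class number that is a singleton at the end of iteration $j$, and let $\hat v$ be the node in that class. By Lemma~\ref{lem:diffClassesHistory}, $\canonHist{\hat v}{r_j}\neq\canonHist{w}{r_j}$ for all $w\neq\hat v$, and since $\mathcal{L}_{j+1}=``terminate"$ every node terminates in local round $r_j+1$, so $\hat v$'s termination history is distinct from that of every other node. Defining $f_G$ to output $1$ on $\canonHist{\hat v}{done_{\hat v}}$ and $0$ on $\canonHist{w}{done_w}$ for $w\neq\hat v$ makes $(\canonDRIP,f_G)$ a dedicated leader election algorithm for $G$, and by Lemma~\ref{CanonicalComplexity} every node terminates within $O(n^2\sigma)$ rounds. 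I expect the third step above to be the main obstacle: it is where feasibility is genuinely used (chaining Lemma~\ref{lem:existsPatient} with Lemma~\ref{lem:alsoDistinguish}), and it requires being careful about the early termination of \canonDRIP so that ``some phase separates $\ell$ from $w$'' and ``all nodes terminate at round $r_{i^*}+1$ with identical histories'' really do contradict each other.
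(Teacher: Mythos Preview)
Your argument is correct and reaches the same conclusion, but the route differs from the paper's. The paper's proof is more direct: after extracting the patient witness $(D_{pat},f_{pat})$ with leader $x$ via Lemma~\ref{lem:existsPatient}, it applies Lemma~\ref{lem:alsoDistinguish} to conclude immediately that $\mathcal{H}_{x,\canonDRIP}[0\ldots done_{x,\canonDRIP}]\neq\mathcal{H}_{v,\canonDRIP}[0\ldots done_{v,\canonDRIP}]$ for all $v\neq x$, and then \emph{defines} $f_G$ to send $x$'s terminal history to $1$ and every other terminal history to $0$. There is no detour through \classifier; Lemma~\ref{NoCorrectness} is proved afterwards as a corollary of this theorem. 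Your approach instead first shows (by contradiction) that \classifier outputs ``Yes'' and then invokes the construction of Lemma~\ref{YesCorrectness}. This is longer, but it buys you something: you simultaneously establish the ``No''-instances correctness that the paper proves separately. Note also that your elected leader is the singleton-class node $\hat v$, which need not coincide with the patient DRIP's leader $\ell$, whereas the paper elects $\ell$ itself.

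The one place your write-up is loose is exactly the spot you flag: the assertion that $r_{j_w-1}\le r_{i^*}+1$. Lemma~\ref{lem:alsoDistinguish} does not bound $j_w$, so this needs an argument. The clean fix avoids $r_{j_w-1}$ altogether: from Lemma~\ref{lem:diffClassesHistory} you have $\ell_{\CLASS,j_w}\neq w_{\CLASS,j_w}$; the ``No'' condition $numClasses_{G,i^*+1}=numClasses_{G,i^*}$ together with the fact that \PARTITION only refines means the partition is a fixed point of \partitioner, so (extending \classifier hypothetically beyond $i^*$) $v_{\CLASS,j}=v_{\CLASS,i^*+1}$ for all $j\ge i^*$ and all $v$. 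Combined with the refinement observation for $j\le i^*$, this forces $\ell_{\CLASS,i^*+1}\neq w_{\CLASS,i^*+1}$, contradicting the choice of $w$ directly.
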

\begin{proof}
	By Lemma \ref{lem:existsPatient}, if a configuration $G$ is feasible, then there exists a patient DRIP $D_{pat}$ and a decision function $f_{pat}$ such that $(D_{pat},f_{pat})$ solves leader election in $G$. Suppose that node $x$ is chosen as leader by this algorithm, i.e., $f_{pat}(\mathcal{H}_{x,D_{pat}}[0 \ldots done_{x,D_{pat}}]) = 1$ and $f_{pat}(\mathcal{H}_{v,D_{pat}}[0 \ldots done_{v,D_{pat}}]) = 0$ for all $v \neq x$. As $f_{pat}$ is a well-defined function, we must have $\mathcal{H}_{x,D_{pat}}[0 \ldots done_{x,D_{pat}}] \neq \mathcal{H}_{v,D_{pat}}[0 \ldots done_{v,D_{pat}}]$ for all $v \neq x$. By Lemma \ref{lem:alsoDistinguish}, it follows that $\mathcal{H}_{x,\canonDRIP}[0 \ldots done_{x,\canonDRIP}] \neq \mathcal{H}_{v,\canonDRIP}[0 \ldots done_{v,\canonDRIP}]$ for all $v \neq x$. So, by defining a decision function $f_G$ by $f_G(\mathcal{H}_{x,\canonDRIP}[0 \ldots done_{x,\canonDRIP}]) = 1$ and $f_G(\mathcal{H}_{v,\canonDRIP}[0 \ldots done_{v,\canonDRIP}]) = 0$ for all $v \neq x$, we see that $(\canonDRIP,f_G)$ solves leader election in $G$. By Lemma \ref{CanonicalComplexity}, we know that each node will terminate its execution of \canonDRIP in $O(n^2\sigma)$ rounds.
\end{proof}

Finally, we are now ready to complete the proof that \classifier outputs ``Yes" when given a feasible configuration $G$ as input. The idea is that, if $G$ is feasible, then \canonDRIP can be used to elect some leader. When \classifier is executed with input $G$, this leader will eventually be placed in its own equivalence class, and \classifier will output ``Yes" and terminate.

\begin{lemma}\label{NoCorrectness}
	For any configuration $G$, if $G$ is feasible then \classifier outputs ``Yes" when executed with $G$ as input.
\end{lemma}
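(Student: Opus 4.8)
The plan is to derive everything from the canonical DRIP. Assume $G$ is feasible. By Theorem~\ref{DedicatedLE}, $(\canonDRIP,f_G)$ is a dedicated leader election algorithm for $G$; let $x$ be its elected leader. Since $f_G$ is a well-defined function with $f_G(\canonHist{x}{done_x})=1$ and $f_G(\canonHist{w}{done_w})=0$ for every $w\neq x$, the leader $x$ must have a unique terminal history in the execution of \canonDRIP, i.e. $\canonHist{x}{done_x}\neq\canonHist{w}{done_w}$ for all $w\neq x$. The goal is then to convert this uniqueness into the statement that \classifier creates a singleton equivalence class, which is precisely the condition at line~\ref{yescondition} that makes \classifier output ``Yes''.

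First I would identify the iteration at which \classifier stops. By Lemma~\ref{ClassifierWillTerminate} there is a first iteration $i\in\{1,\ldots,\lceil n/2\rceil\}$ at which line~\ref{yescondition} or line~\ref{nocondition} evaluates to true. By the definition of the lists $\mathcal{L}_j$ in the case $j\ge 2$, whichever of these two conditions holds forces $\mathcal{L}_{i+1}[1]=``terminate"$; hence, by the definition of \canonDRIP, phase $P_{i+1}$ at every node consists of the single local round $r_i+1$ in which the node terminates, so $done_v=r_i+1$ for every node $v$ and all nodes behave identically in that round. Consequently, if $\canonHist{v}{r_i}=\canonHist{w}{r_i}$ then also $\canonHist{v}{done_v}=\canonHist{w}{done_w}$, since the terminal history is obtained from the prefix through round $r_i$ by appending the same terminate round at both nodes; taking the contrapositive, $\canonHist{x}{done_x}\neq\canonHist{w}{done_w}$ implies $\canonHist{x}{r_i}\neq\canonHist{w}{r_i}$ for all $w\neq x$. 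Now apply Lemma~\ref{lem:diffClassesHistory} with phase index $j=i+1$: $\canonHist{x}{r_i}\neq\canonHist{w}{r_i}$ is equivalent to $x_{\CLASS,i+1}\neq w_{\CLASS,i+1}$. Thus at the end of iteration $i$ of \classifier the class containing $x$ is a singleton, so the test at line~\ref{yescondition} during iteration $i$ succeeds and \classifier outputs ``Yes''.

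I expect the bulk of the write-up to be careful index bookkeeping rather than a new idea: aligning the stopping iteration $i$ of \classifier with the phase index $i+1$ of \canonDRIP and with the history-prefix length $r_i$, and keeping straight the off-by-one in Lemma~\ref{lem:diffClassesHistory}, where $v_{\CLASS,j}$ denotes the class at the \emph{end} of iteration $j-1$. The one substantive point needing a sentence of justification is the reduction from terminal histories to histories through round $r_i$: this is legitimate exactly because $\mathcal{L}_{i+1}[1]=``terminate"$ forces all nodes to take the same action in round $r_i+1$, so no new distinction can arise in that final round. There is also no circularity in first invoking Lemma~\ref{ClassifierWillTerminate} (which only asserts that \emph{some} stopping condition holds at iteration $i$) and then concluding that the condition actually triggered is the ``Yes'' condition: having exhibited a singleton class at iteration $i$, line~\ref{yescondition} is true, and since \classifier tests line~\ref{yescondition} before line~\ref{nocondition}, it outputs ``Yes''.
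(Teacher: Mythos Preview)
Your proposal is correct and follows essentially the same approach as the paper's proof: both invoke Theorem~\ref{DedicatedLE} to obtain a leader $x$ with a unique terminal history under \canonDRIP, identify the iteration after which \canonDRIP terminates (your $i$ is the paper's $j_{term}-1$), reduce uniqueness of the terminal history to uniqueness of $\canonHist{x}{r_i}$, and then apply Lemma~\ref{lem:diffClassesHistory} to conclude that $x$ is alone in its class so that line~\ref{yescondition} triggers. If anything, your write-up is slightly more careful than the paper's in two places: you explicitly justify why the difference in terminal histories descends to a difference at round $r_i$ (the paper asserts this without comment), and you note that line~\ref{yescondition} is tested before line~\ref{nocondition}, ruling out a spurious ``No'' output.
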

\begin{proof}
	If $G$ is feasible, then, by Theorem \ref{DedicatedLE}, after executing the canonical DRIP \canonDRIP, there is at least one node $x$ such that $f_G(\mathcal{H}_{x,\canonDRIP}[0 \ldots done_{x,\canonDRIP}]) = 1$ and $f_G(\mathcal{H}_{v,\canonDRIP}[0 \ldots done_{v,\canonDRIP}]) = 0$ for all $v \neq x$. It follows that $\mathcal{H}_{x,\canonDRIP}[0 \ldots done_{x,\canonDRIP}] \neq \mathcal{H}_{v,\canonDRIP}[0 \ldots done_{v,\canonDRIP}]$ for all $v \neq x$. From the definition of the canonical DRIP \canonDRIP, the value of $done_{v,\canonDRIP}$ is the same for all $v$ in $G$: there exists some $j$, denoted by $j_{term}$, such that $\mathcal{L}_j[1] = ``terminate"$, and all nodes will terminate in local round $r_{j-1}+1$. So, $done_{v,\canonDRIP} = r_{j_{term}-1}+1$, and we have $\mathcal{H}_{x,\canonDRIP}[0 \ldots r_{j_{term}-1}] \neq \mathcal{H}_{v,\canonDRIP}[0 \ldots r_{j_{term}-1}]$ for all $v \neq x$. By Lemma \ref{lem:diffClassesHistory}, it follows that $x_{\CLASS,j_{term}} \neq v_{\CLASS,j_{term}}$ for all $v \neq x$, i.e., the condition on line \ref{yescondition} of \classifier is true after the execution of $\partitioner(G_{aug},j_{term}-1)$. In the next step, \classifier will output ``Yes" and terminate.
\end{proof}

Putting together Lemmas \ref{ClassifierComplexity}, \ref{YesCorrectness} and \ref{NoCorrectness} completes the analysis of \classifier.
\begin{theorem}
	There is a $O(n^3\Delta)$-round centralized algorithm that, when provided as input any configuration $G$ with maximum node degree $\Delta$, decides whether or not $G$ is feasible.
\end{theorem}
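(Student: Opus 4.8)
The plan is to show that \classifier, as specified in Algorithm \ref{classifierpseudo}, is itself the required decision algorithm, so the proof is purely an assembly of the facts already established about it. First I would invoke Lemma \ref{ClassifierWillTerminate} to guarantee that, on any input configuration $G$ with $n$ nodes, there is some iteration $i \in \{1,\ldots,\lceil n/2 \rceil\}$ at which either the condition at line \ref{yescondition} or the condition at line \ref{nocondition} evaluates to true; hence \classifier always halts having output exactly one of ``Yes'' or ``No''. It then remains to verify the two directions of correctness and the running time.

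For soundness of a ``Yes'' output, I would cite Lemma \ref{YesCorrectness}: whenever \classifier outputs ``Yes'', some equivalence class becomes a singleton, and then \canonDRIP together with the decision function constructed in that proof is a dedicated leader election algorithm for $G$, so $G$ is feasible. For the converse — that ``Yes'' is output whenever $G$ is feasible — I would cite Lemma \ref{NoCorrectness}, whose proof routes through Theorem \ref{DedicatedLE} (feasibility implies that \canonDRIP elects a leader, via Lemmas \ref{lem:existsPatient} and \ref{lem:alsoDistinguish}) and then through Lemma \ref{lem:diffClassesHistory} to conclude that the elected node ends up alone in its equivalence class, triggering the condition at line \ref{yescondition}. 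Taking the contrapositive, if \classifier outputs ``No'' then $G$ is not feasible. Together these two implications establish that the output of \classifier correctly decides whether $G$ is feasible.

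For the running time I would simply invoke Lemma \ref{ClassifierComplexity}, which bounds the worst-case time of \classifier on input $G$ by $O(n^3\Delta)$, where $\Delta$ is the maximum degree of $G$; the initialization performed by \initgaug and the loop control in Algorithm \ref{classifierpseudo} are dominated by this. Since all of the substantive work has already been carried out in the preceding lemmas, there is no real obstacle here: the only points that need care are checking that ``\classifier outputs Yes'' and ``$G$ is feasible'' have been proven to be genuinely equivalent (not merely one-directionally implied), and that exhausting all $\lceil n/2 \rceil$ iterations without reaching a decision is truly impossible, which is exactly the content of Lemma \ref{ClassifierWillTerminate}. I expect no further difficulty.
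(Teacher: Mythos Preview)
Your proposal is correct and matches the paper's own proof, which simply states that putting together Lemmas \ref{ClassifierComplexity}, \ref{YesCorrectness}, and \ref{NoCorrectness} completes the analysis of \classifier. Your version is just a slightly more explicit unpacking of the same assembly of lemmas.
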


\section{Negative results}

In this section, we prove lower bounds on the complexity of dedicated leader election algorithms for feasible configurations, and prove impossibility results concerning universal leader election and distributed decision algorithms for anonymous radio networks.  Our first negative result is a $\Omega(n)$ lower bound on the complexity of leader election, even for some configurations with bounded span.

\begin{proposition}
There exists an infinite class of feasible configurations with span $\sigma=1$, such that, for each configuration $G$ of this class, every dedicated leader election algorithm for $G$ takes time $\Omega(n)$,
where $n$ is the size of the configuration.
\end{proposition}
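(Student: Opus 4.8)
The plan is to fix, for each integer $m\ge 4$, a single configuration $G_m$ on $n=3m+3$ nodes and to prove separately that $G_m$ is feasible and that no dedicated leader election algorithm for it can terminate in $o(n)$ rounds. Let $G_m$ be the $n$-node cycle obtained by placing three \emph{defect} nodes, each with wakeup tag $0$, around the cycle so that the three arcs of tag-$1$ nodes separating consecutive defects have lengths $m-1$, $m$ and $m+1$ in cyclic order; every non-defect node has tag $1$. Then the span is $\sigma=1$, and $\{G_m:m\ge 4\}$ is the claimed infinite class. Write $B_r(v)$ for the subgraph of $G_m$ induced by the nodes at distance at most $r$ from $v$, with their tags.

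\smallskip\noindent\emph{Feasibility.} First I would note that $G_m$ is \emph{rigid}: any automorphism of a cycle is a rotation or a reflection and must permute the three tag-$0$ nodes, but no nonzero rotation fixes three non-equally-spaced points and no reflection fixes the defect set when the three arc lengths are distinct, so only the identity survives. To turn rigidity into feasibility I would examine the refinement carried out by \classifier. On a cycle (where every node has degree $2$), and when $\sigma=1$ (so that each offset $\sigma+1+t_w-t_v$ is determined by the pair of tags), the ``same class and same tag'' exclusion and the triple-merging rules in \partitioner make the partition after iteration $j$ coincide with ordinary colour refinement started from the tags; on a cycle this refinement separates two nodes precisely when their radius-$O(j)$ tagged balls differ. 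Since $G_m$ is rigid, two nodes with isomorphic balls of radius $\ge n/2$ would force an automorphism, so every pair of nodes is separated within $O(n)$ (indeed within $\lceil n/2\rceil$) iterations; hence a singleton class appears, \classifier outputs ``Yes'', and $G_m$ is feasible by Lemma~\ref{YesCorrectness}.

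\smallskip\noindent\emph{The lower bound.} The two ingredients are: (1) \emph{indistinguishability}: for \emph{any} DRIP $D$ (patient or not) run on a span-$1$ configuration, every node wakes up in global round $t_u\in\{0,1\}$ (no forced wakeup can occur before a node's own tag here), so local round $i$ at $u$ always equals global round $i+t_u$, and the history of $u$ through global round $r$ is a function of $B_r(u)$ only; this follows by a short induction on $r$, since $u$'s action in round $r$ depends on its history through round $r-1$ and its round-$r$ observation depends on which neighbours transmit, which depends on their histories through round $r-1$, all of which are determined by the radius-$(r-1)$ balls sitting inside $B_r(u)$. (2) \emph{local non-rigidity}: there is a constant $c>0$ (e.g.\ $c=1/8$) such that every node $u$ of $G_m$ has a \emph{distinct} node $u'$ with $B_{cn}(u)\cong B_{cn}(u')$ via an isomorphism sending $u$ to $u'$; this is a case analysis (defect nodes are mutual twins, a $1$-node at small depth inside an arc is twinned with its mirror image in the same arc, a deep $1$-node is twinned with any other deep $1$-node), the tight case being the midpoint of the shortest arc, whose radius-$cn$ ball is an all-$1$ path and is therefore shared by any sufficiently deep $1$-node. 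Combining the two: given any dedicated leader election algorithm $(D,f)$ for $G_m$ electing a node $\ell$, pick the twin $\ell'\ne\ell$ with $B_{cn}(\ell)\cong B_{cn}(\ell')$; by (1) the histories of $\ell$ and $\ell'$ through global round $cn$ coincide (the isomorphism preserves tags, so $t_\ell=t_{\ell'}$ and the clock conversions agree). If the algorithm terminated by round $cn-1$, then $\ell$ and $\ell'$ would have identical full histories and $f$ would output $1$ on both, contradicting uniqueness of the leader; hence the algorithm runs $\Omega(n)$ rounds.

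\smallskip\noindent\emph{Main obstacle.} The subtle part is feasibility, not the lower bound: rigidity by itself does not imply feasibility in this model (e.g.\ a rigid configuration with all tags equal is unfeasible, since no communication ever occurs), so one must actually verify that the refinement performed by \classifier reaches a partition containing a singleton class. Justifying that this refinement behaves exactly like colour refinement on a $2$-regular tagged graph---in particular that the third-coordinate merging into ``$*$'' and the ``same class and same tag'' exclusion lose no relevant information on a cycle---is where the care is needed. A minor additional point is making the indistinguishability argument of ingredient (1) apply to non-patient DRIPs; this is exactly where $\sigma=1$ is used, since then no forced wakeup can ever propagate and the clean identity ``local round $=$ global round $-\,t_u$'' holds for every DRIP.
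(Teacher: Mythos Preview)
Your approach is correct (modulo the care you yourself flag in the feasibility step) but considerably heavier than the paper's. The paper uses a \emph{path}, not a cycle: $m$ tag-$0$ nodes, then $2m{+}1$ tag-$1$ nodes, then $m$ tag-$0$ nodes (so $n=4m+1$). The reflection symmetry fixes only the centre $b_{m+1}$, so feasibility is one line (\classifier isolates $b_{m+1}$ within $m$ iterations, and Lemma~\ref{YesCorrectness} applies), and the lower bound is two observations: the symmetry forces every node other than $b_{m+1}$ to share its history with its mirror image in every round, so only $b_{m+1}$ can ever be elected; and $b_m,b_{m+1},b_{m+2}$ sit deep inside the tag-$1$ block, so through local round $m{-}2$ all three either transmit or listen and hear silence, hence are indistinguishable. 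Your asymmetric cycle with three defects also works---the lower-bound half is the same locality argument, with the midpoint of the shortest arc as the bottleneck (its maximum twin radius is $\lfloor m/2\rfloor-1\approx n/6$, so $c=1/8$ is safe for large $m$)---but your feasibility argument must establish two nontrivial facts you only sketch: that \classifier's refinement on a $2$-regular span-$1$ configuration agrees step-by-step with ordinary colour refinement from the tags (this is true, but needs a short case analysis on how the ``same class and same tag'' exclusion and the $*$-merging behave when every node has degree $2$ and every class carries a single tag), and that colour refinement on a vertex-coloured cycle computes exactly the orbit partition (also true, via the universal-cover characterisation of $1$-WL). What your route buys is a more general template---any rigid span-$1$ cycle with sufficiently uniform local structure would do---whereas the paper's construction sidesteps both subtleties by making the unique leader visible by inspection.
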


\begin{proof}
Consider  the class of linear configurations $G_m$ with nodes $a_1,\dots, a_m,b_1,\dots ,b_{2m+1},c_m,\dots, c_1$, listed from left to right, for $m \geq 2$. For all $i \in \{1,\ldots,m\}$, the wakeup tags of nodes $a_i$ and $c_i$ are 0. For all $i \in \{1,\ldots,2m+1\}$, the wakeup tags of nodes $b_i$ are 1. By Lemma \ref{YesCorrectness}, all configurations $G_m$ are feasible: indeed, when \classifier is executed with input $G_m$, the central node $b_{m+1}$ will be in a one-element equivalence class after $m$ iterations.  Consider any leader election algorithm for configuration $G_m$. For any local round and any $i \in \{1,\ldots,m\}$, the history of nodes $a_i$ and $c_i$ is the same, and the history of nodes $b_i$ and $b_{2m+2-i}$ is the same, due to the symmetry of the configuration. Moreover, for any local round $t<m-1$, the history of nodes $b_m, b_{m+1}, b_{m+2}$ is the same: either all of them transmit or all of them listen and hear silence in each of these rounds. Hence, in all local rounds $t<m-1$ leader election is impossible. Since $m \in \Theta(n)$, this concludes the proof.
\end{proof}

In our remaining negative results, we will make use of the following class of configurations. For each $m \geq 1$, denote by $H_m$ the linear configuration of size 4 consisting of nodes $a,b,c,d$, listed from left to right, with the following wakeup tags: nodes $b$ and $c$ have tags 0, node $a$ has tag $m$ and node $d$ has tag $m+1$. The following lemma gives a lower bound on the number of rounds needed to solve leader election in such configurations.

\begin{lemma}\label{lemma neg}
Each configuration $H_m$ is feasible, and every leader election algorithm for $H_m$ takes time at least $m$.
\end{lemma}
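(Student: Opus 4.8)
The plan is to establish the two claims separately.

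\textbf{Feasibility.} Here I would appeal to Lemma~\ref{YesCorrectness}: it suffices to verify that \classifier outputs ``Yes'' on input $H_m$, and in fact a single iteration suffices. The configuration $H_m$ is the path $a-b-c-d$ with span $\sigma=m+1$, and in iteration $1$ of \classifier all four nodes still lie in the single class set up by \initgaug. Running \partitioner, the label $v_{\TAG}$ of a node $v$ is formed from the triples $(w_{\CLASS},\,\sigma+1+t_w-t_v,\,\cdot)$ over the neighbours $w$ of $v$ with $w_{\CLASS}\neq v_{\CLASS}$ or $t_w\neq t_v$. Exactly four directed edges contribute, and the offsets $\sigma+1+t_w-t_v$ are $2$ (for $a$ hearing $b$), $2m+2$ (for $b$ hearing $a$), $2m+3$ (for $c$ hearing $d$), and $1$ (for $d$ hearing $c$); the edge $bc$ contributes nothing in either direction because $t_b=t_c$ and $b_{\CLASS}=c_{\CLASS}$. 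For every $m\ge 1$ these offsets are pairwise distinct, so each $N_v$ is a single triple and the four labels are pairwise distinct; hence \PARTITION creates four singleton classes and \classifier outputs ``Yes''. By Lemma~\ref{YesCorrectness}, $H_m$ is feasible.

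\textbf{Lower bound.} Here I would use the graph automorphism $\phi$ of $H_m$ that exchanges $a\leftrightarrow d$ and $b\leftrightarrow c$. It fails to preserve the wakeup tags only through the single discrepancy $t_a=m$ versus $t_d=m+1$, and this discrepancy is invisible during global rounds $0,\ldots,m-1$: in that range neither $a$ nor $d$ can wake up spontaneously (both tags exceed $m-1$), so any wakeup of $a$ or $d$ in that range is forced. I would prove, by induction on the global round $r\in\{0,\ldots,m-1\}$, that in the execution of an arbitrary DRIP $D$ on $H_m$: (i) $a$ has been woken by the end of round $r$ if and only if $d$ has, and if so in the same round and by the same message, and likewise for the pair $b,c$ (both woken spontaneously in round $0$); (ii) every node performs the same action in round $r$ as its $\phi$-image; and (iii) $\mathcal H_{v,D}[i]=\mathcal H_{\phi(v),D}[i]$ for every local round $i$ of $v$ that corresponds to a global round $\le r$ (by (i) this local round also corresponds to that global round at $\phi(v)$). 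The inductive step is routine symmetry bookkeeping: by (iii) and (i) the two partners have equal histories through round $r-1$, hence take equal actions in round $r$, hence produce $\phi$-symmetric transmissions, hence $\phi$-symmetric forced wakeups and received messages in round $r$; the degree-$1$ endpoints $a,d$ have no collisions to worry about, and a node woken in round $r$ by a message only starts acting in round $r+1$, so it does not disturb the symmetry of round $r$.

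With this invariant in hand I would conclude as follows. Suppose toward a contradiction that $(D,f)$ is a dedicated leader election algorithm for $H_m$ in which every node terminates by global round $m-1$. Then every node is awake by round $m-1$; in particular $a$ and $d$ were each forced awake (they cannot wake spontaneously before rounds $m$ and $m+1$), and by (i) this occurred in the same global round, so their local clocks agree. Hence $\mathcal H_{a,D}[0\ldots done_{a,D}]=\mathcal H_{d,D}[0\ldots done_{d,D}]$ and $\mathcal H_{b,D}[0\ldots done_{b,D}]=\mathcal H_{c,D}[0\ldots done_{c,D}]$, so $f$ returns the same value on $a$ as on $d$ and the same value on $b$ as on $c$. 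Then the set of nodes with $f=1$ is invariant under $\phi$ and therefore has even cardinality, contradicting the requirement that exactly one node be elected. So not all nodes can terminate by round $m-1$, i.e., every dedicated leader election algorithm for $H_m$ takes time at least $m$.

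\textbf{Main obstacle.} The delicate point is making the inductive symmetry invariant fully rigorous: one must handle nodes forced awake early (whose local clocks are then shifted, but identically for the two partners), keep ``awake'' distinct from ``acting'', and account for already-terminated nodes, so that the biconditionals in (i)--(iii) genuinely hold round by round. Everything else --- the feasibility computation and the final parity argument --- is then immediate.
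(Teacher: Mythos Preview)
Your approach is the same as the paper's: both the feasibility check via \classifier and the lower bound via the reflection $\phi:a\leftrightarrow d,\ b\leftrightarrow c$. The feasibility computation is correct and merely more detailed than the paper's one-line appeal to iteration~1.

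For the lower bound, however, there is a genuine gap in how you set up the contradiction. You assume ``every node terminates by \emph{global} round $m-1$'', but this is \emph{not} the negation of ``time $\ge m$'' in the sense used throughout the paper (time is measured in local rounds, i.e., $\max_v done_v$). Under the correct hypothesis ``$done_v<m$ for all $v$'', node $a$ may be forced awake at some global round $t<m$ and then run for up to $m-1$ further local rounds, terminating at global round $t+done_a$ which can be as late as $2m-3$; your induction, which stops at global round $m-1$, does not cover those rounds, so you cannot yet conclude that $a$ and $d$ have equal histories at termination.

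The repair is easy and is exactly what the paper does. First argue (as you essentially do) that if $b$ and $c$ ever transmit before terminating, their first transmission is at some global round $t<m$, and this forces $a$ and $d$ awake simultaneously with the same wakeup message. From that moment on the wakeup tags are irrelevant (all four nodes are awake), so the $\phi$-symmetry invariant holds for \emph{every} global round $r\ge t$, not merely for $r\le m-1$; extend your induction accordingly. If instead $b$ and $c$ never transmit before terminating, then $a$ and $d$ wake spontaneously, never hear anything (their sole neighbours $b,c$ have already terminated), and all four histories are constant $(\emptyset)$; the symmetry again holds trivially. In either case $v$ and $\phi(v)$ have identical full histories at termination, and your parity argument finishes the proof.
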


\begin{proof}
By Lemma \ref{YesCorrectness}, all configurations $H_m$ are feasible: indeed, when \classifier is executed with input $H_m$, each of the four nodes will be in a one-element class after iteration 1. Suppose that there exists a leader election algorithm $\cal  A$ for configuration $H_m$ working in time less than $m$. In the execution of $\cal  A$, nodes $b$ and $c$ with tag 0 must send their first message before round $m$, otherwise all nodes would have the same history before round $m$ (every entry equal to $(\emptyset)$) and leader election could not be correctly achieved. Further, nodes $b$ and $c$ send their first message in the same round, as they wake up in the same global round and have the same history up to that round (every history entry equal to $(\emptyset)$). Suppose that nodes $b$ and $c$ send their first message in round $t<m$. Nodes $a$ and $d$ are woken up by these messages, and, from round $t$ onward, the histories of nodes $a$ and $d$ are the same. The histories of nodes $b$ and $c$ were the same up to round $t-1$ and will also be the same from round $t$ onward. This follows by induction on the round number. Hence algorithm  $\cal  A$ cannot correctly elect a leader.
\end{proof}

Lemma \ref{lemma neg} implies our second negative result, which is a $\Omega(\sigma)$-round lower bound on the complexity of leader election, even for some configurations of bounded size.

\begin{proposition}
There exists an infinite class of feasible configurations of size $n=4$, such that, for each configuration $G$ of this class, every dedicated leader election algorithm for $G$ takes time $\Omega(\sigma)$,
where $\sigma$ is the span of the configuration.
\end{proposition}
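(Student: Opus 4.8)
The plan is to take the class $\{H_m : m \geq 1\}$ introduced just before Lemma \ref{lemma neg} and simply verify that it satisfies all three requirements of the proposition. First, each $H_m$ has exactly the four nodes $a,b,c,d$ by construction, so the size requirement $n=4$ holds automatically. Second, the wakeup tags of $H_m$ are $0$ (for $b$ and $c$), $m$ (for $a$), and $m+1$ (for $d$); since the smallest tag is $0$ and the largest is $m+1$, the span of $H_m$ is $\sigma = m+1$. As $m$ ranges over the positive integers the spans $m+1$ are pairwise distinct, so $\{H_m : m \geq 1\}$ is genuinely an infinite class. Third, feasibility of each $H_m$ is already asserted in Lemma \ref{lemma neg} (via \classifier separating all four nodes into singleton classes after one iteration), so nothing new is needed.

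The remaining step is just to read off the lower bound. Lemma \ref{lemma neg} states that every leader election algorithm for $H_m$ requires time at least $m$. Since $\sigma = m+1$, we have $m = \sigma - 1 = \Omega(\sigma)$, so every dedicated leader election algorithm for $H_m$ takes time $\Omega(\sigma)$. This establishes the proposition with $\{H_m : m \geq 1\}$ as the witnessing class.

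I do not expect any genuine obstacle here: the proposition is an immediate repackaging of Lemma \ref{lemma neg}, whose proof already supplies the substantive content (the histories of $b$ and $c$ stay identical, and those of $a$ and $d$ stay identical once they are woken, through round $m$, by an easy induction on the round number). The only point requiring care is to identify the span of $H_m$ correctly as $m+1$ and not $m$, so that the $\Omega(\sigma)$ bound is phrased against the correct parameter; with $\sigma = m+1$ the bound $m$ is $\Theta(\sigma)$, so the conclusion is unaffected.
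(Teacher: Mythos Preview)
Your proposal is correct and follows exactly the paper's approach: the proposition is stated immediately after Lemma~\ref{lemma neg} and is presented there as a direct consequence of it, with the class $\{H_m : m \geq 1\}$ as the witnessing family. Your only addition is the explicit computation $\sigma = m+1$ and the observation that $m = \Omega(\sigma)$, which the paper leaves implicit.
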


We now consider the question whether there exists a universal distributed algorithm that elects a leader for all feasible configurations. Our next result shows that the answer is no. In fact, even knowing the size of the configuration cannot help.

\begin{proposition}
There is no universal distributed algorithm that elects a leader for all feasible configurations of size 4.
\end{proposition}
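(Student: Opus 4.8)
The plan is to exhibit two feasible configurations of size 4 on which no single distributed algorithm can succeed. The natural candidates are drawn from the family $H_m$ introduced just above: fix two distinct values, say $m_1 < m_2$, and consider $H_{m_1}$ and $H_{m_2}$. Both are feasible by Lemma \ref{lemma neg}. Suppose, toward a contradiction, that there is a single DRIP $D$ together with a decision function $f$ that elects a leader in every feasible 4-node configuration; in particular it must work for both $H_{m_1}$ and $H_{m_2}$. The key observation is that the early behaviour of the nodes with tag $0$ (the nodes $b$ and $c$ in the middle) is governed only by $D$ and by the fact that their histories begin with a run of $(\emptyset)$ entries — and crucially, since $D$ is a fixed algorithm run at anonymous nodes, the round in which $b$ and $c$ first transmit does not depend on which $H_m$ we are in, as long as nothing has perturbed their history yet.

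First I would argue that $b$ and $c$ must transmit their first message in some fixed round $t^\star$, the same for both configurations. Indeed, in each $H_m$, nodes $b$ and $c$ wake up spontaneously in global round $0$, and — exactly as in the proof of Lemma \ref{lemma neg} — until one of them transmits, every node's history consists entirely of $(\emptyset)$ entries, so no symmetry is broken and no node could be legitimately declared leader before that transmission. Hence $b$ and $c$ must eventually transmit; let $t^\star$ be the local (equivalently, since their tag is $0$, global) round in which they first do so under $D$. Because $D$ is deterministic and is fed the identical all-$(\emptyset)$ history prefix in $H_{m_1}$ and in $H_{m_2}$ up to that point, the value $t^\star$ is the same in both configurations, and it is a finite number determined by $D$ alone.

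Next I would choose $m_1$ and $m_2$ both strictly larger than $t^\star$: concretely, take any $m_1 > t^\star$ and $m_2 = m_1 + 1$ (or any $m_2 > m_1$). Then in both $H_{m_1}$ and $H_{m_2}$, at the moment $b$ and $c$ transmit in round $t^\star$, the end nodes $a$ and $d$ have not yet woken up spontaneously (their tags are $m \ge m_1 > t^\star$ and $m+1 > t^\star$), so they are forced awake by the message from $b$ and $c$ respectively in round $t^\star$. From here the argument of Lemma \ref{lemma neg} applies verbatim: from round $t^\star$ onward, $a$ and $d$ have identical histories (each woken by the same message $M$ from a single neighbour, then each hearing the same thing thereafter), and $b$ and $c$ continue to have identical histories. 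This symmetry persists forever by induction on the round number, so $D$ cannot break symmetry between $a$ and $d$, nor between $b$ and $c$, in either configuration. Therefore $f$ applied to the final histories cannot single out exactly one leader — a contradiction.

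The step I expect to require the most care is establishing that $t^\star$ is genuinely configuration-independent and finite: one must verify that, for a fixed anonymous algorithm, the history prefixes of the tag-$0$ nodes in $H_{m_1}$ and $H_{m_2}$ are literally identical up through round $t^\star$, which rests on the fact that no node transmits before $t^\star$ (so no forced wakeups occur, and the only events are spontaneous non-transmitting wakeups of $b,c$), and that $t^\star$ must be finite because otherwise leader election never completes. Once $t^\star$ is pinned down, choosing the two configurations with spans exceeding $t^\star$ and invoking the symmetry argument of Lemma \ref{lemma neg} is routine. (One should also note this argument shows knowing $n=4$ does not help, since $D$ is already allowed to depend on $n$.)
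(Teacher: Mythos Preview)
Your proof is correct and follows essentially the same approach as the paper's: fix the hypothetical universal algorithm, determine the first round $t^\star$ at which the tag-$0$ nodes transmit (finite, else all histories stay $(\emptyset)$ and no leader can be chosen), then pick $H_m$ with $m > t^\star$ and invoke the internal symmetry between $a,d$ and between $b,c$ exactly as in Lemma~\ref{lemma neg}. The only difference is cosmetic: your framing with two configurations $H_{m_1},H_{m_2}$ is superfluous, since the contradiction already arises from symmetry \emph{within} a single $H_m$ (the paper simply takes $H_{t+1}$), not from indistinguishability between two configurations.
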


\begin{proof}
Suppose that such a universal algorithm $\cal U$ exists. If no node ever sends a message then leader election is impossible.
Consider the configurations $H_m$ with $m \geq 1$. By Lemma \ref{lemma neg}, they are all feasible.  Suppose that $t$ is the first global round when nodes with tag 0 send a message. Both nodes with tag 0 will send the same first message, as they both have the same history up to round $t$ (every entry equal to $(\emptyset)$).
Consider configuration $H_{t+1}$. Nodes $a$ and $d$ are woken up  by the first message of nodes $b$ and $c$ respectively, and, for all rounds after wakeup, the histories of nodes $a$ and $d$ are the same. The histories of nodes $b$ and $c$ were the same up to round $t-1$ and will also be the same from round $t$ onward, due to symmetry. Hence $\cal U$ does not correctly elect a leader on configuration $H_{t+1}$, which is a contradiction.
\end{proof}

Finally, we consider the question whether feasibility of a configuration can be decided in a distributed way. Algorithm {\tt Classifier} is a decision algorithm for the property of feasibility, but it is centralized: the configuration is given to it as input and the algorithm correctly outputs the decision. (Of course, such a centralized algorithm can be simulated in a distributed way if nodes get the configuration as input). A hypothetical distributed decision algorithm would work as follows, for all configurations: all nodes of a configuration output ``yes'' if the configuration is feasible, and at least one node outputs ``no'' if the configuration is not feasible. Our next result shows that such a distributed counterpart of Algorithm {\tt Classifier} cannot exist.

\begin{proposition}
If nodes have no a priori knowledge, there is no distributed algorithm that decides if a configuration is feasible.
\end{proposition}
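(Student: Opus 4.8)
The plan is to derive a contradiction by exhibiting one feasible and one infeasible configuration on which the hypothetical algorithm is forced to behave identically. So suppose $(D,f)$ is such a distributed decision algorithm: $D$ a DRIP run with no a priori information, $f$ a decision function, such that on every feasible configuration all nodes output ``yes'' and on every infeasible configuration at least one node outputs ``no''. The key is to find a feasible configuration and an infeasible configuration on which $D$ induces \emph{the same multiset} of complete node histories; then $f$, being a function, is forced to produce the same multiset of answers on both, which is impossible.

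First I would pin down the behaviour of $D$ on a node that only ever hears silence. Since $D$ must terminate in every execution, and such a node occurs, e.g., in the execution of $D$ on the two‑node configuration with both tags $0$ (there, one checks by induction on the round that both nodes always perform the same action, so neither ever hears a message or a collision; their history is $((\emptyset),(\emptyset),\dots)$), there is a well‑defined least $\ell\ge 1$ with $D((\emptyset)^{\ell})=terminate$; call it $T$, and let $\mathbf{s}=\mathcal{H}_{v,D}[0\ldots done_{v,D}]$ be the complete history of such an all‑silent node — a single object determined by $D$ alone, independent of the configuration. Because the two‑node all‑$0$ configuration is infeasible and its two nodes have identical histories, $f(\mathbf{s})$ must be ``no''. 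Next I would introduce, for $m\ge 1$, the infeasible $4$‑node path $I_m$ on nodes $a,b,c,d$ (in this order) with $t_a=t_d=m$ and $t_b=t_c=0$: the map $a\leftrightarrow d$, $b\leftrightarrow c$ is a tag‑preserving automorphism with no fixed vertex, so in any execution of any DRIP nodes $a,d$ have identical histories and $b,c$ have identical histories (an easy induction, using that forced wakeups then also happen symmetrically), whence $f$ can output $1$ at exactly one node, so $I_m$ is infeasible. Its feasible companion is $H_m$ from Lemma~\ref{lemma neg}, which differs from $I_m$ only in that $t_d=m+1$.

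The heart of the argument is the claim that, with $m:=T$, the executions of $D$ on $H_T$ and on $I_T$ induce the same multiset of complete node histories $\{\mathcal{H}_{v,D}[0\ldots done_{v,D}]\}$. In both configurations $b$ and $c$ wake spontaneously in global round $0$ and, while no other node is awake, have identical histories and act according to $D$ on silent prefixes. I would split into two cases. If $D$ never outputs $transmit$ on a silent prefix of length $<T$, then $b,c$ never transmit and terminate in global round $T$ with history $\mathbf{s}$, having woken neither $a$ nor $d$; then $a$ and $d$ wake spontaneously, and since their unique neighbour ($b$, resp.\ $c$) has already terminated, they hear only silence and also acquire history $\mathbf{s}$ — so all four nodes have history $\mathbf{s}$ in \emph{both} configurations. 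Otherwise let $t^{\ast}\le T-1$ be the least local round in which a silent‑history node transmits; since $t_a,t_d\ge T>t^{\ast}$, nodes $a,d$ are asleep throughout global rounds $0,\ldots,t^{\ast}-1$ in both configurations, so $b,c$ still have silent histories of length $t^{\ast}+1$ and both transmit the same message $\mu$ in global round $t^{\ast}$, force‑waking $a$ (via $b$) and $d$ (via $c$) in that round with initial history $(\mu)$ — identically in $H_T$ and $I_T$. From the end of global round $t^{\ast}$ on, both configurations are in the same state (the same $4$‑node path, with $b,c$ carrying a common silent history of length $t^{\ast}+1$ and $a,d$ each carrying the one‑entry history $(\mu)$), so, $D$ being a deterministic function of history, the two executions proceed identically and the four complete histories agree. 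In either case the two multisets coincide.

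Finally, since $f$ is a function it sends equal multisets of histories to equal multisets of outputs, so $\{f(\mathcal{H}_{v,D}[0\ldots done_{v,D}]) : v\in H_T\}=\{f(\mathcal{H}_{v,D}[0\ldots done_{v,D}]) : v\in I_T\}$; but $H_T$ is feasible (Lemma~\ref{lemma neg}), forcing all entries of the first multiset to be ``yes'', while $I_T$ is infeasible, forcing some entry of the second to be ``no'' — a contradiction. I expect the main obstacle to be the history‑coincidence claim, and in particular the second case, where $b$ and $c$ genuinely transmit: one must verify that their transmissions reach exactly the asleep nodes $a,d$ (not the already‑terminated siblings), and that the single difference $t_d=m$ vs.\ $t_d=m+1$ becomes irrelevant once $d$ has been force‑woken, so that the two configurations really evolve in lockstep from round $t^{\ast}$ onward. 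I would also flag precisely where ``no a priori knowledge'' is used: it is exactly what permits running the \emph{same} $D$ on the two configurations $H_T$ and $I_T$, which have different spans.
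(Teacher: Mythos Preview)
Your proof is correct and follows essentially the same approach as the paper: exhibit the feasible path $H_m$ and the infeasible symmetric path $I_m$ (the paper's $S_m$), and show that for a suitable $m$ determined by when the algorithm first transmits on a silent history, the executions of $D$ on $H_m$ and $I_m$ produce identical node histories, forcing identical outputs. Your treatment is in fact more careful than the paper's—you explicitly handle the case where $D$ never transmits on a silent prefix (the paper dismisses this in one unjustified sentence), and you spell out why the differing tag $t_d$ becomes irrelevant once $d$ is force-woken—but the underlying idea is the same.
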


\begin{proof}
Suppose that such a distributed decision algorithm $\cal D$ exists. We define a sequence of linear configurations $S_m$, for $m \geq 1$, as follows.
The nodes of $S_m$ are $a,b,c,d$, listed from left to right, with the following wakeup tags:
nodes $b$ and $c$ have tag 0, and nodes $a$ and $d$ have tag $m$. By Lemma \ref{NoCorrectness}, the configurations $S_m$ are not feasible: indeed, when \classifier is executed on $S_m$ for any $m \geq 1$, the partition of nodes into equivalence classes after iteration 2 will be the same as after iteration 1, and will consists of two classes with two elements each, so \classifier will output ``No". However, recall from Lemma  \ref{lemma neg} that the configurations $H_m$ are feasible for all $m \geq 1$.

Algorithm $\cal D$ must instruct the nodes to send some message, otherwise no correct decision can be made. Suppose that $t$ is the first round when nodes with tag 0 send a message.
Consider configurations $H_{t+1}$ and $S_{t+1}$. The history of each of the nodes  $a,b,c,d$ is the same in both these configurations, for all rounds. Hence, each of the nodes executing algorithm $\cal D$ must make the same decision when $\cal D$ terminates in configurations $H_{t+1}$ and $S_{t+1}$. This is a contradiction, as one of these configurations is feasible and the other one is not.
\end{proof}

\section{Conclusion}

We characterized the configurations for which leader election is possible for anonymous radio networks, which is a particularly difficult scenario for this task.
The characterization is done by a centralized decision algorithm accompanied by a dedicated distributed leader election algorithm for each feasible configuration.
We proved the nonexistence of a distributed algorithm deciding whether a configuration is feasible, and the nonexistence of a universal distributed leader election algorithm working for all feasible
configurations. Thus, in terms of feasibility, the problem of leader election in anonymous radio networks is completely solved.

As far as time complexity is concerned, two problems remain open in the context of this work. The first is the complexity of the centralized decision algorithm. 
Can the complexity $O(n^3 \Delta)$ of Algorithm {\tt Classifier} be improved?
What is the optimal time complexity of a centralized decision algorithm for this task? As for distributed dedicated leader election algorithms, our algorithm using the canonical DRIP for feasible configurations  works in time $O(n^2\sigma)$ and we proved the lower bound 
$\Omega(n+\sigma)$ on the complexity of dedicated leader election for some classes of feasible configurations. Hence a natural open problem is whether there exists a  $O(n+\sigma)$  dedicated leader election algorithm for each feasible configuration.


\end{document}